\providecommand{\qedhere}{} 
\providecommand{\Appendix}{}
\renewcommand{\Appendix}[2][?]{%
	\refstepcounter{section}%
	\vspace{\parskip}%
	{\flushright\large\bfseries\appendixname\ \thesection: #1}%
	\vspace{\baselineskip}%
}
\renewcommand{\appendix}{%
	\renewcommand{\section}{\secdef\Appendix\Appendix}%
	\renewcommand{\thesection}{\Alph{section}}%
	\setcounter{section}{0}%
}
\renewcommand{\xhdr}[1]{\subsection{#1}}
\newcommand{\mathtext}[1]{\text{#1}}
\newcommand{\indicator}[1]{\mathbf{1}_{\{ #1 \}}}
\newcommand{\E}{\mathbb{E}}             
\providecommand{\half}{\ensuremath{\frac{1}{2}}\xspace}
\providecommand{\SET}[1]{\ensuremath{\{ #1 \}}\xspace}
\providecommand{\Set}[2]{\ensuremath{\SET{#1 \mid #2}}\xspace}
\providecommand{\PROB}{\ensuremath{{\rm Prob}}\xspace}
\providecommand{\Prob}[2][]{\ensuremath{%
\ifthenelse{\equal{#1}{}}{\PROB\left[\,#2\,\right]}{\PROB_{#1}\left[\,#2\,\right]}}\xspace}
\providecommand{\ProbC}[3][]{\Prob[#1]{#2\;|\;#3}}
\providecommand{\Expect}[2][]{\ensuremath{%
\ifthenelse{\equal{#1}{}}{\E}{\E_{#1}}%
\left[#2\right]}\xspace}
\providecommand{\ExpectC}[3][]{\Expect[#1]{#2\;|\;#3}}
\providecommand{\Event}[2][]{\ensuremath{\ifthenelse{\equal{#1}{}}{%
{\cal #2}}{{\cal #2}_{{#1}}}}\xspace}
\providecommand{\Kth}[1]{\ensuremath{{#1}^{\rm th}}}
\newcommand{\D}[1][]{\ensuremath{%
\ifthenelse{\equal{#1}{}}{\mathcal{D}}{\mathcal{D}_{#1}}}\xspace}  
\newcommand{\DP}[1][]{\ensuremath{%
\ifthenelse{\equal{#1}{}}{\D'}{\D'_{#1}}}\xspace}  
\newcommand{\Dnorm}[1][]{\ensuremath{%
\ifthenelse{\equal{#1}{}}{\mathcal{N}}{\mathcal{N}_{#1}}}\xspace} 
\newcommand{\DnormS}[1][]{\ensuremath{\Dnorm[#1]^*}\xspace}
\newcommand{\DnormP}[1][]{\ensuremath{\Dnorm[#1]'}\xspace}
\newcommand{\Dsp}[1][]{\ensuremath{%
\ifthenelse{\equal{#1}{}}{\D^{\mathtt{sp}}}{\D^{\mathtt{sp}}_{#1}}}\xspace} 
\newcommand{\Dspnorm}[1][]{\ensuremath{%
\ifthenelse{\equal{#1}{}}{\Dnorm^{\mathtt{sp}}}{\Dnorm^{\mathtt{sp}}_{#1}}}\xspace}
\newcommand{\Ball}[3][]{\ensuremath{%
\ifthenelse{\equal{#1}{}}{B(#2,#3)}{B_{#1}(#2,#3)}}\xspace}
\newcommand{\NBall}[2][]{\ensuremath{%
\ifthenelse{\equal{#1}{}}{\tilde{B}_{#2}^{*}}{\tilde{B}_{#2}^{#1}}}\xspace}
\newcommand{\NDBall}[1]{\ensuremath{\tilde{B}_{#1}}\xspace}
\newcommand{\NRBall}[3][]{\ensuremath{%
\ifthenelse{\equal{#1}{}}{\tilde{B}_{#2}(#3)}{\tilde{B}_{#2}(#3; #1)}}\xspace}
\newcommand{\SW}{{\mathtext{sg}}}
\newcommand{\ESW}[1][]{\ensuremath{%
\ifthenelse{\equal{#1}{}}{E_\SW}{E_\SW^{(#1)}}}\xspace}
\newcommand{\ESWP}[1][]{\ensuremath{%
\ifthenelse{\equal{#1}{}}{\hat{E}_\SW}{\hat{E}_\SW^{(#1)}}}\xspace}
\newcommand{\kSW}[1][]{\ensuremath{%
\ifthenelse{\equal{#1}{}}{k_\SW}{k_\SW^{(#1)}}}\xspace} 
\newcommand{\CSW}[1][]{\ensuremath{%
\ifthenelse{\equal{#1}{}}{C_\SW}{C_\SW^{(#1)}}}\xspace} %
\newcommand{\fSW}[1][]{\ensuremath{%
\ifthenelse{\equal{#1}{}}{f_\SW}{f_\SW^{(#1)}}}\xspace} 
\newcommand{\Gr}[1][]{\ensuremath{\ifthenelse{\equal{#1}{}}{\mathcal{G}}{\mathcal{G}_{#1}}}\xspace} 
\newcommand{\Eloc}{\ensuremath{E_{\mathtext{loc}}}\xspace}    
\newcommand{\ElocP}{\ensuremath{E'_{\mathtext{loc}}}\xspace}
\newcommand{\Ecur}{\ensuremath{E_\mathtext{cur}}\xspace}
\newcommand{\Kloc}{\ensuremath{k_{\mathtext{loc}}}\xspace}    
\newcommand{\NeighB}[2]{\ensuremath{\Gamma(#1,#2)}\xspace} 
\newcommand{\prunedR}{\ensuremath{r_{\mathtext{pru}}}\xspace}
\newcommand{\prunedE}{\ensuremath{E_{\mathtext{pru}}}\xspace}    
\newcommand{\localR}{\ensuremath{r_{\mathtext{loc}}}\xspace}    
\newcommand{\constDR}[1][]{\ensuremath{%
\ifthenelse{\equal{#1}{}}{r_{\mathtt{EDP}}}{r_{\mathtt{EDP}}(#1)}}\xspace} 
\newcommand{\amoebaSUB}{\mathtext{amb}}
\newcommand{\amoebaE}[1][]{\ensuremath{%
\ifthenelse{\equal{#1}{}}{E_\amoebaSUB}{E_\amoebaSUB^{(#1)}}}\xspace}  
\newcommand{\amoebaM}{\ensuremath{M_\amoebaSUB}\xspace}  
\newcommand{\amoebaN}{\ensuremath{N_\amoebaSUB}\xspace}  
\newcommand{\amoebaR}{\ensuremath{r_\amoebaSUB}\xspace}  
\newcommand{\amoebaC}{\ensuremath{\gamma_\amoebaSUB}\xspace}  
\newcommand{\amoebaTest}{Amoeba Test\xspace}
\newcommand{\EDPtest}[2]{\ensuremath{T}\xspace}
\newcommand{\EDPfix}[2]{\ensuremath{T^*}\xspace}
\newcommand{\ItAlg}[2]{$(#1,#2)$-EDP Pruning Algorithm\xspace}
\newcommand{\Mtwo}{\ensuremath{M_{\Lambda}}\xspace}         
\newcommand{\ShrinkR}[1]{\ensuremath{\hat{r}_{#1}}\xspace}
\newcommand{\ShrinkT}[1]{\ensuremath{r_{#1}}\xspace}
\newcommand{\AddDist}[1]{\ensuremath{a(#1)}\xspace}
\newcommand{\NumEdges}{\ensuremath{\mathtt{\#edges}}}
\newcommand{\NEdges}[3][]{\ensuremath{%
\ifthenelse{\equal{#1}{}}{\tilde{M}_{#2,#3}}{\tilde{M}^{(#1)}_{#2,#3}}
}\xspace}
\newcommand{\PRUNEDG}[1][]{\ensuremath{%
\ifthenelse{\equal{#1}{}}{T}{T_{#1}}
}\xspace}
\newcommand{\PrunedG}[2][]{\ensuremath{\PRUNEDG[#1](#2)}\xspace}
\newcommand{\EE}{\mathcal{E}}
\newcommand{\NET}{\ensuremath{\mathcal{N}}\xspace}
\newcommand{\PL}{\ensuremath{h}\xspace}
\newcommand{\PLSET}{\ensuremath{H}\xspace}
\newcommand{\PN}{\ensuremath{b}\xspace}
\newcommand{\Binomial}[2]{\mathtt{Bin}_{#1,#2}}
\newcommand{\BDF}[3]{\ensuremath{\Prob{\Binomial{#1}{#2} \geq #3}}\xspace}
\newcommand{\distortion}{\ensuremath{\Delta}\xspace}
\DeclareMathOperator{\polylog}{polylog}
\newcommand{\SimpleTest}{Two-Hop Test\xspace}
\newcommand{\Amoeba}{Amoeba\xspace}
\newcommand{\TwoBallTest}{Two-Ball Algorithm\xspace}
\newcommand{\extTwoBall}{Extended Two-Ball Algorithm\xspace}
\newcommand{\recTwoBall}{Recursive Two-Ball Algorithm\xspace}
\newcommand{\NUMCAT}{\ensuremath{K}\xspace} 
\newcommand{\DIM}[1][]{\ensuremath{%
\ifthenelse{\equal{#1}{}}{d}{d_{#1}}}\xspace}   
\newcommand{\DIMCONST}{\ensuremath{c_{\DIM}}\xspace}
\newcommand{\densityK}{\ensuremath{C_{\mathtext{UD}}}\xspace}
\newcommand{\CPD}{\ensuremath{C_{\mathtext{PD}}}\xspace}
\newcommand{\CONTR}{\ensuremath{\sigma}\xspace}
\newcommand{\EXPAN}{\ensuremath{\delta}\xspace}
\newcommand{\LCD}{Local Category-Disjointness\xspace}
\newcommand{\GCDscale}{\ensuremath{R}\xspace}
\newcommand{\globalCD}[1][]{%
\ifthenelse{\equal{#1}{}}{Scale-$\infty$ Category-Disjointness}{%
Scale-$#1$ Category-Disjointness}\xspace}
\newcommand{\pairs}{\mathtt{\#pairs}}
\newcommand{\diam}{\text{diam}}
\newcommand{\singleSG}{single-category social graph\xspace}
\newcommand{\multiSG}{multi-category social graph\xspace}
\begin{document}


\title{Low-distortion Inference of Latent Similarities\\
from a Multiplex Social Network%
\footnote{An extended abstract of this work has appeared in \emph{ACM-SIAM Conference on Discrete Algorithms (SODA)}, 2013. }}


\author{Ittai Abraham
\thanks{Microsoft Research, Mountain View CA, USA.
Email: {\tt ittaia@microsoft.com}.}
\and
Shiri Chechik
\thanks{Microsoft Research, Mountain View CA, USA.
Email: {\tt schechik@microsoft.com}. \newline
Research done in part while S. Chechik was a graduate student at
Dept.~of Computer Science, Weizmann Institute of Science, Rehovot, Israel.}
\and David Kempe
\thanks{Dept.~of Computer Science,
University of Southern California, Los Angeles CA, USA.
Email: {\tt dkempe@usc.edu}.
Research done in part while visiting Microsoft Research.
Work supported in part by grant NSF CAREER 0545855 and an Okawa
Foundation Grant.}
\and Aleksandrs Slivkins
\thanks{Microsoft Research, New York NY, USA.
Email: {\tt slivkins@microsoft.com}.}
}

\maketitle

\begin{abstract}
Much of social network analysis is --- implicitly or explicitly ---
predicated on the assumption that individuals tend to be more
similar to their friends than to strangers. Thus, an observed
social network provides a noisy signal about the latent underlying
``social space:'' the way in which individuals are similar or
dissimilar. Many research questions frequently addressed via
social network analysis are in reality questions about this social
space, raising the question of inverting the process: Given a
social network, how accurately can we reconstruct the social
structure of similarities and dissimilarities?

We begin to address this problem formally. Observed social
networks are usually multiplex, in the sense that they reflect
(dis)similarities in several different ``categories,'' such as
geographical proximity, kinship, or similarity of
professions/hobbies. We assume that each such category is
characterized by a latent metric capturing (dis)similarities in
this category. Each category gives rise to a separate social
network: a random graph parameterized by this metric. For a
concrete model, we consider Kleinberg's small world model and some
variations thereof. The observed social network is the unlabeled
union of these graphs, i.e., the presence or absence of edges can
be observed, but not their origins. Our main result is an
efficient algorithm which reconstructs each metric with provably low
distortion.

\end{abstract}

\begin{keywords}
Social network analysis, multiplex social networks, social distance, small world networks, metric space.
\end{keywords}

\begin{AMS}
91D30, 05C82, 05C85, 68W40, 68Q87.
\end{AMS}

\pagestyle{myheadings}
\thispagestyle{plain}
\markboth{Abraham, Chechik, Kempe and Slivkins}{Inferring Latent Similarities from a Social Network}

\section{Introduction} \label{Intro}
Much of social network analysis is, implicitly or explicitly,
predicated on the assumption that people tend
to be more similar to their friends than to strangers.
While many tasks --- such as analyzing power and centrality,
trading and exchange, or understanding and influencing the diffusion
of viruses or information --- rely crucially on the precise network
structure, many others --- such as link prediction, identification of
communities, or marketing to friends of past buyers --- use network
structure as a noisy signal about an underlying social similarity
space.
To illustrate this insight differently, consider altering
a social network data set by removing links between ``dissimilar''
pairs of individuals, and inserting instead links between ``similar''
(but previously unconnected) pairs. If this change makes the analysis
task easier, rather than impossible, then the analysis task is really
about the ``social structure'' --- the latent similarities and
dissimilarities between individuals ---  rather than about the actual
network structure.

Given the abundance of important problems naturally phrased in terms
of social structure (discussed in more detail below), it is a
natural goal to explicitly reconstruct social structures from a given
social network.
Knowing the social structure may also be of independent interest, as
it sheds light on the forces governing social link formation.

The task of inferring social structure in this sense is made
non-trivial by the following two obstacles.
First, despite a general tendency for friends to be more similar
than strangers, many friends are still sufficiently different from
each other to look essentially random.
Second, and perhaps more fundamentally, social networks are \emph{multiplex}
\cite{fienberg:meyer:wasserman,minor:multiplexity,szell:lambiotte:thurner}:
they tend to be the union of multiple often independent relations
among the same actors.
For instance, friendships could result from physical proximity, similarity of
occupation, kinship, similarities of hobbies, etc.
If individuals are very similar in even one such attribute, they are more
likely to be connected.

The main contribution of this paper is a near-linear time algorithm
for reconstructing the latent social structure with provably low
distortion.
The model explicitly produces a union of graphs, one for each
category, and an important feature of the algorithm is that it
separates the different graphs from each other.
We also provide two extensions which, respectively, further improve
the distortion, and partially address the issue of data scarcity
(i.e., very small node degrees).
The algorithms in this paper are based on, and significant extensions of,
a natural idea that is widely used in practice:
nodes are likely to be close if they share many common neighbors.

\xhdr{An overview of the model}
We posit a latent space model (described in detail in
Section \ref{Prelims}) for the generation of social networks
akin to models widely used in the mathematical sociology, statistics,
and computer science communities
\cite{clauset:moore:newman:link-prediction,fraigniaud:lebhar:lotker,handcock:raftery:tantrum,hoff:raftery:handcock,kermarrec:leroy:tredan,small-world,krivitsky:handcock:raftery:hoff,raftery:niu:hoff:yeung,sarkar:chakrabarti:moore,sarkar:moore:dynamic,schweinberger:snijders:settings}
(see also the survey \cite[pages 15--21]{snijders:statistical-models-survey}).

The model is based on two widely accepted tenets about social networks
(e.g., \cite{blau:inequality,mcfarland:brown}).
First, people are more likely to have ties with those who are
similar to them, but also have many ties to others who are
dissimilar.\footnote{The model is agnostic about whether this
similarity is caused more by \emph{homophily}
\cite{lazarsfeld:merton,mcpherson:smith-lovin:cook} (the tendency to
form ties with those who are similar)
or by \emph{social influence}
\cite{marsden:friedkin:network-studies,rogers} (the tendency to
\emph{become} similar to one's associates).}
Second, multiple social dimensions (such as geography, occupation,
kinship, hobbies, etc.) can independently lead to
interactions and the formation of ties.

We call the social dimensions along which people can be (dis)similar
\emph{(social) categories}, to avoid confusion with the geometric dimensions of
individual metric spaces.
Each category is
given by a metric space $\D[i], i=1, \ldots, \NUMCAT$;
together, the $\D[i]$ define the \emph{social distances} between the
individuals.
Each of the $n$ individuals occupies a point in each of the categories.
For concreteness, and in accordance with much of the preceding literature,
we assume that each category is a Euclidean space of known dimensionality
\cite{handcock:raftery:tantrum,hoff:raftery:handcock,small-world,krivitsky:handcock:raftery:hoff,raftery:niu:hoff:yeung,sarkar:chakrabarti:moore},
and that the density of the points corresponding to
individuals is nearly uniform
\cite{hoff:raftery:handcock,small-world,sarkar:chakrabarti:moore}.
Furthermore, we assume that the categories have small local correlation.
The ``local correlation'' of two categories is the maximal
overlap between any two small balls in those categories
(see Equation (\ref{eq:cat-disj}) in Section~\ref{Prelims}).

Each category independently gives rise to a social network \Gr[i],
modeled as a random graph whose edge distribution is parameterized by
the corresponding metric space $\D[i]$.
Specifically, we use a slight variation of Kleinberg's small-world
model \cite{small-world}, in which edge probabilities decrease
  polynomially in $\D[i](u,v)$.
For our purposes, the key feature of the model is that the
probability of shorter links is much higher, but long-range links also
appear with a significant probability; this captures the first tenet.
The algorithm observes the \emph{union} $\Gr = \bigcup_i \Gr[i]$
of the individual networks \Gr[i] (on the same node set), but does not
learn which \emph{particular} network(s) \Gr[i] an edge belonged to.
This captures the second tenet; only the existence, but not the
social ``origins,'' of ties can be observed.\footnote{%
Our model does not include any information such as demographics,
location, wall posts, or communications which would frequently be
available to social networking sites 
\cite{backstrom:sun:marlow}.
Our goal here is to understand at a fundamental level how much
information on social structures can be inferred algorithmically from
the observed social network alone.}
\emph{The algorithm's goal is to use \Gr to reconstruct the individual metrics
\D[i] with small distortion, with high probability (over the
random network generation process).}

Importantly, social similarity spaces in general tend not to be
metrics (see, e.g., \cite{butts:predictability}), in the sense that
the triangle inequality fails to hold.
The main reason is the presence of multiple social
categories. For example, one's co-worker and one's relative could be
very dissimilar to one another, even though the individual is similar
to both.
The inclusion of a union or minimum in the model is crucial to capture
this.

\xhdr{Algorithms and results}
Our main contribution is a near-linear time algorithm,
called the \emph{\Amoeba algorithm},
which infers all individual categories with provably low distortion,
with high probability.
The following theorem captures the result slightly informally.

\begin{theorem}[informal] \label{thm:intro-main}
If the \NUMCAT metric spaces \D[i] are locally sufficiently different,
and the average node degrees are at least $\Omega(\NUMCAT^3 \log^2 n)$,
then with high probability, the \Amoeba algorithm, in near-linear time,
reconstructs metrics \DP[i] such that \DP[i] approximates \D[i] with
constant multiplicative distortion (and at most polylogarithmic additive error).
\end{theorem}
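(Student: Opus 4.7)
The plan is to reconstruct each metric $\D[i]$ node by node: for every vertex $u$ and every category $i$, I grow an \emph{amoeba} --- a cluster of nodes close to $u$ in $\D[i]$ together with estimates of their pairwise distances --- and then stitch these local reconstructions into a global metric $\DP[i]$. The key primitive is an enhanced common-neighbor test that both detects closeness in \emph{some} category and identifies \emph{which} category causes that closeness. A second-moment computation in Kleinberg's model shows that for a pair $(u,v)$ at $\D[i]$-distance $r$, the expected number of neighbors they share through $\Gr[i]$ alone is a decreasing function of $r$, of order roughly $M^2 / (r^{\DIM}\polylog(n))$ where $M$ is the average degree. The hypothesis $M = \Omega(\NUMCAT^3 \log^2 n)$ places this expectation well above the fluctuation scale, so a Chernoff bound plus a union bound give uniform concentration over all pairs; symmetrically, pairs simultaneously far in every category have a vanishing common-neighbor ``noise floor'' $O(M^2/n)$.

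The heart of the argument is category separation. A candidate set $S$ of nodes sharing many neighbors with $u$ may pool contributions from several categories at once. Here I invoke \LCD: small balls in different categories overlap in at most a sublinear fraction of their volume. Consequently, two members of $S$ both close to $u$ in the same $\D[i]$ will themselves share many neighbors through $\Gr[i]$, while two members close to $u$ in \emph{different} categories share only the common neighbors permitted by the (small) \LCD overlap. Starting from a seed $v$ chosen by a two-hop / two-ball primitive that isolates a single category at a time, I would add a new node $w$ to the amoeba of $(u,i)$ only when $w$ passes the common-neighbor test against a large majority of nodes already in the amoeba. A union bound over amoeba members then shows that once the seed commits the amoeba to category $i$, the growth stays inside category $i$ with high probability.

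Once the amoeba for $(u,i)$ has stabilized, pairwise $\D[i]$-distances inside it can be estimated by inverting the expectation function from the first step; this gives a constant-factor multiplicative estimate, up to a polylogarithmic additive term inherited from the $\log n$ normalization in Kleinberg's model. Two amoebas from different seeds whose overlaps are nontrivial can have their local distance scales aligned by a rigidity argument, and a shortest-path composition over the overlap graph then defines $\DP[i]$ on all of $V$. The running time stays near-linear because each amoeba has size $\tilde{O}(M)$, each pairwise test inspects only $\tilde{O}(M)$ neighbors, and there are only $\NUMCAT$ amoebas per node.

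The main obstacle will be category separation at the boundary, where $\D[i]$-balls and $\D[j]$-balls near $u$ have nontrivial (though \LCD-bounded) overlap and both the seed selection and the amoeba growth can in principle be corrupted. The degree bound $\Omega(\NUMCAT^3 \log^2 n)$ should be read as exactly what is needed to pick test thresholds that separate in-category signal from cross-category noise with enough margin for a union bound over all $O(n\NUMCAT^2)$ candidate (pair, category) tests, even after accounting for branching in the amoeba growth.
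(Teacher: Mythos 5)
Your high-level architecture --- a common-neighbor pruning test, an iterative ``amoeba'' growth with an \LCD-based cross-category contamination bound, and a shortest-path composition to get a global metric --- is the same skeleton as the paper's Amoeba algorithm, and the threshold calculations you sketch (signal $\Theta(M^2/(r^{\DIM}\polylog n))$ against noise $O(M^2/n)$, with $\Omega(\NUMCAT^3\log^2 n)$ giving union-bound headroom) are the right ones. Where you diverge is the decomposition: you grow $\Theta(n\NUMCAT)$ small amoebas, one per (node, category) pair, and then stitch; the paper grows $\NUMCAT$ global amoebas, one per category, each of which eventually includes every node. The stitching-by-rigidity step you propose is not needed in the paper's version because each $\amoebaE[i]$ is already a single spanner whose scaled shortest-path metric is the output. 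Your per-node version could in principle be made to work, but it is strictly harder to analyze because every local amoeba needs its own category commitment, and overlapping amoebas of the ``same'' category around adjacent nodes must be proven to agree.

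The genuine gap is exactly at the seed-commitment step that you leave as a black box: ``a two-hop / two-ball primitive that isolates a single category at a time.'' This is the crux of the whole multi-category problem and it does not follow from anything you have written. The pruned graph records only that a pair is close in \emph{some} category, and a naive neighborhood of $u$ is a union of $\NUMCAT$ overlapping balls; nothing in the common-neighbor statistics alone tells you which neighbors of $u$ are category-$i$ neighbors before you already have a category-$i$ structure to test against. The paper breaks this circularity with a concrete combinatorial lemma (Lemma~\ref{lm:clique-diam}): under \LCD, \emph{any} sufficiently large clique in the pruned graph has diameter $O(\prunedR)$ in some single category. That lemma, plus a mechanism to force the seed clique to be ``spread out'' in all previously recovered categories (Lemma~\ref{lm:clique-exists}), lets the algorithm discover categories one by one without ever needing to know in advance which neighbors of $u$ belong to which category. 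Your proposal has no analogue of this step, and without it the claim ``once the seed commits the amoeba to category $i$, the growth stays inside category $i$'' is circular: the growth test enforces consistency with a seed that was never shown to be pure. Until you supply a primitive with the force of Lemma~\ref{lm:clique-diam}, the category-separation part of your argument does not close.

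A secondary (minor) mismatch: you estimate within-amoeba distances by inverting the expected common-neighbor count. For the constant-distortion claim of Theorem~\ref{thm:intro-main}, this is unnecessary --- and adds complications --- because within a single amoeba all distances are $O(\polylog n)$ anyway, so the polylogarithmic additive error swallows them; the multiplicative guarantee comes entirely from treating the amoeba edge set as a spanner and taking shortest paths (Lemma~\ref{lem:shortest-paths}). Expectation inversion is the idea behind the paper's \emph{refined} Two-Ball post-processing, which is aimed at pushing the distortion below a constant, not at the basic result you were asked to prove.
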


That this approximate reconstruction should be possible at all
--- regardless of the running time --- is somewhat surprising.
One might think a priori that after combining two social networks,
there would simply be no way to tease them apart.

In other words, a priori, the challenge appears to be
information-theoretical (does the network contain enough information
for distance reconstruction with any provable guarantees?)
as much as computational.
We also remark that even the single-category version was raised
by Kleinberg \cite{kleinberg:decentralized-search} as an open
question; we answer the reconstruction question in the positive even
for multiple categories.


The \Amoeba algorithm, we well as  all other
algorithms in this paper, is broadly based on a heuristic
widely used in practice (e.g., in Facebook, or see \cite{adamic:adar:friends,liben-nowell:kleinberg:link-prediction,sarkar:chakrabarti:moore,schwartz:wood}):
edges $(u,v)$ are more likely to be between friends in a category if
they are ``supported'' by many common neighbors of $u$ and $v$ in that
category.
However, to deal with multiple categories, low node degrees, or to
sharpen the distance estimates, the basic idea of counting common
neighbors needs to be extended significantly.

The \Amoeba algorithm, presented and analyzed in detail in Section
\ref{sec:amoeba}, consists of two stages.
In a first stage, individual edges are pruned if they do not
have enough common neighbors, a direct implementation of the common neighbors
heuristic.\footnote{Sarkar et al.~\cite{sarkar:chakrabarti:moore} showed
that under a model similar to ours
(but using edge probabilities that decrease exponentially with distance),
counting common neighbors leads to an accurate distance estimate
for a single-category social network.}
In the second stage, which we call \emph{the \Amoeba stage},
basic estimates of the individual categories are constructed one by one.
Each iteration starts with a
polylog-sized clique in the graph computed by the first stage,
which is then expanded one edge at a time:
an edge $(u,v)$ is added to a category only when enough of $u$'s
neighbors lie in a small ball around $v$ according to the current
estimate of the category.
The basic idea is that any sufficiently large clique must be
sufficiently close in one category. The clique then bootstraps further
iterations, in that a node $u$ with many edges to a small ball around
$v$ must itself be close to $v$.
While this intuition is straightforward, each iteration loses
accuracy, so it takes a delicate proof to show that
this refined version of the common neighbors heuristic guarantees
low distortion.

We improve the main result in the following two directions.
The first direction (Sections \ref{sec:additive-singleCat} and
\ref{sec:additive-mult}) focuses on improving the distortion using
long-range links, which are now treated as an additional data source
rather than an obstacle to be pruned. We improve the distortion from a
multiplicative constant to a factor $1+o(1)$,
using a post-processing phase (run after the Amoeba
algorithm) which we call \emph{\TwoBallTest.}
This is a variation of the common neighbors heuristic where instead of common
neighbors of two nodes $(u,v)$, the algorithm counts long-range links between two node sets.
The node sets are low-radius balls around $u$ and $v$ according to the initial
distance estimates.
This result requires a stronger notion of low correlation between categories.
Under a stronger uniform density conditions, the \TwoBallTest can be applied
recursively, yielding \emph{unit} distortion (with at
most polylogarithmic additive error).

Second (in Section \ref{sec:constDeg}), we deal with the issue of data
scarcity, which in our setting translates to low node degrees.
In the low (constant) node degree regime,
the common neighbors heuristic is uninformative,
and it instead becomes necessary to count disjoint constant-length
paths for a suitably chosen constant.
Combining the new initial pruning
phase with a subsequent \TwoBallTest requires a much more careful
analysis, which shows that all sufficiently long edges can be treated
as mutually independent given the pruned graph. We recover
(essentially) all our results for the single-category case; extending
the results to multiple categories remains a direction for
future work.

For both extensions, more detailed descriptions of challenges,
results, and high-level approaches are deferred to the introductory
portions of the corresponding sections.

Our algorithms are modular: a pre-processing step
(counting common neighbors, or the low-degree algorithm of Section
\ref{sec:constDeg}) prunes away very long edges.
The \Amoeba step separates different metrics and constructs initial
distance estimates (though we have not adapted the algorithm and
analysis to low node degrees).
Finally, the \TwoBallTest and its recursive version can be used to
further improve the distortion in individual categories.


\xhdr{Discussion of the model}
Our modeling goal is not to define a model of social networks
capturing all of their features; this would be a formidable/impossible
task for which there is much research but not much consensus.
Instead, we aim for generally accepted modeling choices which capture
in a clean way the main algorithmic challenges inherent in rigorous
distance reconstruction.
In particular, our main goal was to capture the two conceptual
obstacles to distance reconstruction: links between dissimilar
individuals, and multiple social categories.

\asedit{From the algorithmic point of view, we are looking for
  modeling assumptions that allow non-trivial provable guarantees for
  social distance reconstruction. A natural progression is to start
  with a basic model with the strongest assumptions, and then to relax
  them.
  For our work, a natural basic model is that each category
  is Kleinberg's small world on a rectangular grid, and the mapping of
  individuals to the locations on the grid is chosen as a random
  permutation, independently for each category.
  The reconstruction problem in this basic model is already
  difficult, and our main result (Theorem~\ref{thm:intro-main})
  does not get much easier to derive. Compared to the basic model, our
  actual model is relaxed in several directions: we consider point
  sets of near-uniform density rather than rectangular grids; we
  replace the global condition of ``random permutations'' with a much
  weaker condition of small ``local correlation'' between categories;
  and we allow each category to have different parameters such as
  Euclidean dimension and node degree. The investigation of what can
  and cannot be done with further relaxations of the model is a
  natural direction for future work.}

Let us discuss some particular modeling choices in more detail.

\begin{enumerate}
\newcommand{\fakeItem}[1]{\item}

\fakeItem{1}
In Kleinberg's small-world model
  \cite{small-world,kleinberg:navigation,small-world-nips,kleinberg:decentralized-search,fraigniaud:small-worlds-survey},
  a version of which we adapt as a generative model for individual
  categories, the probability for an edge between two nodes to exist decreases polynomially in the nodes' distance.
  Naturally, many other distributions lead to
  distance-based random graphs \cite{barthelemy:spatial-networks}.

  Much of the past work in the statistics community
  \cite{handcock:raftery:tantrum,hoff:raftery:handcock,krivitsky:handcock:raftery:hoff,raftery:niu:hoff:yeung,sarkar:chakrabarti:moore}
  assumed that the edge probabilities were logit-linear in the distance,
  i.e., that $\log(\tfrac{p}{1-p})$ is linear in $\D(u,v)$.
  Since long-range links are thus exponentially unlikely
    ($p=\frac{e^{-\alpha \D(u,v)}}{1+e^{-\alpha \D(u,v)}}$),
    the reconstruction  task becomes much easier.
  More importantly, to the extent that precise distributions have been
  empirically tested, remarkable fits have been found
  \cite{adamic:adar:search,backstrom:sun:marlow,liben-nowell:novak:kumar:raghavan:tomkins}
  with Kleinberg's inverse polynomial
  distribution~\cite{small-world,small-world-nips}.\footnote{%
  However, links that appear long could plausibly be
  short in another metric; whether inverse
    polynomial distributions remain prevalent when multiple metrics
    are considered is an interesting --- although difficult ---
    direction for future empirical work.}
  Furthermore, our main constant-distortion result holds for a much
  more general class of distributions, including logit-linear
  distributions.

\fakeItem{2}
  The choice of Euclidean spaces with near-uniform density.
  Both choices (Euclidean and near-uniform) are ubiquitous in past
  work\footnote{In many respects, our kind of latent space models deteriorate
if node densities can be highly non-uniform \cite{fraigniaud:lebhar:lotker:threshold}.}
    \cite{fraigniaud:lebhar:lotker,handcock:raftery:tantrum,hoff:raftery:handcock,kermarrec:leroy:tredan,small-world,krivitsky:handcock:raftery:hoff,raftery:niu:hoff:yeung,sarkar:chakrabarti:moore},
  and are made mostly for technical convenience; they allow us to
  separate the conceptual difficulty of teasing apart different metrics
  and inferring distances with low distortion from the technical
  difficulty of dealing with arbitrary metric spaces.
  We believe that future work will achieve similar results for more
  general metric spaces or related structures, in particular, ultrametrics
  \cite{clauset:moore:newman:link-prediction,small-world-nips,schweinberger:snijders:settings},
  which are another popular choice of latent metric spaces.

\fakeItem{3}
The choice of a union or minimum to combine individual metrics.
  This choice is clearly a simplification of reality: individuals are
  more likely to form ties if they share similarities in multiple
  dimensions, e.g., they work in the same field \emph{and} live in the same
  town. Our model is supposed to capture in the cleanest way the
  difficulty of separating edges originating from different
  categories, and is certainly a better approximation to reality than
  widely used models treating the social structure as one metric space.

  Our model is closely related to (and a slight generalization of)
  a notion of social distance proposed by Watts, Dodds, and Newman
  \cite{watts:dodds:newman}, which treats the social distance as the
  minimum of distances in multiple metrics.
  To the extent that past work explicitly discussed models of
   multiple categories, it was also based on the minimum
\cite[pp.~337, 348]{handcock:raftery:tantrum}, \cite[p.~335]{schweinberger:snijders:settings}.
A generalization to more realistic models is a natural
direction for future work.

\fakeItem{4}
We capture a notion of ``independence'' between categories by
requiring that small balls in different categories have small overlap.
Even without restrictions on computational resources, some assumption
about ``independence'' is clearly necessary: if categories could be
extremely similar, then no low-distortion reconstruction seems possible.
It is an interesting direction for future work whether a few isolated
violations of the condition permit low-distortion reconstruction in
all but the affected areas of the metric spaces.

Our condition is significantly weaker than requiring probabilistic independence.
Several past papers (using a single metric space) assumed that nodes were
placed independently and uniformly at random over some space
\cite{hoff:raftery:handcock,sarkar:chakrabarti:moore}; such a model of
individual categories would imply our ``small intersection'' condition
with high probability.
In fact, we show in Section \ref{sec:permutations} that with high
probability, the ``small intersection'' condition holds even when
nodes are placed adversarially, and their names are permuted randomly.
We also remark that while in reality, we will frequently observe high
correlation between ``categories'' (such as work and hobbies), this
could be construed as a sign that the categories should be chosen
differently, in order to represent the latent traits that manifest
themselves in choices of both occupations and hobbies.
\end{enumerate}

\xhdr{Applications}
Our work provides two natural reconstruction abilities: separating
edges by categories, and reconstructing individual categories with low
distortion. Both of them have multiple useful applications.

Important industrial applications for social network
information include improving ad placement (\emph{social advertising}),
web search results (\emph{social search}), and product recommendations.
These applications are of vital importance for some of the major
players on the Web.
A key commonality of all three applications is that
they use the behavior of friends (clicking, searching, purchasing) to
predict the behavior of an individual.
Yet, two recent
studies~\cite{goel:goldstein:birds,liu:tang:behavioral-targeting}
undertaking a quantitative evaluation of the predictive power of
social links for purchases and click behavior have found at best mixed
evidence.

This apparent conundrum is resolved by noticing that many
links are long-range, and short-range links may be short in an
irrelevant category for the prediction task.
Indeed, a recent data-driven study by Tang and Liu
\cite{tang:liu:latent-social-dimensions} has shown that social link-based
classifiers perform much better when edges are labeled with
categories in which they are short. We conjecture that such
classifiers would improve even further if instead of edges, the
actual \emph{social distance} between nodes were used.

The ability to separate social categories also enables the automatic
detection of circles of friends from different contexts in
  social networking sites. This automatic detection
has been cited as one of the main selling points of Google+, and is at
the heart of the startup Katango. In this sense, our work provides
some theoretical underpinnings for this fast-growing facet of the social networking
market. Separating edges by categories has the additional benefit that
one can identify when edges are short in more than one category,
which could enable the automatic detection of close friends
\cite{wellman:types,wellman:wortley:strokes}.

Another natural application is the discovery of ``social communities''
\cite{brandes:erlebach:network-analysis,fortunato:community-survey,fortunato:castellano:community-survey,danon:duch:diaz-guilera:arenas,schaeffer:graphs-clustering-survey}.
One might argue that the plethora of different network
  community detection
objectives and heuristics is largely a result of stating
the objectives and algorithms in terms of the graph structure, when
the goal is really to identify clusters in the metric spaces.
Since the social space is rarely explicitly modeled or related to the
network, the connection between the objective function and the actual
desired object is absent.
Explicitly reconstructing the social space would constitute the first
step toward a more sound community identification algorithm.
The presence of multiple categories in the model will naturally give
rise to overlapping communities as well.
Indeed, some of the work on reconstructing Euclidean spaces in the
statistics community \cite{handcock:raftery:tantrum,krivitsky:handcock:raftery:hoff}
is explicitly motivated by the desire to identify communities, and
builds community structure into a Bayesian prior.

Social distances can also be used to predict unobserved or potential
social links.
Link prediction has been studied in
\cite{adamic:adar:friends,clauset:moore:newman:link-prediction,liben-nowell:kleinberg:link-prediction,sarkar:chakrabarti:moore,schwartz:wood}.
Unobserved or potential links are most likely present between node
pairs at small distances; hence, once distances are known, missing
links can be predicted easily
\cite{clauset:moore:newman:link-prediction,sarkar:chakrabarti:moore}.

\OMIT{ 
\xhdr{Related work}
Due to space constraints, an in-depth discussion of related work is
deferred to Section \ref{sec:related-work-long}.

A lot of recent work \cite{backstrom:sun:marlow,clauset:moore:newman:link-prediction,fraigniaud:lebhar:lotker,handcock:raftery:tantrum,hoff:raftery:handcock,kermarrec:leroy:tredan,krivitsky:handcock:raftery:hoff,raftery:niu:hoff:yeung,sarkar:moore:dynamic,schweinberger:snijders:settings}
uses Bayesian Models or Maximum Likelihood Estimation to reconstruct
metric spaces (mostly, but not exclusively, Euclidean).
These papers do not model multiple categories, and they do not come
with any guarantees on the quality of approximation
of the inferred metric; in addition, their inference
problems are often not tractable, and heuristics without guarantees even
on likelihood or probability are used.
The most notable exception is the work of Sarkar, Chakrabarti, and Moore
\cite{sarkar:chakrabarti:moore}, who are motivated by the goal of
explaining why simple heuristics for link prediction, such as counting
common neighbors, are successful
\cite{adamic:adar:friends,liben-nowell:kleinberg:link-prediction,schwartz:wood}.
As part of their analysis, they show that for a single category with
logit-linear edge probabilities, counting common neighbors gives
accurate distance estimates.

There are conceptual similarities between the present paper
and simultaneous independent work by Arora et al.~\cite{arora:ge:sachdeva:schoenebeck}
and Balcan et al.~\cite{balcan:borgs:braverman:chayes:teng}.
Their goal is to reconstruct overlapping community structure with
provable guarantees.
They posit latent set-based structures which can be interpreted as
0-1 metrics.
Interestingly, they also require a ``limited overlap'' condition, and
some of the algorithmic ideas used are similar.
However, the reconstructed objects are different, and there is no
analogue in their work to our post-processing steps and the algorithms
we design for dealing with low degrees.
} 
\section{Related work} \label{sec:related-work-long}
Our work is related to work in a large number of communities:
latent space reconstruction in statistics and mathematical sociology,
community discovery, small-world networks, network localization, and
metric space embeddings. We discuss the different areas in their
separate sections.

\xhdr{Latent Space Reconstruction}
Several recent papers \cite{backstrom:sun:marlow,clauset:moore:newman:link-prediction,fraigniaud:lebhar:lotker,handcock:raftery:tantrum,hoff:raftery:handcock,kermarrec:leroy:tredan,krivitsky:handcock:raftery:hoff,raftery:niu:hoff:yeung,sarkar:chakrabarti:moore,sarkar:moore:dynamic,schweinberger:snijders:settings}
aim to reconstruct latent metrics from an observed social network.
The precise models differ across these papers: most assume Euclidean
spaces
\cite{backstrom:sun:marlow,fraigniaud:lebhar:lotker,handcock:raftery:tantrum,hoff:raftery:handcock,kermarrec:leroy:tredan,krivitsky:handcock:raftery:hoff,raftery:niu:hoff:yeung,sarkar:chakrabarti:moore,sarkar:moore:dynamic},
while a few consider ultrametrics to model hierarchical communities
\cite{clauset:moore:newman:link-prediction,schweinberger:snijders:settings}.
Among the papers considering Euclidean spaces, there are different
assumptions about link distributions: most assume a logit-linear model
\cite{handcock:raftery:tantrum,hoff:raftery:handcock,krivitsky:handcock:raftery:hoff,raftery:niu:hoff:yeung,sarkar:chakrabarti:moore},
while a few consider inverse polynomial ``small-world'' distributions
\cite{backstrom:sun:marlow,fraigniaud:lebhar:lotker,kermarrec:leroy:tredan}.\footnote{%
We remark that several recent studies
\cite{adamic:adar:search,backstrom:sun:marlow,liben-nowell:novak:kumar:raghavan:tomkins}
show that the frequency of friendships as a function of
(2-dimensional) geographic distance, when corrected for non-uniform
densities, appears to decrease as $\Theta(r^{-2})$. This gives some
tentative empirical evidence in favor of ``small-world'' distributions.}
There are many other modeling dimensions along which these papers (and
ours) differ, including: variance in node degrees, additional
information about nodes (such as locations of some nodes
\cite{backstrom:sun:marlow}), uniform or clustered priors for node
locations, whether algorithms are supposed to be centralized or
distributed \cite{kermarrec:leroy:tredan}, etc.\footnote{%
Much of the recent work in the mathematical sociology
community has focused on exponential random graph models, which in a
sense ``hard-wire'' desired distributions of certain features.
These models are generally of a very different nature from
latent-space models.
A recent paper by Butts \cite{butts:location-systems} combines
features of both location-based and exponential random graph models;
like the other papers listed above, it is not clear whether inference
of model parameters would be tractable, and whether it would lead to
any guarantees on distortion.}

Two main differences stand out between our work and the majority of
these papers (in addition to the more minor modeling differences).
First, we model multiple categories, which is extremely realistic, but
makes the model, algorithms, and analysis significantly more complex.
Second, the majority of the work cited above
\cite{backstrom:sun:marlow,clauset:moore:newman:link-prediction,handcock:raftery:tantrum,hoff:raftery:handcock,kermarrec:leroy:tredan,krivitsky:handcock:raftery:hoff,raftery:niu:hoff:yeung,schweinberger:snijders:settings}
estimates the underlying space either using Maximum Likelihood
Estimates (MLE), or by imposing a Bayesian Prior and maximizing the
probability of the chosen locations. Both appear to be very complex
problems, and indeed, all of the papers employ heuristics (based on
Gibbs Sampling, Metropolis-Hastings, Simulated Annealing, etc.)
without guarantees on the likelihood or probability of the solution
returned. More fundamentally, even if it were possible to obtain
the MLE or highest-probability solution, it is not clear that it would
come with any guarantees on the worst-case (or even average)
distortion; the objective function does not explicitly model
distortion, and in particular may be sacrificing the distortion of
some edges in order to optimize the more global objective.

Two notable exceptions to the MLE/Bayesian approach are the works of
Fraigniaud, Lebhar, and Lotker \cite{fraigniaud:lebhar:lotker} and
Sarkar, Chakrabarti, and Moore \cite{sarkar:chakrabarti:moore}.
Fraigniaud et al.~\cite{fraigniaud:lebhar:lotker} aim to reconstruct a
single-category small-world model in order to use the distance
estimates for greedy routing.
They propose a heuristic based on an MLE intuition;
interestingly, this heuristic leads to essentially counting common neighbors.
Their algorithm may retain a small number of long-range edges, and
hence does not come with provable guarantees on the distortion of the
reconstructed metric space.
They prove that this does not stand in the way of greedy routing:
despite the lack of distortion guarantees, the distances they
construct provably enable greedy routing along poly-logarithmic length paths.

Sarkar et al.~\cite{sarkar:chakrabarti:moore} begin from the goal of
explaining why simple heuristics for link prediction, such as counting
common neighbors, are successful.
They show that such heuristics can be understood as identifying close
pairs of nodes in a latent Euclidean space, and use this insight to
give provable guarantees on the performance of several heuristics for
link prediction.
(They also suggest additional heuristics).
In the process, they show how a metric space is implicitly
reconstructed by counting common neighbors.
There are a few key differences between their work and ours. First,
their distributions are logit-linear, implying that long-range edges
are extremely unlikely. The reconstruction task is still non-trivial,
but they do not have to deal with any very long-range edges, of which
our model will have many.
Second, they only consider a single category; for us, the
single-category pruning step is a departure point for the more complex
stages of separating the different categories, and using long-range
links to improve the distortion.

\xhdr{Overlapping Communities}
There are conceptual similarities between our work and concurrent and
independent work by Arora et al.~\cite{arora:ge:sachdeva:schoenebeck}
and Balcan et al.~\cite{balcan:borgs:braverman:chayes:teng}.
Their goal is more specifically to reconstruct overlapping community
structure in graphs; similar to our approach, they also posit that the
social network is a noisy signal about some true underlying social
structure, and communities are defined with respect to those structures.
Recall that the goal of properly identifying communities
is also one of the motivations for our work, although we do not
explicitly pursue the question of reconstructing communities with
provable guarantees.

The major difference between our work and that of
\cite{arora:ge:sachdeva:schoenebeck,balcan:borgs:braverman:chayes:teng}
is that both Arora et al.~and Balcan et al.~assume a set-based latent
structure (each community is modeled as a set),
whereas we assume a latent structure based on a near-uniform-density
metric (each social category is modeled as a separate metric space).
This difference, in turn, leads to different random graph models
and algorithmic ideas.
In principle, the set-based structures could be modeled as 0-1 metrics
(and thus fit into our framework); however, such metrics would
dramatically violate our uniform density assumption, so that our
algorithms are not applicable.

Nonetheless, some conceptual similarities between our work and
\cite{arora:ge:sachdeva:schoenebeck,balcan:borgs:braverman:chayes:teng}
are worth noting.
First, a crucial aspect of all three papers is the ability to deal
with overlapping latent structures:
multiple social categories in the present paper,
and multiple communities for
\cite{arora:ge:sachdeva:schoenebeck,balcan:borgs:braverman:chayes:teng}.
All three papers need some notion of ``gap assumption'' that limit overlaps in order to handle such structures.
Second, a high-level idea present in all three papers is to start with
a ``seed'' and then ``grow'' it to find the respective latent
structures.
While the high-level algorithmic ideas are similar, the details differ
significantly between our Amoeba algorithm and the algorithms in
\cite{arora:ge:sachdeva:schoenebeck,balcan:borgs:braverman:chayes:teng}.
The Amoeba algorithm grows the ``Amoeba'' gradually, and using short
disjoint paths, whereas
\cite{arora:ge:sachdeva:schoenebeck,balcan:borgs:braverman:chayes:teng}
use ideas related to finding hidden cliques.
In addition, the goal of reconstructing metrics motivates substantial
algorithmic extensions (discussed in Sections
\ref{sec:additive-singleCat}--\ref{sec:constDeg}) related to improving
the distortion and dealing with small node degrees.
These algorithmic questions have no direct analogue in the setting of
reconstructing communities.

\xhdr{Network Localization, Embeddings, and Distance Oracles}
Reconstructing (low-dimensional, Euclidean) node distances from
distance measurements has been studied both theoretically
and practically from a wide variety of angles.
In \emph{network localization} for mobile and sensor networks
(e.g.,~\cite{AEGMWYAB,so:ye:tensegrity,zhu:so:ye:universal-rigidity}),
and \emph{network embedding} for peer-to-peer networks and the Internet
(e.g.,~\cite{ng:zhang:coordinates,dabek:cox:kasshoek:morris:vivaldi,wong:slivkins:sirer:meridian,kleinberg:slivkins:wexler:triangulation}),
distances are known fairly accurately, but typically only to a few ``beacon''
nodes.
The challenge is to choose beacons, and combine measurements, to
estimate pairwise distance.
In our setting, the presence or absence of edges provides much less
reliable estimates of distances.
However, once we succeed in obtaining basic distance measurements, the
techniques from network embedding/localization can lead to further
improvements in the estimates without a blowup in the running time,
as shown in Section~\ref{sec:constDeg}.

We measure the quality of our inferred metrics in terms of the
distortion of the estimates. Distortion is commonly used as a measure
of quality in metric embeddings and distance oracles
(see, respectively, \cite{indyk:matousek:embeddings}
and \cite{zwick:exact-approximate} for surveys).
In those domains, distances are known precisely, and the challenge is
typically to find a compact and faithful representation, for instance
in terms of low dimensionality of the target metric or small space of
the oracle.
In our setting, the true distances (in each category) are not
explicitly known, and the estimates are very noisy.
Similar to metric space embeddings, our goal is to extract a faithful
representation of each category.
However, a second fundamental difference is that the space we
``embed'' in consists of multiple metrics, and thus
severely violates the triangle inequality.

Our focus on near-uniform density metrics is
motivated by similar notions of low dimensionality in metric
embedding, nearest neighbor search, and a number of other problems,
e.g.~\cite{karger:ruhl:growth-restricted,gupta:krauthgamer:lee:fractals,krauthgamer:lee:navigating,talwar:bypassing,kleinberg:slivkins:wexler:triangulation}.
\asedit{A particularly related (albeit somewhat less restrictive) notion is \emph{grid dimension} \cite{karger:ruhl:growth-restricted,slivkins:locality-aware,Abr05-SPAA}: the smallest $d\geq 0$ such that doubling the radius of a
ball increases the number of points by at most $2^d$. In comparison, the near-uniform density assumption implies both upper and lower bounds on the number of points.
The near-uniform density assumption has been used, along with various other
modeling assumptions, in
\cite{ResourceLocation,GossipImpossibility,small-world,sarkar:chakrabarti:moore}.
In \cite{hoff:raftery:handcock}, the authors make a qualitatively
similar assumption that the point locations are i.i.d.~samples from a multi-variate Gaussian.}


\xhdr{Small-World Networks}
A long line of empirical studies confirms that many social ties and
interactions correlate strongly with social distance, and particularly
geographical distance
(see, e.g., \cite{mcfarland:brown,mok:wellman:before-internet} for a discussion).
For example, Butts \cite{butts:predictability} gives calculations
showing that geographical information alone could reduce the entropy
in network prediction by roughly 90\% under moderate assumptions.
More specifically, several recent studies
\cite{adamic:adar:search,backstrom:sun:marlow,liben-nowell:novak:kumar:raghavan:tomkins}
show that the frequency of friendships as a function of
(2-dimensional) geographic distance, when corrected for non-uniform
densities, appears to decrease as $\Theta(r^{-2})$.

Small-world models aim to capture the natural tradeoff between a
preference for shorter links and the randomness observed in the
presence of long-range links.
Initial models were due to
Watts and Strogatz \cite{watts:strogatz} and
Kleinberg \cite{small-world,small-world-nips}.
One of the main goals in these papers was to explain why greedy
routing --- based only on the position of one's neighbors in the
metric space --- can discover paths of polylogarithmic length.
Since the publication of \cite{small-world,small-world-nips}, a large
number of papers in the theoretical computer science community have
expanded the models and results in various ways
\cite{barbella:kachergis:liben-nowell:sallstrom:sowell,duchon:hanusse:lebhar:schabanel,fraigniaud:gavoille:compact,fraigniaud:gavoille:kosowski:lebhar:lotker,fraigniaud:gavoille:paul,fraigniaud:giakkoupis:arbitrary,fraigniaud:giakkoupis:power-law,fraigniaud:lebhar:lotker:threshold,giakkoupis:schabanel:dimension,kumar:liben-nowell:tomkins,lebhar:schabanel:augmentation,lebhar:schabanel:routing,manku:naor:wieder,nguyen:martel}.
The main focus in the community has continued to be the ability of
small-world networks to route greedily and efficiently.
In particular, the goal has been to find ways to augment graphs with
suitable long-range links or (semi-)metrics, provide nodes with
additional knowledge or let them perform some local graph exploration,
or exploit non-uniformity in node degrees, all in an effort to achieve
routing along paths of essentially optimal length.
Several good recent surveys summarize the work along these lines
\cite{fraigniaud:small-worlds-survey,kleinberg:decentralized-search,liben-nowell:wayfinding}.

\section{Definitions and Preliminaries}
\label{Prelims}

We define a formal model for the latent social space
that gives rise to observed social networks.
In general, it will not be a metric space: it
naturally possesses multiple social dimensions, and proximity in just
one of those dimensions (e.g., geography or occupation) usually means
that individuals are ``close.''
First, we define a basic model with a single metric space that models
a single social dimension.
We then discuss how to extend the concept to multiple metrics; in
particular, we formalize a notion of metric spaces being sufficiently
``independent.''

We begin with some formalities.
Throughout, $V$ is a \emph{ground set} of $n$ nodes. For a metric \D,
we use the standard notion of balls, i.e.,
    $\Ball{u}{r} = \Set{v}{\D(u,v) \leq r}$.
We liberally use $O(\cdot)$ notation to simplify the presentation.
In theorem statements, the constants in $O(\cdot)$ can depend on the
constants in our setting. Elsewhere, the constants in $O(\cdot)$ are
absolute, unless noted otherwise.

\asedit{As we introduce a considerable amount of notation, we provide
  an easy reference summarizing all of the notation in
  Appendix~\ref{app:notation}.}

\OMIT{
The high-probability bounds in our analysis stem
from standard Chernoff Bounds, or extensions thereof to sums of
randomly permuted numbers. (See Section~\ref{sec:permutations} for
precise formulations of both.)}

\subsection{A Model for One Social Category}
\label{Distance-Based}
A single category of the latent space is modeled
essentially as a \DIM-dimensional Euclidean space.
More precisely,
$V$ is a subset of the \DIM-dimensional \emph{torus}\footnote{%
Prior work deals with a \DIM-dimensional grid, which is somewhat
undesirable, as there is an asymmetry between the nodes on the border
and on the inside, which gets more pronounced in higher dimensions.},
that is, the nodes lie in $[0,R]^{\DIM}$ for some $R$,
and the distance between points $x, y \in [0,R]^{\DIM}$ is
$\D(x,y)  = ( \sum_i (\min(|x_i-y_i|, R-|x_i-y_i|))^p )^{1/p}$.
We require that the node density be \emph{nearly uniform}, in the following sense:
any unit cube in the torus contains at least one and at most
\densityK nodes, for some known constant $\densityK \geq 1$.
(Since \densityK will always be a constant, we will sometimes
  hide \densityK factors in $O(\cdot)$ notation.)
For some of our results, we also want to use the actual lattice
structure as a reference: We refer to the graph of integer points from
$[0,R]^{\DIM}$ with edges between all pairs at distance $\D(x,y) \leq 1$
as the \emph{toroidal grid}.


If nodes $u,v$ are at distance $r = \D(u,v)$, then the
edge $(u,v)$ is present \emph{independently of other edges}, with probability
  $f(r) = \min(1, \CSW \kSW\, r^{-\DIM})$.
Here, $\CSW = \Theta(\frac{1}{\log n})$
is a normalization constant chosen to ensure that the expected
average node degree is $1$ whenever $\kSW = 1$.
Then, \kSW is a parameter controlling the expected average node degree.
When $\CSW \kSW \leq 1$, the expected average degree is exactly \kSW;
otherwise, the dependence of the node degree on \kSW is sublinear and
strictly monotone.
We call \kSW the \emph{target degree}, even though strictly speaking,
it does not equal the average degree.
Following the literature (e.g., \cite{small-world,small-world-nips}),
we focus on the cases $\kSW = O(1)$ and $\kSW = \polylog(n)$.
We use \ESW to denote the edge set obtained from this distribution,
and $\Gr(V,\D[i])$ for the random graph model,
which we call the \emph{\singleSG.}

When $\kSW \geq 1/\CSW$, all edges of length at most 1 are present in \ESW
with probability $1$. Otherwise, even to ensure connectivity of the
social graph, one must insert a suitable ``local edge set'' separately.
(For instance, much of the literature on small-world networks
assumes that the \DIM-dimensional grid is always part of the graph.)
This issue is discussed in more detail in Section \ref{sec:constDeg},
in the context of low node degrees.

Our main result easily extends to a more general model in which, for a
suitably large $R=\polylog(n)$, an edge $(u,v)$ of length $r=\D(u,v)$
is present with probability at least $f(r)$ for all $r<R$, and with
probability smaller than $f(r)$ for all $r\geq R$.
We omit this generalization for ease of presentation.

\subsection{Multiple Social categories}
\label{multi-category}

When multiple social categories give rise to edges independently (such
as work-related, geography-related, and hobby-related friends),
we model the observed social network as the \emph{union} of
the graphs generated by the individual categories.
Formally, each social category is a \singleSG $\Gr[i] = \Gr(V,\D[i])$
with near-uniform density for $i = 1, \ldots, \NUMCAT$,
and the edge sets of the \Gr[i] are mutually independent.
\NUMCAT is a (small) constant.
Balls with respect to the category-$i$ metric are denoted by
\Ball[i]{u}{r}.
A \emph{\multiSG} is obtained by taking the \emph{union} of
all edges, i.e. $\ESW = \bigcup_{i=1}^\NUMCAT \ESW[i]$.
Taking the union is analogous to defining the social distance
  as the minimum over the categories; in particular, the social space
  thus defined is not a metric.

The different categories may have different parameters, such as the
target degree or number of dimensions. If the target degrees are
vastly different, then one category could be completely
``drowned out'' by other, denser, categories, which would make it
impossible to observe its structure.
Therefore, we assume that the target degrees \kSW[i] of the
categories are within a known constant factor of one another.
We define \emph{the} target degree of the \multiSG as the average
$\kSW = \frac{1}{\NUMCAT} \cdot \sum_i \kSW[i]$.


\subsection{Local disjointness of categories}
In order to be able to distinguish the edges arising from different
categories, it is necessary that the underlying metrics of different
categories be sufficiently different. We capture this intuition by
requiring that any pair of small balls in two different
  categories be sufficiently different:
formally, the \emph{\LCD condition} states that for any two balls
\Ball[i]{u}{r}, \Ball[i']{u'}{r'} in distinct categories $i \neq i'$,
with $r, r' = O(\polylog(n))$,
\begin{align}\label{eq:cat-disj}
|\Ball[i]{u}{r} \cap \Ball[i']{u'}{r'}| \leq O(\log n).
\end{align}

This condition suffices for our main result;
some of the extensions
require a similar but stronger local condition called
\globalCD[R], which will be introduced in
Section~\ref{sec:additive-mult}.
The \LCD condition is not overly strong; for instance,
we prove (in Section~\ref{sec:permutations}) that both \LCD and
\globalCD[R] hold with high probability when node identifiers within
each category are randomly permuted.


\subsection{Input and output}
Since our model has several parameters, we need to be precise about
what is known to the algorithm.
Most importantly, in terms of the social network, only the union \ESW
of all social network edges is revealed to the algorithm;
the division into individual categories \ESW[i] is not given.

We assume that the algorithm knows how many embeddings it needs to
construct, and into what spaces. More formally, this means that
\NUMCAT (the number of categories), \DIM[i] (the number of
dimensions), and $R_i$ (the sizes of the tori) are known to the
algorithm. The average target degree \kSW can be estimated from the
expected degree, and by Chernoff Bounds, such an estimate will be
within $1\pm O(n^{-1/2})$ of the correct value with high probability.
According to the model, the individual target degrees \kSW[i]
lie within a constant factor of \kSW, and we assume that this constant
factor is also known to the algorithm.
To simplify presentation (i.e., this is not a part
  of the model) we assume that the target degrees $\kSW[i]$ and the
  dimensions $\DIM[i]$ are the same for all categories $i$, and that
  $\kSW$ is known.

We also assume that the upper bound \densityK on the number of
points in any unit cube is known to the algorithm.
Knowing \densityK and the other model parameters,
the normalization constant $\CSW = \Theta(\tfrac{1}{\log n})$ can also
be computed to within a constant factor.

The goal of the algorithm is to output metrics \DP[i] that approximate
the original \D[i]. If the output satisfies

\[
\CONTR \D[i](u,v) \; \leq \; \DP[i](u,v) \; \leq \; \EXPAN \D[i](u,v) + \Delta
\]
for all node pairs $u,v$, then we say that \DP[i] estimates \D[i] with
\emph{contraction} \CONTR, \emph{expansion} \EXPAN and
\emph{additive error} $\Delta$.
The \emph{multiplicative distortion} of \DP[i] is then $\EXPAN/\CONTR$.
If we mention no multiplicative distortion (or contraction), then we
implicitly refer to the case of distortion (contraction) 1.
We do not require that \DP[i] itself be a \DIM[i]-dimensional
  Euclidean metric, only that it approximate \D[i] with low distortion.

\subsection{Probability}

Most of our results are with high probability, with respect to the
randomness in the graph generation process.
By this, we mean that the success probabilities are $1-n^{-c}$, where
the constant $c\geq 1$ is large enough to allow all needed
applications of the Union Bound (over polynomially many events).
By a slight abuse of notation, we will write \emph{with high probability}
for probability $1-n^{-c}$, without explicitly specifying the
constant $c \geq 1$.

In many places, we bound tail deviations using
standard \emph{Chernoff Bounds}. Specifically, we use the following
version, which can be found, e.g.,
in~\cite[pages 6--8]{dubhashi:panconesi:concentration-book}.

\begin{theorem}[Chernoff Bounds]\label{thm:Chernoff}
Let $X$ be the sum of independent random variables distributed in
$[0,1]$, and let $\mu' \geq \mu = E[X]$.
Then the following hold:
\begin{align}
\Prob{|X-\mu| > \delta\mu} & \leq \exp(-\mu\, \delta^2/3),
    \qquad \text{ for any } \delta>0 \label{eqn:chernoff-two-sided}\\
\Prob{X > (1+\delta)\mu'} & \leq \exp(-\mu'\, \delta^2/3),
    \qquad \text{ for any } \delta\in(0,1). \label{eqn:chernoff-upper}
\end{align}
\end{theorem}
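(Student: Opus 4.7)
The plan is to follow the standard exponential moment (Chernoff) method, which is essentially just Markov's inequality applied to $e^{tX}$ together with independence to factor the moment generating function. I will first prove the upper tail bound~(\ref{eqn:chernoff-upper}) and then derive~(\ref{eqn:chernoff-two-sided}) from the upper and lower tails combined.

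Write $X=\sum_i X_i$ with the $X_i$ independent and $X_i\in[0,1]$. For any $t>0$, Markov's inequality gives
\[
\Prob{X>(1+\delta)\mu'} \;=\; \Prob{e^{tX}>e^{t(1+\delta)\mu'}} \;\leq\; e^{-t(1+\delta)\mu'}\prod_i \Expect{e^{tX_i}}.
\]
The key single-variable estimate is that, since $X_i\in[0,1]$ and $x\mapsto e^{tx}$ is convex, $e^{tX_i}\leq 1+X_i(e^t-1)$; taking expectations and using $1+z\leq e^z$ yields $\Expect{e^{tX_i}}\leq \exp(\Expect{X_i}(e^t-1))$. Multiplying across the independent terms gives $\Expect{e^{tX}}\leq \exp(\mu(e^t-1))\leq \exp(\mu'(e^t-1))$, where the last inequality uses $e^t-1>0$ together with $\mu\leq\mu'$. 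This is the step where the hypothesis $\mu'\geq\mu$ gets used; it is harmless precisely because the MGF bound is monotone in the mean.

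Substituting back and optimizing in $t$, I would pick $t=\ln(1+\delta)$, which yields the classical form
\[
\Prob{X>(1+\delta)\mu'}\;\leq\; \left(\frac{e^{\delta}}{(1+\delta)^{1+\delta}}\right)^{\mu'}.
\]
The final and only slightly delicate step is to check the elementary inequality $\delta-(1+\delta)\ln(1+\delta)\leq -\delta^2/3$ for $\delta\in(0,1)$; this follows by a short Taylor comparison (the function $g(\delta)=(1+\delta)\ln(1+\delta)-\delta-\delta^2/3$ has $g(0)=0$ and $g'(\delta)=\ln(1+\delta)-2\delta/3\geq 0$ on $(0,1)$). This gives~(\ref{eqn:chernoff-upper}).

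For~(\ref{eqn:chernoff-two-sided}), I would repeat the argument with $t<0$ to obtain an analogous lower-tail bound $\Prob{X<(1-\delta)\mu}\leq \exp(-\mu\delta^2/2)$ for $\delta\in(0,1)$, combine it with~(\ref{eqn:chernoff-upper}) applied at $\mu'=\mu$, and absorb the constants into the weaker exponent $\delta^2/3$ so that a single clean bound covers both sides and all $\delta>0$ (for $\delta\geq 1$ only the upper tail is relevant and the bound is even better than $\exp(-\mu\delta^2/3)$). The only real obstacle is the final algebraic simplification; everything else is mechanical once the MGF is bounded. No step requires anything beyond independence, convexity of the exponential, and one Taylor-type inequality.
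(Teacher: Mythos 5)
The paper does not prove Theorem~\ref{thm:Chernoff}; it cites it as a standard result (Dubhashi--Panconesi, pp.~6--8), so there is no in-paper proof to compare against. Your derivation of~\eqref{eqn:chernoff-upper} is the standard exponential-moment argument and is correct: the convexity bound $e^{tx}\le 1+x(e^t-1)$ for $x\in[0,1]$, the harmless replacement of $\mu$ by $\mu'\ge\mu$ (valid since $e^t-1>0$), the choice $t=\ln(1+\delta)$, and the elementary check that $(1+\delta)\ln(1+\delta)-\delta\ge\delta^2/3$ on $(0,1)$ all go through (for the last step, $g'(\delta)=\ln(1+\delta)-2\delta/3$ rises and then falls on $(0,1)$ but stays nonnegative because $g'(1)=\ln 2-\tfrac{2}{3}>0$; you should say this, since monotonicity of $g'$ fails past $\delta=1/2$).

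Your sketch for~\eqref{eqn:chernoff-two-sided} has a genuine gap. The union bound over the two tails gives $\Prob{|X-\mu|>\delta\mu}\le e^{-\mu\delta^2/3}+e^{-\mu\delta^2/2}\le 2e^{-\mu\delta^2/3}$, and the leading factor of $2$ cannot be ``absorbed into the weaker exponent'': $2e^{-\mu\delta^2/3}$ is not dominated by $e^{-\mu\delta^2/3}$, and further weakening the exponent only moves away from the target. The parenthetical claim that for $\delta\ge 1$ the upper-tail bound is ``even better than $\exp(-\mu\delta^2/3)$'' is also wrong through this route: the optimized Chernoff exponent $\delta-(1+\delta)\ln(1+\delta)$ exceeds $-\delta^2/3$ once $\delta$ is roughly above $1.8$ (at $\delta=2$, one has $2-3\ln 3\approx -1.30>-\tfrac{4}{3}$). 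In fact \eqref{eqn:chernoff-two-sided} as written is false for large $\delta$: with $X$ a single Bernoulli$(p)$ variable, $p=0.01$, $\delta=50$, one gets $\Prob{|X-\mu|>\delta\mu}=\Prob{X=1}=0.01$ while $\exp(-\mu\delta^2/3)=\exp(-25/3)\approx 2.4\cdot 10^{-4}$. So no argument can establish~\eqref{eqn:chernoff-two-sided} in the generality stated; the standard statement carries a factor of $2$ and restricts $\delta\in(0,1)$, and the paper's unqualified version is a harmless imprecision for its applications (which tolerate constant-factor slack), but your write-up should keep the factor of $2$ rather than claim to remove it.
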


The bounds in Theorem~\ref{thm:Chernoff} sometimes apply (and are
useful) even when the summands are not independent.
In particular, our analysis of \LCD and \globalCD[R] in
Section~\ref{sec:permutations} uses one such result in which the
randomness arises from a random permutation.
We state and prove the corresponding version of Chernoff
Bounds in that section.


\OMIT{ 
If no additive error is mentioned, then $\Delta=0$.
We say that an algorithm estimates \D with additive error $\Delta(r)$ if
  $|\D(u,v) - \DP(u,v)| \leq \Delta(\D(u,v))$
for all node pairs $(u,v)$.
In this paper, we are especially interested in the case when
$\Delta(r)$ is a sublinear function.
} 

\OMIT{
Let \Eloc be a set of some (not necessarily all) edges of length at
most 1; we call \Eloc the \emph{local edges}.
\Eloc is \emph{faithful} with respect to $(V,\D)$ iff
the unweighted shortest-paths distances in $(V,\Eloc)$ are within a
(known) constant factor of $\D$, for each node pair $u,v$.
Thus, faithful local edge sets ensure that the local structure
``roughly'' captures the structure of the latent space;
we assume throughout that local edge sets are faithful.
}

\OMIT{
For a non-decreasing function $f: \R^+ \to [0,1]$,
a \emph{distance-based random graph} $\Gr = \Gr(V,\D,f)$
is a distribution over graphs $G=(V,\ESW)$ on $V$ such that
$\ESW \supseteq \Eloc$ deterministically,
and each edge $e=(u,v)$ is in \ESW independently, with probability
$f(\D(u,v))$.
In this sense, our model differs slightly from the standard model
\cite{small-world,fraigniaud:small-worlds-survey}, where the number of
(outgoing) random links per node is fixed.
For our purposes, fixed degrees introduce dependencies between the
presence of edges, without any compensating qualitative insights;
hence, we use a model with edge independence.
}

\OMIT{
The canonical example of distance-based random graphs are
\emph{small-world graphs} (SWG)
\cite{fraigniaud:small-worlds-survey,small-world,small-world-nips,kleinberg:decentralized-search,liben-nowell:wayfinding,nguyen:martel,watts:strogatz}.
In the notation above, SWG are typically defined with a ground set
$V \subseteq \R^{\DIM}$, social distance \D being the $\ell_1$ or
$\ell_2$ norm in $\R^{\DIM}$, the local edge set being the
\DIM-dimensional integer grid,
and an edge probability function $f(r) = \Theta(r^{-\DIM})$.
There is overwhelming experimental evidence that the distribution of
friendships follows a decreasing function $f(r)$; see, e.g.,
\cite{mok:wellman:before-internet} for a discussion.
Indeed, several empirical studies suggest that --- so long as the
analysis corrects for non-uniform node densities by considering
rank-based friendship distributions --- an edge probability of
$\Theta(r^{-\DIM})$ matches observed friendships and interactions surprisingly well
}

\OMIT{
\begin{enumerate}
\item The simplest such model, a natural modification of
the model in \cite{small-world}, assumes that the nodes are located at
all integer points in $[0,n^{1/\DIM}]^{\DIM}$, and the local edges are
axis-parallel between all unit-distance pairs of nodes.
The downside of this model is that the rigid structure would give
algorithms additional (occupancy-based) information for placing nodes,
which would significantly limit the practical relevance of the
algorithms.
\item A somewhat more general model, which we call
\emph{perfectly uniform density}, posits that each hypercube with side
length $N \in \N$ contains $N^{\DIM} \pm O(N^{\DIM-1})$ points.
This model rules out algorithms based on occupancy arguments.
\end{enumerate}
}

\section{The main result}
\label{sec:amoeba}

In this section, we present our main result, an algorithm for
distance reconstruction for multiple categories with constant
distortion.


\begin{theorem}\label{thm:main}
Consider a \multiSG with $\CSW \kSW=\Omega(\log n)$,
near-uniform density and \LCD.
There is an algorithm that with high probability reconstructs
distances in each category with constant expansion, no contraction,
and $\polylog(n)$ additive error.
Moreover, such distance estimates
(as spanner graphs or as distance labels)
can be computed in time $n \polylog(n)$.
\end{theorem}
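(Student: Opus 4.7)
The plan is to implement the two-stage \Amoeba algorithm outlined in the introduction. Stage~1 is a common-neighbors pruning step that strips away essentially every edge which is long in every category. Stage~2 then separates the surviving locally short edges into the \NUMCAT categories and, for each category, outputs a weighted spanner whose shortest-path metric is the estimate \DP[i].

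Stage~1 (pruning). For every observed edge $(u,v)\in\ESW$, I count the common $\ESW$-neighbors of $u$ and $v$ and retain the edge only when the count exceeds a threshold $\tau=\Theta(\kSW^2/\polylog(n))$. If $\D[i](u,v)\le\localR$ for some category $i$ and a suitably chosen $\localR=O(\polylog n)$, then the expected number of common neighbors from category $i$ alone is $\Omega(\kSW^2/\polylog n)$, coming from the overlap of $\Ball[i]{u}{2\localR}$ and $\Ball[i]{v}{2\localR}$; because $\CSW\kSW=\Omega(\log n)$, the Chernoff bound~\eqref{eqn:chernoff-two-sided} yields concentration. For a pair $(u,v)$ at distance exceeding $\prunedR=O(\polylog n)$ in \emph{every} category, each common neighbor $w$ must still sit inside a small ball around both $u$ and $v$ in some single category, so~\eqref{eq:cat-disj} combined with the inverse-polynomial decay of edge probabilities keeps the expected common-neighbor count asymptotically below $\tau$. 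Tuning $(\tau,\localR,\prunedR)$ and union-bounding over the $\tilde{O}(n\kSW)$ candidate edges produces a pruned set $\prunedE$ that contains every edge of length at most $\localR$ in some category and misses every edge of length greater than $\prunedR$ in every category.

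Stage~2 (\Amoeba). In $(V,\prunedE)$ every edge is short in at least one unknown category, and the task is to peel the categories apart. I would begin each category by locating a $\polylog(n)$-sized clique $K\subseteq\prunedE$. Any clique of this size is forced by \LCD to lie in a small ball of a single category: if two distinct categories each held $\Omega(\log n)$ vertices of $K$ inside a small ball, intersecting those balls would contradict~\eqref{eq:cat-disj}. Such a clique exists because every category-$i$ ball of constant radius contains $\Omega(\kSW)=\Omega(\log n)$ pairwise adjacent nodes (edges of length $O(1)$ occur with probability $1$ whenever $\CSW\kSW$ is large enough), and it can be located in near-linear time by scanning individual neighborhoods. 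From $K$ I grow the amoeba one node at a time: a candidate $v$ is added when many of its $\prunedE$-neighbors lie inside a small ball, measured in the current estimate, around some already-assigned node $u$, and is then given an estimated position near $u$. \LCD ensures that whenever this test fires, the signal comes from genuine category-$i$ neighbors, so $v$ really is close to $u$ in category $i$; conversely, a $v$ far from $u$ in category $i$ has too few other-category neighbors in the ball to pass. Once the amoeba for category $i$ saturates, I restart with a fresh clique drawn from the part of $\prunedE$ not yet explained by the reconstructed metric, and iterate \NUMCAT times.

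Output, running time, and main obstacle. For each reconstructed category I would emit a spanner on the assigned nodes, with edges classified to that category weighted by their estimated distances. The resulting shortest-path metric has no contraction (weights upper-bound true distances by construction), constant multiplicative expansion (each weight is within a constant factor of the true length, and the local density guarantees a short-enough hop always exists), and $\polylog(n)$ additive error; bucketing nodes on a grid in each current estimate makes every ball query cheap, so the whole algorithm runs in $n\polylog(n)$ time. The hardest part of the proof is controlling error propagation during amoeba growth, since each newly assigned node inherits the positional error of its reference node and a naive analysis compounds errors exponentially with the diameter of the category. I plan to attack this by layering the growth by true category-$i$ distance from $K$ --- at stage $j$ the amoeba is exactly the set of nodes within true distance $2^j$ of $K$ --- and by choosing test-ball radii that absorb the accumulated error of the previous layer while still being small enough for \LCD to suppress the cross-category signal. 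The crucial lemma will be that \LCD guarantees a uniform signal-to-noise gap between genuine category-$i$ neighbors and spurious ones inside any test ball, and this gap survives the constant-factor radius blowups across $O(\log n)$ layers, yielding overall constant multiplicative expansion with only polylog additive error.
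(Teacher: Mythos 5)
Your two-stage architecture matches the paper's, and your use of \LCD to force a large clique into a single category's ball and to suppress cross-category contamination is the right central idea. But there are two substantive problems with the proposal as stated.

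First, you are solving a harder problem than the paper does. Your amoeba maintains estimated \emph{positions} for assigned nodes and tests a candidate $v$ by counting its $\prunedE$-neighbors that land inside a ball of the current estimate around some already-placed $u$; you then correctly identify error propagation across $O(\log n)$ growth layers as the hard step. But that difficulty is self-inflicted. The paper's \amoebaTest is purely graph-theoretic: it never assigns coordinates. It counts raw $\ESW$-edges between the candidate $u$ and the set $\NeighB{v}{\amoebaE}$ (the 1-hop neighborhood of $v$ in the current amoeba edge set). The invariant maintained is that $\amoebaE[i]$ contains every $i$-short edge (length $\leq \localR$) and no $i$-long edge (length $> \amoebaR$); since every edge in the output graph has true $\D[i]$-length at most $\amoebaR$ and every $\localR$-neighbor pair is joined, the shortest-path metric scaled by $\amoebaR$ gives constant distortion directly (the paper's Lemma~\ref{lem:shortest-paths}), with no accumulated positional error to control. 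If you track positions, you really do face compounding error and need a separate fix; if you track only the graph, the problem disappears.

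Second, the two correctness lemmas that carry the proof are not argued. (a) \emph{No $i$-long edge passes the test.} Your sketch says a far $v$ ``has too few other-category neighbors in the ball to pass,'' but this needs the same kind of ring-decomposition estimate as the pruning stage: split $\NeighB{v}{\amoebaE}$ into the part inside $\Ball[i]{v}{\prunedR}$ and the rest; bound the rest by $O(\NUMCAT\log n)$ via \LCD; bound the expected number of $\ESW$-edges from $u$ into $\Ball[i]{v}{\prunedR}$ by exploiting that all those nodes are at $\D[i]$-distance $\geq \amoebaR/2$ from $u$; then Chernoff. Without this the expansion bound has no teeth. (b) \emph{Progress.} You need to show that while an $i$-short edge is missing, some $i$-short edge passes, by induction on the radius from the seed ball; this is where the constants $\amoebaM,\amoebaN$ must be chosen consistently, and where $\CSW\kSW=\Omega(\log n)$ is used to make the edge counts concentrate. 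Relatedly, you also drop the requirement that each new seed clique be spread out ($\diam_j(C)\geq\log^2 n$) in every previously reconstructed category; without it, nothing prevents iterations $2,\dots,\NUMCAT$ from rediscovering category~$1$.

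Finally, a small quantitative slip: with all edges of length $\leq\localR$ present deterministically, the common-neighbor count for a close pair is $\Theta(|\Ball[i]{u}{\localR}\cap\Ball[i]{u'}{\localR}|)=\Theta(\CSW\kSW)$, so the pruning threshold should be $\Mtwo=\Theta(\CSW\kSW)=\Theta(\kSW/\log n)$, not $\Theta(\kSW^2/\polylog n)$; the dominant contribution is from the deterministic short edges, not from coincidences among the $\kSW$ random long-range neighbors.
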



\subsection{Overview and intuition}

We begin with a high-level overview of the algorithm and
the intuition for the proof, before discussing the different stages in
detail in individual subsections. Recall that the algorithm's input is
the set $\ESW = \bigcup_i \ESW[i]$ of edges from all categories.
For the entire section, we assume that the average node degree is high
enough: $\CSW \kSW = \Omega(16^{\DIM} \NUMCAT^3 \log n)$
for a sufficiently large constant in $\Omega(\cdot)$.
Let $\localR = \Theta((\CSW \kSW)^{1/\DIM})$
be the \emph{local radius}:
by definition of the generative model, all edges between node pairs
$(u,v)$ at distance $\D(u,v) \leq \localR$ are in \ESW with
probability 1.
We define the \emph{pruning radius} to be
$\prunedR = \Theta(\localR \NUMCAT^{2/\DIM})$.

The algorithm proceeds in multiple stages. Each of these stages makes
use of the (random) long-range edges. To avoid stochastic dependencies
between the stages, we can randomly partition the edges of \ESW into a
constant number of sets. Each stage then makes use of its own
set. Since the nodes' degrees are high enough, this does not affect
the high-probability guarantees. For ease of notation, we will not
explicitly talk about the partitions for the remainder of this section.
All results in this section hold with high probability.

In the first stage, called the \emph{\SimpleTest,} the algorithm
produces a \emph{pruned set} \prunedE (which need \emph{not} be a subset of
\ESW), with the following guarantee for all node pairs  $(u,u')$:

\begin{itemize}
\item If $u,u'$ are at distance at most \localR in (at least) one
  category $i$, then $(u,u') \in \prunedE$.

\item If $u,u'$ are at distance at least \prunedR in all categories
  $i$, then $(u,u') \notin \prunedE$.
\end{itemize}

Thus, the guarantee is that all short edges are present,
and all sufficiently long edges are absent.
The algorithm makes no guarantees for node pairs in the intermediate
distance range.

To achieve this pruning, the \SimpleTest counts the number of 2-hop paths
(common neighbors) between $(u,u')$, and compares it to a carefully
chosen threshold.
Similar to what Sarkar et al.~\cite{sarkar:chakrabarti:moore} showed
for the single-category case and the logit-linear edge probabilities,
our analysis shows that this simple
heuristic can provide provable distortion guarantees under the
small-world model, even in the more difficult case of multiple categories.

In the second stage, called \emph{\Amoeba stage}, the algorithm covers \prunedE with
individual edge sets \amoebaE[i] (which need not be disjoint);
the set \amoebaE[i] corresponds to category $i$.
The key property we prove is that whenever $u,v$ are at distance at
most \localR in category $i$, then $(u,v) \in \amoebaE[i]$, whereas
$(u,v) \notin \amoebaE[i]$ whenever $u$ and $v$ are at distance at least
$\amoebaR = \Theta(\prunedR \NUMCAT^{3/\DIM}) = \Theta(\localR \NUMCAT^{5/\DIM})$.
Again, for the intermediate range, the algorithm makes no guarantees
about the presence or absence of edges.
This guarantee
implies that the shortest-path metric%
\footnote{\asedit{Here and throughout, a shortest-path metric of a
    given edge set is with respect to the hop count, unless specified
    otherwise.}}
 of \amoebaE[i] gives an
embedding of \D[i] with constant  multiplicative distortion
$O(\NUMCAT^{5/\DIM})$ for all node pairs at distance at least \localR,
and poly-logarithmic additive distortion for all node pairs at
distance at most \localR.

The algorithm constructs the edge sets \amoebaE[i] one by one. For each $i$,
it begins by finding a poly-logarithmically large clique in \prunedE
that is sufficiently spread out in all previously constructed \amoebaE[j].
(We show using the \LCD condition that the node set of this clique will have
diameter at most $4 \prunedR$ in some category $i$).
Starting from this clique, as long as possible, it adds edges $(u,v)$
that are ``supported'' by enough edges (in \ESW) between $v$'s
neighborhood in \amoebaE[i] and $u$.
The key part of our analysis is to show
that this process will indeed add all sufficiently short edges
(and in particular end up having added all nodes),
while excluding all edges that are long in category $i$.

Throughout this section, we frequently count the number of edges in \ESW
between two node sets (one of which may be a single node).
We usually calculate the expectation, and then invoke Chernoff Bounds
to guarantee that the number of edges is within the desired range. The
expectation or desired number of edges will be (at least) logarithmic,
allowing the application of Chernoff Bounds.



\subsection{Pruning stage: the \SimpleTest}

For a node pair $u,v$, let $\Mtwo(u,v)$ be the number of
two-hop $u$-$v$ paths in \ESW, i.e., the number of common neighbors of
$u$ and $v$ in \ESW. The \SimpleTest{} is as follows:
\begin{align}\label{eq:SimpleTest}
\text{for each node pair $(u,u')$, accept if $\Mtwo(u,u') \geq \Mtwo$, reject otherwise.}
\end{align}
We define the threshold as $\Mtwo = \Theta(\kSW \CSW)$, where the constant in $\Theta(\cdot)$
can be calculated explicitly from the known parameters.
Henceforth, let $\prunedE$ be the set of all accepted node pairs.

\begin{lemma} \label{lm:two-hop}
With high probability, the \SimpleTest accepts all node pairs of
distance at most \localR in some category, and rejects all node pairs
whose distance is at least $\prunedR$ in all categories.
\end{lemma}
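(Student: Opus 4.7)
\medskip
\noindent\textbf{Proof plan.}
The plan is to prove the two bullet points separately: the acceptance direction will be essentially deterministic (leveraging near-uniform density and the fact that short edges are deterministically present), while the rejection direction will rely on a first-moment bound combined with Chernoff concentration, with the \LCD condition doing the crucial work of controlling cross-category contributions.

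For acceptance, suppose $\D[i](u,v) \leq \localR$ for some $i$. By the generative model, every node $w \in \Ball[i]{u}{\localR} \cap \Ball[i]{v}{\localR}$ is deterministically adjacent to both $u$ and $v$ in $\ESW[i] \subseteq \ESW$. Near-uniform density forces this intersection to contain $\Theta(\localR^{\DIM}) = \Theta(\CSW \kSW)$ nodes, with geometric constants depending only on $\DIM$ and $\densityK$, so by choosing the constant hidden in $\Mtwo = \Theta(\CSW \kSW)$ sufficiently small, $M_2(u,v) \geq \Mtwo$ holds deterministically.

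For rejection, fix $u,v$ with $\D[i](u,v) \geq \prunedR$ in every category. I will first bound $\E[M_2(u,v)]$ and then apply Chernoff. Since edges are mutually independent,
\[
\E[M_2(u,v)] \;=\; \sum_{w \ne u,v} \Pr[(u,w)\in \ESW]\,\Pr[(v,w)\in \ESW] \;\le\; \sum_{i,j=1}^{\NUMCAT} \sum_w f_i(u,w)\, f_j(v,w),
\]
with $f_i(x,w) = \min(1, \CSW \kSW / \D[i](x,w)^{\DIM})$ and a union bound across categories at each endpoint. Diagonal terms ($i=j$) reduce to the classical small-world ``two-hop at distance $\geq r$'' integral; with $\prunedR^{\DIM} = \Theta(\CSW \kSW \NUMCAT^2)$ each contributes $O(\CSW \kSW \log \NUMCAT / \NUMCAT^2)$. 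For off-diagonal terms ($i \ne j$)---the heart of the argument---I would partition the sum over $w$ into dyadic annuli around $u$ in metric $i$ and around $v$ in metric $j$; for any pair of annuli of radii both at most $\polylog(n)$, \LCD caps the intersection at $O(\log n)$, and the resulting double geometric series sums to $O(\log n)$ per category pair $(i,j)$, while annulus pairs where one radius exceeds $\polylog(n)$ contribute negligibly since the associated $f$-value is suppressed by a $\polylog(n)^{-\DIM}$ factor. Adding everything, $\E[M_2(u,v)] = O(\NUMCAT^2 \log n)$, which by the assumption $\CSW \kSW = \Omega(\NUMCAT^3 \log n)$ lies comfortably below $\Mtwo/2$.

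Finally, the common-neighbor indicators for distinct $w$ depend on disjoint edges and hence are mutually independent, so Chernoff (with $\delta = 1$ and $\mu' = \Mtwo/2$) gives $\Pr[M_2(u,v) \geq \Mtwo] \leq \exp(-\Mtwo/6)$, which is $n^{-\omega(1)}$ because $\Mtwo = \Omega(\log n)$, and a union bound over the $\binom{n}{2}$ pairs closes the lemma. I expect the off-diagonal expectation bound to be the one real obstacle: the dyadic decomposition has to invoke \LCD precisely inside its valid regime (radii at most $\polylog(n)$) while separately handling the outer tail. The acceptance argument is purely combinatorial, the diagonal expectation is a standard small-world computation, and Chernoff plus the union bound are routine once the first moment is under control.
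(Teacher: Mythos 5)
Your proposal is correct in outline and follows the same skeletal strategy — near-uniform density for acceptance, a first-moment bound with annulus decomposition for rejection, \LCD to kill cross-category common neighbors, and Chernoff plus a union bound. But the execution of the rejection decomposition is genuinely different from the paper's, and each choice has consequences. The paper defines, for each ordered pair of categories $(i,i')$, the set $S_{i,i'}$ of nodes with a category-$i$ edge to $u$ and a category-$i'$ edge to $u'$, bounds each $|S_{i,i'}|$ by Chernoff separately, and union-bounds over $\NUMCAT^2$ pairs. Its ring decomposition starts at radius $\prunedR/2$ (not $0$), handles all $(i,i')$ identically, bounds $|R_j \cap R'_{j'}|$ by the trivial density bound $O((\min(r,r'))^{\DIM})$, and invokes \LCD \emph{only once}, for the single pair $R_0 \cap R'_0$. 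The whole thing then closes via the free constant in $\prunedR = \Theta(\localR\NUMCAT^{2/\DIM})$. You instead split diagonal from off-diagonal, apply \LCD to \emph{every} small annulus pair in the off-diagonal case, and control the large-radius tail with a $\polylog(n)^{-\DIM}$ suppression argument. That route works, but it needs one point you flagged but did not fully discharge: the cutoff radius $R_{\max}$ must satisfy $R_{\max}^{\DIM} \gg \kSW\NUMCAT^2$ (so that $\CSW\kSW^2 R_{\max}^{-\DIM} \ll \CSW\kSW/\NUMCAT^2$) while still lying in the regime $R_{\max} = O(\polylog n)$ where \LCD is stated to hold; for $\DIM=1$ in particular this requires the polylog exponent in \LCD to strictly dominate that of $\kSW$. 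The paper's approach sidesteps this tension entirely because it never needs \LCD beyond the innermost $\prunedR$-scale ball, at the cost of tuning the constant in $\prunedR$ — so it is somewhat cleaner, while yours uses \LCD more aggressively and factors the category pairs differently. One small omission on the acceptance side: the paper randomly partitions $\ESW$ across stages (so edges to the common ball are present only with constant probability, and Chernoff is still needed for acceptance), whereas you treat the short edges as deterministically present; this is fine for the undivided $\ESW$ but should be reconciled with the partitioning the algorithm actually performs.
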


\begin{proof}
The proof is based on a careful decomposition of the metric space into
intersections of rings around $u$ and $u'$, allowing a sufficiently
accurate estimate of the number of their common neighbors.

We begin by proving the positive (acceptance) part.
If $u,u'$ are at distance $\D[i](u,u') \leq \localR$,
then they are close enough such that the balls
\Ball[i]{u}{\localR} and \Ball[i]{u'}{\localR}
overlap in a (dimension-dependent) constant fraction of their nodes.
Counting the size of this overlap, and using that
$\localR = \Theta((\kSW \CSW)^{1/\DIM})$, we get that
\begin{align*}
|\Ball[i]{u}{\localR} \cap \Ball[i]{u'}{\localR}|
& \geq \; \Omega(2^{-\DIM} |\Ball[i]{u}{\localR}|) \\
& \geq \; \Omega(2^{-\DIM} \Theta((\kSW \CSW)^{1/\DIM})^{\DIM}) \\
& \geq \; \Omega(\kSW \CSW),
\end{align*}
for a sufficiently large constant in the definition of \localR.
In the original model, each edge between $u$ or $u'$ and a node in
$\Ball[i]{u}{\localR} \cap \Ball[i]{u'}{\localR}$ is present with
probability 1.
Even if the edge set is randomly partitioned into a constant
number of edge sets for the different stages of the algorithm,
both $u$ and $u'$ will have edges to each node in
    $\Ball[i]{u}{\localR} \cap \Ball[i]{u'}{\localR}$
independently with constant probability.
An application of the Chernoff Bound therefore guarantees that
$\Mtwo(u,u') > \Omega(\kSW \CSW)$
with high probability, and $\Mtwo = \Omega(\kSW \CSW)$
for a suitably chosen constant.

For the second part of the lemma (rejection), fix two nodes $u,u'$
such that $\D[i](u,u') > \prunedR$ for all categories $i$.
Consider two categories $i,i'$ ($i=i'$ is possible), and define
$S_{i,i'}$ to be the set of all nodes $v$ such that
$(u,v) \in \ESW[i]$ and $(u',v) \in \ESW[i']$.
We prove a high-probability bound of
$O(\CSW \kSW/\NUMCAT^2)$ on $|S_{i,i'}|$ for a suitably small
(absolute) constant in the $O(\cdot)$.
A union bound over all $\NUMCAT^2$ pairs $i,i'$ then
implies the claim.

We define a sequence of concentric rings of exponentially increasing
radius around $u$, as follows:
\begin{align*}
R_0 &= \Ball[i]{u}{\prunedR/2} \\
R_j &= \Ball[i]{u}{2^{j/\DIM} \cdot \prunedR/2}
            \setminus \Ball[i]{u}{2^{(j-1)/\DIM} \cdot \prunedR/2} \\
    &= \Set{v}{\D[i](u,v) \in
            (2^{(j-1)/\DIM}\cdot \prunedR/2,\; 2^{j/\DIM}\cdot \prunedR/2)},
            \; \text{for each } j\geq 1.
\end{align*}
So $R_j$ is the set of nodes at distance roughly $2^{j/\DIM} \cdot \prunedR/2$ from $u$ in category $i$.
Likewise, we define the concentric rings around $u'$, with respect to
category $i'$:
\begin{align*}
R_0 &= \Ball[i']{u'}{\prunedR/2} \\
R_j &= \Ball[i']{u'}{2^{j/\DIM} \cdot \prunedR/2}
            \setminus \Ball[i']{u'}{2^{(j-1)/\DIM} \cdot \prunedR/2}
            \; \text{for each } j\geq 1.
\end{align*}

The rings $\{R_j\}_{j \geq 0}$ form a disjoint cover of $V$, as do the rings
$\{R'_j\}_{j \geq 0}$. To bound the size of $S_{i,i'}$, we bound
$S_{i,i'} \cap R_j \cap R'_{j'}$ for all $j,j' \geq 0$.

First consider the case $j=j'=0$.
For $i = i'$, $R_0$ and $R'_0$ are disjoint by definition, and for
$i \neq i'$, the \LCD condition ensures that
$|R_0 \cap R'_0| = O(\log n)$.

Next, we consider the case $j \geq j',\; j \geq 1$.
(The case $j'\geq j,\; j' \geq 1$ is symmetric.)
We write $r = 2^{j/\DIM} \cdot \prunedR/2$ and $r' = 2^{j'/\DIM} \cdot \prunedR/2$.
By definition of the edge generation model, the probability that
$v \in R_j$ has an edge to $u$ in $\ESW[i]$ is at most
$\CSW \kSW (r/2^{1/\DIM})^{-\DIM} = 2\, \CSW \kSW r^{-\DIM}$,
while the probability that $v\in R'_{j'}$ has an edge to $u'$ in \ESW[i'] is at most
$2\, \CSW \kSW (r')^{-\DIM}$, or at most $1$ if $j'=0$.
The presence of these edges is independent of one another.
Because $R_j \cap R'_{j'}$ is contained in $\Ball[i']{u'}{r'}$,
it can contain at most
$\densityK (r')^{\DIM} = O((r')^{\DIM})$ nodes.%
\footnote{Recall that we include \densityK terms in $O(\cdot)$.}
Thus, both for the case $j'=0$ and $j' > 0$, we obtain that
\begin{align*}
\Expect{\,|S_{i,i'} \cap R_{j} \cap R'_{j'}|\,}
& \leq O\left((\CSW \kSW)^2 \; r^{-\DIM} (r')^{-\DIM} (r')^{\DIM} \right)\\
& \leq O\left((\CSW \kSW)^2 \; (2^{j/\DIM} \cdot \prunedR/2)^{-\DIM}\right)\\
& \leq O\left((\CSW \kSW)^2\; 2^{\DIM}\, \prunedR^{-\DIM}\, \cdot 2^{-j} \right).
\end{align*}
We now first sum over all $j \geq j'$ (using that
$\sum_{j \geq j'} 2^{-j} = O(2^{-j'})$), and then over all
$j'$, to obtain that
\begin{align*}
\sum_{j,j':\; j+j' > 0} \Expect{\,|S_{i,i'} \cap R_{j} \cap R'_{j'}|\,}
\leq O((\CSW \kSW)^2\, 2^{\DIM}\, \prunedR^{-\DIM}).
\end{align*}
By choosing $\prunedR = \Theta(\localR \NUMCAT^{2/\DIM})$ with a
suitably large (absolute) constant, we can cancel out the $2^{\DIM}$
term and obtain an arbitrarily small absolute constant $\gamma$ in the
$O(\cdot)$ term.
Recalling that $\localR = \Theta((\CSW \kSW)^{1/\DIM})$
and adding the at most $O(\log n)$ nodes (with some absolute constant)
in $S_{i,i'} \cap R_0 \cap R'_0$, we see that
\[ \Expect{|S_{i,i'}|} \leq O(\gamma \CSW \kSW/\NUMCAT^2) + O(\log n).\]

Applying Chernoff Bounds, we obtain that with high probability,
$$|S_{i,i'}| = O(\gamma\, \CSW \kSW/\NUMCAT^2 + \log n),$$
and a union bound over all $i,i'$ now shows that with high probability we have
\[ \Mtwo(u,v) = O(\gamma\, \CSW \kSW + \NUMCAT^2 \log n) < \Mtwo\]
(when $\CSW \kSW$ is large enough and $\gamma$ small enough),
which means that $(u,v)$ will be rejected.
\end{proof}

For the remainder of this section, we condition on the high
probability event of Lemma \ref{lm:two-hop}, i.e., we assume that
\prunedE contains all edges of length at most \localR (in at least one
category) and no edges whose length would exceed \prunedR in all
categories.

Notice that in the single-category case ($\NUMCAT = 1$), the result of
Lemma \ref{lm:two-hop} by itself already gives an expansion of
$\prunedR/\localR = \Theta(1)$, no contraction,
and additive error $\polylog(n)$.
We simply estimate $\D(u,v)$ by the length of the shortest $u$-$v$
path in the pruned graph, multiplied by $\prunedR$.
Lemma~\ref{lem:shortest-paths} analyzes the distortion for a single category,
and will also be used for the multi-category case.
The lemma requires the unit-disk graph to be a good
approximation of the metric space, a property that is obvious for
near-uniform density sets in $\R^d$.

\begin{lemma} \label{lem:shortest-paths}
Let $(V,\D)$ be a metric space. Let $G$ be a graph on $V$ that
includes all node pairs at distance at most $r$ and no node pairs at
distance more than $r'$, for some $r' > r \geq 1$.
Let \D[G] be the shortest-paths metric of $G$.
Let \Dsp be the shortest-paths metric of the unit disk graph on
$(V,D)$, and assume that $\Dsp (u,v) \leq c\,\D(u,v)$ for all node
pairs $(u,v)$, for some constant $c$. Then
\[
    \D(u,v) \; \leq \; r'\cdot \D_G(u,v) \;
    \leq \; \tfrac{cr'}{r} \cdot \D(u,v) + r'.
\]
In words, $r'\cdot \D_G$ reconstructs \D with expansion
$\tfrac{cr'}{r}$, no contraction, and additive error $r'$.
\end{lemma}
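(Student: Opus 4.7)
The plan is to prove the two inequalities separately, each by a short direct argument, with the only nontrivial ingredient being a ``chunking'' of a $\Dsp$-shortest path into hops short enough to appear as edges in $G$.

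For the lower bound $\D(u,v) \leq r' \cdot \D_G(u,v)$, I will use the triangle inequality in $\D$ together with the hypothesis that $G$ contains no edges between pairs at distance more than $r'$. Concretely, any edge $(x,y)\in E(G)$ satisfies $\D(x,y)\leq r'$, so a shortest $u$--$v$ path in $G$ with $\D_G(u,v)$ edges corresponds to a walk in $\D$ of total length at most $r'\cdot \D_G(u,v)$, and the triangle inequality in $\D$ finishes this direction.

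For the upper bound $r'\cdot \D_G(u,v) \leq \tfrac{cr'}{r}\D(u,v)+r'$, the key idea is that the hypothesis $\Dsp(u,v)\leq c\,\D(u,v)$ gives a unit-disk-graph path $u=v_0,v_1,\dots,v_k=v$ with $k\leq c\,\D(u,v)$. I will then subsample this path by taking every $r$-th vertex: set $w_j = v_{jr}$ for $j=0,1,\dots,\lfloor k/r\rfloor$, and $w_{\lceil k/r\rceil}=v$. Between two consecutive $w_j,w_{j+1}$ there are at most $r$ unit-length hops in the unit disk graph, so by the triangle inequality $\D(w_j,w_{j+1})\leq r$. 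Since $G$ by hypothesis contains every node pair at distance at most $r$, each $(w_j,w_{j+1})$ is an edge of $G$. Hence
\[
\D_G(u,v) \;\leq\; \lceil k/r\rceil \;\leq\; \tfrac{c\,\D(u,v)}{r} + 1,
\]
and multiplying by $r'$ yields the claimed upper bound.

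There is really no substantive obstacle: the only point deserving a small amount of care is the subsampling step, where one must verify that $r\geq 1$ guarantees the chunked vertices $w_j$ are well-defined along an integer-hop path (this is exactly why the hypothesis $r\geq 1$ appears), and that the last partial chunk contributes the ``$+1$'' which becomes the additive $r'$ term after rescaling. The final statement about ``expansion $cr'/r$, no contraction, additive error $r'$'' is then just a restatement of the two-sided bound in the language of distortion defined earlier in the paper.
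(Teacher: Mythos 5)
Your proof is correct and takes essentially the same approach as the paper: triangle inequality for the non-contraction bound, and subsampling a $\Dsp$-shortest path into $r$-hop chunks (each of which is an edge of $G$) to get $\D_G(u,v) \leq \lceil \Dsp(u,v)/r\rceil \leq c\,\D(u,v)/r + 1$. The only minor cosmetic point is that $r$ need not be an integer, so one should formally subsample at indices that are multiples of $\lfloor r\rfloor$ (which is $\geq 1$), but this does not change anything and the paper glosses over it identically.
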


\begin{proof}
Fix a node pair $(u,v)$, and let $\rho$ be a shortest $u$-$v$ path in
$G$.
By the triangle inequality, $\D(u,v)$ is a lower bound on the
total metric length of $\rho$, which in turn is at most
$r'\, \D_G(u,v)$, because each hop in $G$ has length at most $r'$.
So $\D(u,v) \leq r'\,\D_G(u,v)$.
Now, let $P$ be a shortest $u$-$v$ path in \Dsp.
Any two nodes on $P$ that are within $r$ hops from
one another are connected by an edge in $G$.
Therefore, $G$ contains a $u$-$v$ path of at most
$\cel{\frac{|P|}{r}}$ hops, which implies that
    $\D_G(u,v) \leq \cel{\frac{\Dsp(u,v)}{r}} \leq 1+\tfrac{c\,\D(u,v)}{r}$.
\end{proof}

\subsection{\Amoeba{} stage: mapping edges to categories}
\label{sec:amoeba-alg}

We now define the \Amoeba stage of the algorithm.
The \Amoeba stage consists of \NUMCAT iterations $i=1, \ldots, \NUMCAT$:
in each successive iteration $i$, a new category is identified
(and re-numbered as category $i$),
and some edges in $\prunedE$ are mapped to this category.
These edges constitute the edge set $\amoebaE[i]$.
Eventually, each edge $e\in\prunedE$ is mapped to at least one category.

The \Amoeba stage is summarized in Algorithm~\ref{alg:amoeba}.
Each iteration $i$ consists of an \emph{initialization phase},
in which we find a suitable clique in $\prunedE$,
and a \emph{growth phase}, in which we grow $\amoebaE[i]$ one edge at a time.
We think of this process as \emph{growing the amoeba}.

In Algorithm~\ref{alg:amoeba} and the subsequent analysis thereof,
we use the following notation.
For a subset $S\subseteq V$, let $\diam_j(S)$ be its diameter in \amoebaE[j].
Let \NeighB{v}{E} denote the (1-hop) neighborhood of node $v$ in the
edge set $E$. We call the clique $C$ from iteration $i$ the \emph{seed
  clique} for category $i$. The condition~\eqref{eq:amoeba-test} is
called the \emph{\amoebaTest:} more precisely, edge $(u,v)$ passes the
test if and only if~\eqref{eq:amoeba-test} is satisfied.

\begin{algorithm}[htb]

{\bf Output.} Estimated social distance $\D'_i$, for each category
$i=1, \ldots, \NUMCAT$.

{\bf Parameters}. Numbers $(\Mtwo,\amoebaM,\amoebaN,\amoebaR)$.

\vspace{2mm}

{\bf Pruning Stage}.
Let $\Mtwo(u,u')$ be \# common neighbors of $u$ and $u'$ in \ESW.
    \[ \prunedE \leftarrow \{ (u,u')\in V\times V:\; \Mtwo(u,u') \geq \Mtwo \}.\]

\vspace{2mm}

{\bf Amoeba Stage}. For each iteration $i=1, \ldots, \NUMCAT$,
\begin{enumerate}
\item \emph{Initialization phase.}
Find any clique $C \subseteq V$ in \prunedE such that
    $|C|\geq \amoebaN$,
and
    $\diam_j(C) \geq \log^2(n)$
for each category $j=1, \ldots, i-1$.\\
\asedit{If such $C$ does not exist, $\mathtt{halt}$.}
Initialize $\amoebaE = C\times C$.


\item \emph{Growth phase.}
While there exists an edge $(u,v) \in \prunedE \setminus \amoebaE$ such that
\begin{align}\label{eq:amoeba-test}
\text{\ESW contains at least \amoebaM edges between $u$ and \NeighB{v}{\amoebaE},}
\end{align}
pick any such edge and insert it into \amoebaE.

\item Set $\amoebaE[i] = \amoebaE$.\\
Let $\D'_i$ be the shortest-paths metric of $\amoebaE[i]$, multiplied by $\amoebaR$.
\end{enumerate}

{\bf Notation.} Recall that $\diam_j(S)$ is the diameter of a subset
$S\subseteq V$ in \amoebaE[j], and \NeighB{v}{E} denotes the (1-hop)
neighborhood of node $v$ in the edge set
$E$. Condition~\eqref{eq:amoeba-test} is called the
\emph{\amoebaTest}.
\caption{The \Amoeba algorithm.}
\label{alg:amoeba}
\end{algorithm}

The \Amoeba stage is parameterized by numbers $(\amoebaM,\amoebaN,\amoebaR)$. We set
    $\amoebaN = \Theta((\localR/2)^{\DIM})$
and
    $\amoebaM = \Theta(\amoebaN/(8^{\DIM} \NUMCAT^2))$
for suitable constants in $\Theta(\cdot)$. We define
    $\amoebaR = \amoebaC \cdot \NUMCAT^{3/d} \cdot \prunedR$
for a sufficiently large absolute constant \amoebaC,
and call it the \emph{amoeba radius}.\footnote{Recall that
    $\kSW \CSW = \Omega(16^{\DIM} \NUMCAT^3 \log n)$
with a sufficiently large constant. In particular, if
    $\kSW \CSW = \Theta(16^{\DIM} \NUMCAT^3 \log n)$,
then the parameters are
    $\amoebaN = \Theta(8^{\DIM} \NUMCAT^3 \log n)$,
    $\amoebaM = \Theta(\NUMCAT \log n)$
and
    $\amoebaR = \Theta(\NUMCAT^8 \log n)^{1/d}$.}

\subsection{Analysis of the \Amoeba stage}
\label{sec:amoeba-analysis}

An edge $(u,v)\in \prunedE$ is called \emph{$i$-long} if
$\D[i](u,v) > \amoebaR$, and \emph{$i$-short} if
$\D[i](u,v) \leq \localR$.
An edge set $\amoebaE \subseteq \prunedE$ is an
\emph{$i$-amoeba} iff $(V,\amoebaE)$ contains no $i$-long edges, and
it contains a clique of at least \amoebaN nodes whose category-$i$
diameter is at most $4 \prunedR$.

The high-level outline of the correctness proof for the \Amoeba stage
is as follows. We will prove by induction on $i$ that each
edge set \amoebaE[i] captures (at least) all $i$-short edges
(renumbering the categories appropriately), and does not include
any $i$-long edges.

The induction step requires that the algorithm be able to reconstruct
another category $i$ while there is an uncovered edge.
Thereto, we show that \amoebaE remains an $i$-amoeba throughout
the algorithm. We break the induction step into multiple
lemmas capturing the following \asedit{three} key points:

\begin{itemize}
\item \asedit{All edges in any sufficiently large clique have
    sufficiently small length in at least one category $i$}
\asedit{(Lemma \ref{cl:clique-diam}).}


\item No $i$-long edge passes the \amoebaTest
\asedit{(Lemma \ref{lm:amoeba-step}).}

\item While there is an $i$-short edge not yet added to \amoebaE, at
least one such edge passes the \amoebaTest.
\asedit{(Lemma \ref{lm:edge-progress}).}

\end{itemize}

\begin{lemma}\label{cl:clique-diam}
Let $C$ be a clique in \prunedE of size
   $|C| > \Omega(\NUMCAT^3 \log n)$,
for a sufficiently large constant in $\Omega(\cdot)$.
Then, with high probability, there exists a category $i$ such that
$\D[i](u,v) \leq 4 \prunedR$ for all $u, v \in C$.
\end{lemma}

\begin{proof}
Fix an arbitrary $w \in C$. Because each edge $(u,v) \in \prunedE$
satisfies $\D[i](u,v) \leq \prunedR$ for some category $i$,
there is a category $i$ such that for at least $|C|/\NUMCAT$ nodes
$v \in C$, we have $\D[i](w,v) \leq \prunedR$.
Fix such a category $i$, and let $S$ be the set of all
$v \in C$ with $\D[i](w,v) \leq \prunedR$.
If $S = C$, then we are done.

Otherwise, consider a node $u \in C \setminus S$.
For each node $v \in S$, there is a category
$i'$ with $\D[i'](u,v) \leq \prunedR$. In particular, there must be a
category $i'$ such that $\D[i'](u,v) \leq \prunedR$ for at least
$|C|/\NUMCAT^2 > \Omega(\log n)$ nodes $v \in S$, with a large enough
constant in $\Omega(\cdot)$.
Fix such a category $i'$, and let $S'$ be the set of nodes
$v \in S$ with $\D[i'](u,v) \leq \prunedR$.
Because
$S' \subseteq \Ball[i]{w}{\prunedR} \cap \Ball[i']{u}{\prunedR}$,
the assumption $i' \neq i$ would contradict the \LCD condition.
Hence $i'=i$, and $u$ is at distance at most $2\prunedR$ from $w$ in
category $i$.
Since this argument holds for every $u \in C \setminus S$,
we have proved that $C$ has diameter at most $4\prunedR$ in
category $i$.
\OMIT{
The preceding argument gives a diameter bound of at most
$4 \prunedR$. We can improve it slightly as follows.
Suppose that there were a node pair $u,u'$ of distance more than
$2\prunedR$ in category $i$. In particular, for each node $v \in C$,
at least one of $u,u'$ must have distance more than \prunedR in
category $i$. But then, at least one of $u,u'$ (w.l.o.g., $u$) must
have $\D[i](u,v) > \prunedR$ for at least $|C|/2$ nodes $v \in C$;
hence, for some $i'$, there must be at least $|C|/(2\NUMCAT)$ of these
nodes for which $\D[i'](u,v) \leq \prunedR$. This implies that
$\Ball[i]{w}{2\prunedR} \cap \Ball[i']{u}{\prunedR}
\supseteq C \cap \Ball[i']{u}{\prunedR}$ has size at least
$|C|/(2\NUMCAT)$, contradicting the \LCD condition.}
\end{proof}

\begin{note}{Remark.}
Lemma \ref{cl:clique-diam} can be restated as saying that for any
edge-coloring of a sufficiently large clique that is consistent with
the \LCD condition\footnote{Reformulated in terms of edge colorings,
the \LCD states that two balls with respect to edges of colors
$i \neq i'$, each of radius $\polylog(n)$,
overlap in at most $O(\log n)$ nodes.},
there is a color $i$ such that the set of edges of color $i$ has
diameter at most 4. Without the \LCD condition, this statement
is false in general for $\NUMCAT \geq 3$.
For a simple counter-example, consider a clique $C$ whose nodes are
partitioned into three sets $C_1,C_2,C_3$ so that color
$i\in \{1,2,3\}$ is assigned to all edges with both endpoints in $C_i$
and to all edges with neither endpoint in $C_i$.
Then, the edge set corresponding to any one color $i$ is not even
connected.
For $\NUMCAT=2$, there is a simple combinatorial proof that does not
involve \LCD.
\end{note}

\asedit{
\begin{claim}\label{cl:clique-large-diameter}
Let $C$ be a clique in \prunedE of size
   $|C| > \Omega(\NUMCAT^3 \log n)$,
for a sufficiently large constant in $\Omega(\cdot)$.
Let $i$ be a category such that
$\D[i](u,v) \leq 4 \prunedR$ for all $u, v \in C$;
the existence of such a category is guaranteed by Lemma \ref{cl:clique-diam}.
Then,  w.h.p., the diameter of $C$ in $\D[j]$ is at least $\amoebaR \log^2(n)$
for all categories $j \neq i$.
\end{claim}

\begin{proof}
For any $j \neq i$, the \LCD condition condition implies that
$|\Ball[j]{u}{\amoebaR \cdot \log^2(n)} \cap B| \leq O(\log n)$.

Thus, there is at least one
$v \in C \setminus \Ball[j]{u}{\amoebaR \cdot \log^2(n)}$.
The lemma follows.
\end{proof}
}



\OMIT{ 
However, for the special case $\NUMCAT = 2$, a simple combinatorial
proof gives a constant diameter in at least one category, for any edge
coloring. For any node $u$, let $R(u)$ and $Y(u)$ be the neighbors of
$u$ along blue and yellow edges respectively (with $u \in R(u), u \in Y(u)$).
If for all $u,v$, we have $R(u) \cap R(v) \neq \emptyset$, then the
set of red edges has diameter at most 2.
Otherwise, the edge $(u,v)$ is yellow and $Y(u) \cup Y(v) = C$,
implying that the set of yellow edges has diameter at most 3.
} 

\begin{lemma}\label{lm:amoeba-step}
Assume that $\amoebaE \subseteq \prunedE$ contains no $i$-long edge,
and let $u,v$ be nodes with $(u,v) \in \prunedE$ and $\D[i](u,v) > \amoebaR$.
Then, with high probability, $(u,v)$ does not pass the \amoebaTest.
\end{lemma}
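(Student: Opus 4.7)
The plan is to bound the expected number of $\ESW$ edges between $u$ and $W := \NeighB{v}{\amoebaE}$ by a small fraction of $\amoebaM$, and then invoke Chernoff Bounds. As noted at the start of Section~\ref{sec:amoeba}, the algorithm uses independent random partitions of $\ESW$ across stages, so the edges counted in the present \amoebaTest are independent of the edges that defined $\amoebaE$ (hence $W$) in earlier steps. Consequently, $W$ may be treated as fixed, and the $\ESW$-edges from $u$ to $W$ form a sum of independent Bernoulli indicators. By the hypothesis that $\amoebaE$ contains no $i$-long edge, $\D[i](v,w) \leq \amoebaR$ for every $w \in W$, so $W \subseteq \Ball[i]{v}{\amoebaR}$—a polylog-radius ball in category~$i$.

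Decompose the expected count by category of origin: $\Expect{|\{w \in W : (u,w) \in \ESW\}|} \leq \sum_{i''=1}^{\NUMCAT} \Expect{N_{i''}}$, where $N_{i''}$ counts $\ESW[i'']$-edges from $u$ to $W$. For the cross-category terms $i'' \neq i$, the key tool is \LCD: since $W$ sits in a polylog ball of category~$i$, it meets any polylog ball of category~$i''$ in at most $O(\log n)$ nodes. A ring decomposition around $u$ in category~$i''$ at geometrically increasing radii, combined with the edge probability $O(\CSW \kSW \, r^{-\DIM})$ at distance~$r$, gives an aggregate contribution of $O(\polylog n)$ from inner rings (radius $\leq \amoebaR$) via \LCD, and $O(\CSW \kSW \cdot (\localR/\amoebaR)^{\DIM})$ from outer rings via geometric summation bounded by the ambient count $|W| \leq O(\amoebaR^\DIM)$. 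Choosing the absolute constant $\amoebaC$ in $\amoebaR = \amoebaC \cdot \NUMCAT^{3/\DIM} \prunedR$ sufficiently large yields a bound of $O(\amoebaM/\NUMCAT)$ per cross-category.

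The same-category term $i'' = i$ is the main technical obstacle, since \LCD gives no nontrivial information when both balls lie in category~$i$. Here I rely solely on $\D[i](u,v) > \amoebaR$ combined with $W \subseteq \Ball[i]{v}{\amoebaR}$. A ring decomposition around $u$ in category~$i$ shows that, by the triangle inequality, only rings at radii $r_j \in [\D[i](u,v) - \amoebaR,\, \D[i](u,v) + \amoebaR]$ contribute. Bounding each ring's intersection with $W$ by the Euclidean lens-volume $O(\amoebaR^\DIM)$ and multiplying by the ring's edge probability $O(\CSW \kSW \, r_j^{-\DIM})$, geometric summation gives $\Expect{N_i} \leq O\bigl(\CSW \kSW \cdot \amoebaR^\DIM / \D[i](u,v)^{\DIM}\bigr)$. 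For $\amoebaC$ large enough this is again at most $O(\amoebaM/\NUMCAT)$; the tightest regime is when $\D[i](u,v)$ only marginally exceeds $\amoebaR$, where finer spherical-cap estimates on the lens $\Ball[i]{u}{r} \cap \Ball[i]{v}{\amoebaR}$ are needed to control the contribution of the few rings in the critical range.

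Summing over the $\NUMCAT$ categories gives an overall expected count of at most $\amoebaM/2$, and Theorem~\ref{thm:Chernoff} applied to the independent Bernoulli indicators yields a count below $\amoebaM$ with probability $1 - n^{-c}$; a union bound over all pairs $(u,v)$ concludes the high-probability statement. The main obstacle is the same-category step: the hypothesis $\D[i](u,v) > \amoebaR$ leaves little slack, and the Euclidean-geometric accounting of the intersection $\Ball[i]{u}{r} \cap \Ball[i]{v}{\amoebaR}$ must be tight enough that a single choice of the absolute constant $\amoebaC$ in $\amoebaR$ absorbs all $\DIM$- and $\NUMCAT$-dependent factors uniformly in $\D[i](u,v)$.
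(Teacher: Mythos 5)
Your same‐category analysis has a genuine gap, and the root cause is that you constrain $W=\NeighB{v}{\amoebaE}$ only by $W\subseteq\Ball[i]{v}{\amoebaR}$.  That containment is far too weak.  Crucially, since $\amoebaE\subseteq\prunedE$, every $w\in W$ also satisfies $\D[i'](v,w)\leq\prunedR$ for \emph{some} category $i'$, i.e.\ $W\subseteq\bigcup_{i'}\Ball[i']{v}{\prunedR}$.  Combined with $\D[i](v,w)\leq\amoebaR$ and \LCD, this shows $W$ is \emph{not} a fat ball of radius $\amoebaR$: it is $\Ball[i]{v}{\prunedR}$ plus at most $O(\NUMCAT\log n)$ exceptional nodes (the \LCD bound on $\Ball[i']{v}{\prunedR}\cap\Ball[i]{v}{\amoebaR}$ for each $i'\neq i$).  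This gives $|W|=O(\prunedR^{\DIM}+\NUMCAT\log n)$, smaller than your $O(\amoebaR^{\DIM})$ by a factor of roughly $\amoebaC^{\DIM}\NUMCAT^{3}$, which is exactly the factor you need.  This combinatorial constraint is what the paper exploits, and it is the step your proposal omits.

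Without it, the same‐category term cannot be saved by spherical‐cap geometry.  Concretely: for $w\in\Ball[i]{v}{\prunedR}$, the triangle inequality gives $\D[i](u,w)\geq\D[i](u,v)-\prunedR\geq\amoebaR/2$, so every such node contributes edge probability $O(\CSW\kSW\amoebaR^{-\DIM})$, yielding the crisp bound $O(\CSW\kSW\prunedR^{\DIM}/\amoebaR^{\DIM})=O(\CSW\kSW/(\amoebaC^{\DIM}\NUMCAT^{3}))$, which $\amoebaC$ controls.  By contrast, working only with $W\subseteq\Ball[i]{v}{\amoebaR}$, nodes of $W$ can come arbitrarily close to $u$ when $\D[i](u,v)$ barely exceeds $\amoebaR$, and your ring estimate with a blanket lens bound of $O(\amoebaR^{\DIM})$ does not give the claimed $O(\CSW\kSW\,\amoebaR^{\DIM}/\D[i](u,v)^{\DIM})$ — the geometric sum is dominated by the innermost ring near $r=\D[i](u,v)-\amoebaR$, not by the ring at $r\approx\D[i](u,v)$.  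And even if that bound held, at $\D[i](u,v)\approx\amoebaR$ it is $\Theta(\CSW\kSW)$, which exceeds $\amoebaM=\Theta(\CSW\kSW/(16^{\DIM}\NUMCAT^{2}))$ by a large factor, and increasing the absolute constant $\amoebaC$ inside $\amoebaR$ does not shrink the ratio $\amoebaR^{\DIM}/\D[i](u,v)^{\DIM}$ near 1.  Finally, your opening appeal to independence between stages is not how the paper handles dependence on $\amoebaE$: within the Amoeba stage all tests share one edge set, and the paper sidesteps the issue by counting the \LCD-bounded exceptional nodes in the worst case (pretending every one has an edge to $u$), and applying Chernoff only to edges from $u$ to the fixed set $\Ball[i]{v}{\prunedR}\setminus\bigcup_{j}\Ball[j]{u}{\amoebaR}$, which does not depend on $\amoebaE$ at all.
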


\begin{proof}
We bound the number of edges between $u$ and \NeighB{v}{\amoebaE} in two parts:
by the number of edges between $u$ and \Ball[i]{v}{\prunedR},
and the number of edges between $u$ and
$\NeighB{v}{\amoebaE} \setminus \Ball[i]{v}{\prunedR}$.

First, we claim that
$|\NeighB{v}{\amoebaE} \setminus \Ball[i]{v}{\prunedR}| \leq O(\NUMCAT \log n)$.
The reason is that any node
$w \in \NeighB{v}{\amoebaE} \setminus \Ball[i]{v}{\prunedR}$ must be at distance
at most \prunedR from $v$ in some category $j \neq i$ (because $(v,w) \in \prunedE$),
so $w \in \Ball[j]{v}{\prunedR} \cap \Ball[i]{v}{\amoebaR}$.
Now, the \LCD condition implies that there can be at most $O(\log n)$
such nodes $w$ for any fixed $j$, and thus at most $O(\NUMCAT \log n)$
total.

Next, we consider nodes $w \in \Ball[i]{v}{\prunedR}$.
By the \LCD condition for
$\Ball[i]{v}{\prunedR} \cap \Ball[j]{u}{\amoebaR}$,
there can be at most $O(\log n)$ such nodes $w$ at distance at most
\amoebaR from $u$ in category $j$, for a total of $O(K \log n)$ nodes.

All other nodes $w \in \Ball[i]{v}{\prunedR}$ are at distance at least
\amoebaR from $u$ in all categories $j \neq i$, and at distance
at least $\amoebaR-\prunedR \geq \amoebaR/2$ from $u$ in category $i$.
Thus, the probability for the edge $(u,w)$ to exist in any one
category $j$ is at most
$q = O(\CSW \kSW \amoebaR^{-\DIM})
= O(\CSW \kSW /(\amoebaC^{\DIM} \NUMCAT^3) \cdot \prunedR^{-\DIM})$.
Summing over all $w \in \Ball[i]{v}{\prunedR}$ and all categories
gives us at most
$q \NUMCAT | \Ball[i]{v}{\prunedR} |
= O(\CSW \kSW /(\amoebaC^{\DIM} \NUMCAT^2))$
edges in expectation, and Chernoff Bounds prove concentration.
Adding the at most $O(\NUMCAT \log n)$ edges of the first two types,
and recalling that \amoebaC is a suitably large constant
and $\CSW \kSW = \Omega(\NUMCAT^3 \log n)$ with a large constant,
we see that with high probability, the total number of edges between
$u$ and \NeighB{v}{\amoebaE} is less than \amoebaM, so the edge $(u,v)$ does not
pass the Amoeba Test.
\end{proof}

\begin{lemma} \label{lm:edge-progress}
Let \amoebaE be an $i$-amoeba that does not include all $i$-short
edges. Then, w.h.p., there exists an edge $(u,v) \in \prunedE$ that is
accepted by the Amoeba Test.
\end{lemma}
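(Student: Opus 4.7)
The plan is to use the density of the seed clique to exhibit an edge in $\prunedE \setminus \amoebaE$ that passes the \amoebaTest. Since $\amoebaE$ is an $i$-amoeba, it contains a clique $C$ with $|C| \geq \amoebaN$ whose category-$i$ diameter is at most $4\prunedR$, so $C \subseteq \Ball[i]{c}{4\prunedR}$ for any $c \in C$. I partition this ball into sub-balls of radius $\localR/2$; by near-uniform density, there are at most $O((8\prunedR/\localR)^{\DIM}) = O(8^{\DIM} \NUMCAT^2)$ of them. The parameters $\amoebaN = \Theta((\localR/2)^{\DIM})$ and $\amoebaM = \Theta(\amoebaN/(8^{\DIM} \NUMCAT^2))$ were chosen precisely so that a pigeonhole argument yields a sub-ball $B$ containing a set $S = C \cap B$ with $|S| \geq \amoebaM$. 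The key property of $B$ is that every pair inside it lies at category-$i$ distance at most $\localR$, so the corresponding edges are present in the original $\ESW$ with probability $1$; even after the random partition of $\ESW$ into stage-specific edge sets, each such edge remains in the Amoeba-stage edge set with constant probability.

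From here I proceed by case analysis on how ``saturated'' the region around $B$ is. In the unsaturated case, some pair $(u,v) \in B \times S$ has $(u,v) \notin \amoebaE$; then $(u,v)$ is $i$-short and therefore lies in $\prunedE$ by Lemma~\ref{lm:two-hop}, while $\NeighB{v}{\amoebaE} \supseteq S \setminus \{v\}$ because $v$ is a clique member. The $|S|-1$ edges from $u$ into $S \setminus \{v\}$ are all $i$-short, so a Chernoff bound over the partitioned $\ESW$ yields at least $\amoebaM$ of them in the current stage with high probability, and the test passes. An essentially identical argument handles the subcase in which a missing edge $(u_1,u_2)$ has both endpoints in $B$ but neither in $S$: $S$ still lies in the amoeba-neighborhood of each endpoint, and the $\geq \amoebaM$ edges from $u_1$ to $S$ are all $i$-short.

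The main obstacle is the fully saturated case, where all pairs between $B$ and $S$, and indeed all pairs inside $B$, already belong to $\amoebaE$, so $B$ itself has become a clique in $\amoebaE$ of size $\Theta(\amoebaN)$ and category-$i$ diameter at most $\localR$. Since by hypothesis some $i$-short edge $(u^*,v^*)\notin\amoebaE$ still exists, the plan is to show that the \amoebaTest passes either for $(u^*,v^*)$ itself or for some frontier edge adjacent to $B$. If at least one endpoint, say $v^*$, lies within category-$i$ distance $\localR$ of $B$, then $u^*$ sits at distance at most $2\localR$ from every node of $B \setminus \{v^*\} \subseteq \NeighB{v^*}{\amoebaE}$, and the small-world distribution gives each such candidate edge in category $i$ alone probability $\Omega(2^{-\DIM})$; since $|B|=\Theta(\amoebaN)$ by near-uniform density, the expected number of $\ESW$-edges from $u^*$ into $\NeighB{v^*}{\amoebaE}$ is $\Omega(\amoebaN/2^{\DIM}) \gg \amoebaM$, and Chernoff closes the argument. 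The truly delicate sub-subcase is when the missing edge is far from $B$ in category $i$; there I would exploit that $B$ itself is now a tight clique of size $\Theta(\amoebaN)$ and reapply the pigeonhole/frontier argument inside the amoeba's connected component to locate a dense local cluster near $(u^*,v^*)$. Aligning the constants in $\amoebaM$, $\amoebaN$, $\prunedR$, $\localR$, and $\amoebaR$ so that every Chernoff bound goes through---while preserving independence via the random partition of $\ESW$---is the central technical challenge.
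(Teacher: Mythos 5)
Your opening moves are the right ones and match the paper's: extract from $C$ a sub-ball $B$ of radius $\localR/2$ containing $\Theta(\amoebaN/\NUMCAT^2) \geq \amoebaM$ clique nodes, handle the partitioning of $\ESW$ via constant-probability Chernoff, and note that $i$-short edges are in $\prunedE$. The "unsaturated" case and the "saturated near $B$" case go through essentially as you describe.

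The gap is in the last sub-subcase. When the missing $i$-short edge $(u^*,v^*)$ is far from the amoeba's current extent, your plan---"locate a dense local cluster near $(u^*,v^*)$ inside the amoeba's connected component"---does not work: if $(u^*,v^*)$ lies outside the region the amoeba has reached, then $\NeighB{u^*}{\amoebaE}$ and $\NeighB{v^*}{\amoebaE}$ are empty (or tiny), the \amoebaTest trivially fails for $(u^*,v^*)$, and there is no dense $\amoebaE$-cluster near it to exploit. The edge that passes must instead be at the \emph{frontier} of the amoeba's coverage, not near the missing edge you happened to pick. Your case split is anchored on a single arbitrary missing edge, which is the wrong object; the argument needs to be organized around the growing region.

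The clean fix---which is the paper's actual proof---is to observe that the \amoebaTest is \emph{monotone} in $\amoebaE$ (adding edges can only help), and then to exhibit a fixed \emph{ordering} $e_1,e_2,\ldots$ of all $i$-short edges (and a bit more) such that $e_\ell$ passes the test whenever $C\cup\{e_1,\dots,e_{\ell-1}\}\subseteq\amoebaE$. The ordering is built by induction on the radius $r$: once all edges inside $\Ball[i]{v_0}{r}$ are present, a node $u$ at distance $r+1$ from $v_0$ has a nearby ball $B'=\Ball[i]{w}{\localR/4}$ fully inside the current amoeba, and Chernoff gives enough $\ESW$-edges from $u$ to $B'\subseteq\NeighB{v}{\amoebaE}$ for every $v\in B'$; one then fills in the remaining short edges inside $\Ball[i]{v_0}{r+1}$. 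Given such an ordering, one applies the lemma to the lowest-indexed $e_\ell\notin\amoebaE$ and uses monotonicity. Your proposal contains the local mechanism for a single frontier step but lacks the monotonicity observation and the global ordering that tie the steps together, which is exactly the part you flagged as "the central technical challenge."
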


\begin{proof}
First notice that because the Amoeba Test only counts edges from $u$
to a neighborhood of $v$, it is monotone in the following sense:
if the edge $e$ passes for some current edge set \amoebaE, then it
also passes for any $\amoebaE' \supseteq \amoebaE$.
We will define an ordering $e_1, e_2, \ldots$ of all edges in category
$i$ such that with high probability, $e_\ell$ will pass the Amoeba
Test whenever $C \cup \SET{e_1, \ldots, e_{\ell-1}} \subseteq \amoebaE$.
Thus, \Amoeba, starting from $C$, can always make
progress when considering the lowest-numbered edge $e_\ell$ not yet included.
(Notice that this does not require the algorithm to actually know the ordering.)

Let $C$ be the clique in $(V,\amoebaE)$ of size at least \amoebaN
whose existence is guaranteed by the definition of an $i$-amoeba.
$C \subseteq \Ball[i]{w}{2 \prunedR}$ for some $w$, and
\Ball[i]{w}{2 \prunedR} can be covered by
$O((\prunedR/\localR)^{\DIM}) = O(\NUMCAT^2)$ balls of radius $\localR/2$,
at least one of which must therefore contain a sub-clique
$C' \subseteq C$ of at least $\amoebaN/\NUMCAT^2$ nodes.
Let $v_0$ be the center of such a ball \Ball[i]{v'}{\localR/2}.

First, all edges between $u \in \Ball[i]{v_0}{\localR/2}$
and $v \in C'$ will pass the Amoeba Test, because
$(u,w)$ is $i$-short for all $w \in C' \subseteq \NeighB{v}{\amoebaE}$
(implying that the edge $(u,w)$ is in \prunedE),
and $|C'| \geq \amoebaN/\NUMCAT^2 \geq \amoebaM$.

Second, because each $v \in \Ball[i]{v_0}{\localR/2}$
is now connected to all of $C'$ in \amoebaE, the exact same argument
applies to all node pairs $u, v \in \Ball[i]{v_0}{\localR/2}$.

Third, we use induction on $r$, showing that once
all edges in \Ball[i]{v_0}{r} have been included, all edges in
\Ball[i]{v_0}{r+1} will be included next in some order.
For the base case, we use $r=\localR/2$.
Let $u$ be any node in $\Ball[i]{v_0}{r+1}\setminus \Ball[i]{v_0}{r}$,
and $w$ a node ``close to $u$ on the line from $v_0$ to $u$.''
More formally, $w$ is a node with
    $\D_i(v_0,w)\leq r-\localR/4$
and
    $\D_i(u,w)\leq \localR/4 + O(1)$.
The existence of $w$ follows by the near-uniform density assumption.

By near-uniform density, the ball $B' = \Ball[i]{w}{\localR/4}$ contains at
least $\Omega(2^{-\DIM} \amoebaN)$ nodes,
and by induction hypothesis, all nodes of $B'$ are neighbors of
$v$. Furthermore, \ESW contains edges between $u$ and all $w$ with
constant probability, so using Chernoff Bounds, with high probability,
the pair $(u,v)$ will pass the Amoeba Test for all
$v \in B'$, inserting all these edges. Once all $i$-short edges
between $u \in \Ball[i]{v_0}{r+1}$ and $v \in \Ball[i]{v_0}{r}$ have
been inserted, the $i$-short edges between the remaining pairs
$u, v \in \Ball[i]{v_0}{r+1}$ will be inserted by the following
argument. Node $u$ has $i$-short edges to all nodes in $B'$ (which are
already in \amoebaE), so
$\D[i](v,w) \leq 2 \localR$ for all $w \in B'$.
Thus, each edge from $v$ to $w \in B'$ is included with probability at
least $p= \Omega(\CSW \kSW 2^{-\DIM} \localR^{-\DIM})$, and there are
at least $|B'| \geq \Omega(4^{-\DIM} \localR^{\DIM})$ such nodes,
implying that the expected number of edges between $v$ and the
neighborhood of $u$ is at least $\Omega(8^{-\DIM} \CSW \kSW)$.
By Chernoff Bounds, we obtain concentration
results, and because $\amoebaM \leq \Theta(8^{-\DIM} \CSW \kSW)$,
the edge $(u,v)$ will be included with high probability.
\end{proof}

The following theorem combines the previous lemmas and proves the correctness of the \Amoeba algorithm.

\asedit{
\begin{theorem}
The \Amoeba algorithm has exactly \NUMCAT iterations,
and for every category $1 \leq j \leq \NUMCAT$, there exists an
iteration $i$ such that $\amoebaE[i]$ contains all $j$-short
edges and no $j$-long edges.
\end{theorem}

\begin{proof}
We prove by induction that in each iteration $i$, the algorithm finds
a distinct category $j(i)$ such that
$\amoebaE[i]$ contains all $j(i)$-short edges and no
$j(i)$-long edges.
For the base case where $i=0$, the claim is trivial.

Consider an iteration $i$, and let $j$ be a category with
$j \neq j(i')$ for all $i' < i$.
For an arbitrary node $u$, consider the ball
$B = \Ball[j]{u}{\localR/2}$.
Because $\D[j](v,v') \leq \localR$ for all $v, v' \in B$, the set $B$
forms a clique in \prunedE.
Furthermore, because of the near-uniform density of category $j$, $B$ has
$\Theta((\localR/2)^{\DIM}) = \Theta(\CSW\kSW) = \Omega(\NUMCAT^3 \log n)$
nodes, for a sufficiently large constant in the $\Omega(\cdot)$.
Further, fix iteration $i'<i$ and let $j'=j(i')$ be the corresponding category.
By Claim~\ref{cl:clique-large-diameter}, the diameter of $B$ in $\D[j']$ is at least $\amoebaR \log^2(n)$.
By the induction hypothesis, the set $\amoebaE[i']$ contains no $j'$-long edges; therefore
    $\diam_{i'}(B)\geq \log^2(n)$.

Thus, the \Amoeba algorithm is guaranteed to find a seed clique in the
initialization phase. Let $C_i$ be the clique that is actually found.
By Lemma~\ref{cl:clique-diam}, there exists a category $j(i)$
(possibly distinct from $j$) such that
$\D[j(i)](u,v) \leq 4 \prunedR$ for all $u, v \in C_i$.
Moreover, by construction of $C_i$, for every $i' < i$,
the diameter of $C_i$ in \amoebaE[j(i')] is at least $\log^2(n)$;
hence, the diameter of $C_i$ in $\D[j(i')]$ is also at least
$\log^2(n) > 4 \prunedR$, and in particular, $j(i) \neq j(i')$ for all
$i' < i$.
By Lemmas \ref{lm:amoeba-step} and \ref{lm:edge-progress},
\amoebaE[i] contains all $j(i)$-short edges and no $j(i)$-long edges.
\end{proof}
}


The algorithm will thus terminate with $i$-amoebas
$\amoebaE[i], i = 1, \ldots, \NUMCAT$.
The distance $\D_i(u,v)$ is now estimated as the shortest-path
distance between $u$ and $v$ in \amoebaE[i], multiplied by \amoebaR.
By Lemma \ref{lem:shortest-paths}, this gives constant expansion
$\amoebaR/\localR = \Theta(\NUMCAT^{5/\DIM})$, no contraction, and
additive error $\amoebaR$.

\OMIT{ 
If the true distance between $u$ and $v$ is at most
\localR, the estimate of
$\amoebaR = \Theta(\localR \NUMCAT^{5/\DIM}) = \Theta((\NUMCAT^5 \CSW \kSW)^{1/\DIM})$
is off by at most an additive poly-logarithmic term.
If the distance between $u$ and $v$ is at least $\localR$,
then the estimate is off by at most a multiplicative factor
$\amoebaR/\localR = \Theta(\NUMCAT^{5/\DIM})$, i.e., a constant.
\dkcomment{How do we get the multiplicative bound again? Is there a
  lemma written up somewhere?}
}


\subsection{Efficient Implementation of the \Amoeba algorithm}
\label{sec:amoeba-runtime}

We outline how to implement the \Amoeba algorithm in near-linear time.
The first (and perhaps most surprising) step is quickly
finding the seed clique.
Then, we need to execute each \Amoeba step in (amortized)
polylogarithmic time.
The resulting algorithm computes the graph \amoebaE[i] for each
category $i$ in near-linear time.
Recall that $\amoebaE[i]$ is a constant-distortion \emph{spanner} for
$\D_i$, in the sense that its shortest-path metric approximates $\D_i$.
Once we have a spanner, we can compute succinct distance labels by
adapting a hierarchical beaconing technique from prior work on
distance labeling and routing schemes
(e.g.~\cite{gupta:krauthgamer:lee:fractals,chan:gupta:maggs:zhou,slivkins:rings,slivkins:locality-aware}).
We next describe each of these steps in more detail.

\vspace{2ex}

\subsubsection{Finding the seed clique}

By suitably adjusting the threshold $\Mtwo$,
the \SimpleTest can be modified to accept all node pairs that are
within distance $\localR'=3\,\prunedR$ in some category,
and to reject all node pairs that are at distance at
least $\prunedR' = \Theta(K^{2/d}\, \localR')$ in all categories.
We run the \Amoeba algorithm on the pruned graph $\prunedE'$ obtained by
this modified \SimpleTest.
\asedit{Note that the entire analysis still applies if we replace
$\prunedR$ with $\prunedR'$, because the guarantees on the lengths of edges
differs only by constant factors between $E'_{pru}$ and $E_{pru}$.}
Let $\amoebaR'$ be the corresponding Amoeba radius.
To produce the seed cliques for $\prunedE'$, we use
the original \SimpleTest \asedit{with the original $\prunedE$},
in the way described below.

Consider the original \SimpleTest, and let $\prunedE$ be the
corresponding pruned graph.
Let $N(u)$ denote the 1-hop neighborhood of node $u$ in \prunedE,
including $u$ itself.
For a node set $S$, define $N(S)$ to be the
\emph{intersection} $N(S)\triangleq \bigcap_{u\in S} N(u)$.
We focus on such intersections for
node sets $S \subseteq N(u)$ of size
$|S| = \NUMCAT$.

\begin{lemma}\label{lem:find-clique-fast}
For any node $u$ and category $i$, there exists a set
$S \subseteq N(u)$ of size \NUMCAT such that
the intersection $N(S)$ contains at least $\amoebaN$ nodes,
has diameter at most $3\,\prunedR$ in category $i$,
and diameter at least $R = \amoebaR'\,\log^2(n)$ in all other categories.
\end{lemma}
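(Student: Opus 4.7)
The plan is to construct $S$ inside a small category-$i$ ball around $u$, choosing the $\NUMCAT$ points to be spread out in the remaining categories; the spread, combined with \LCD and a pigeonhole, will force $N(S)$ to have small $\D[i]$-diameter, while automatically inheriting the required size and cross-category spread from the ball.

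First I would take $B_0 = \Ball[i]{u}{\localR/2}$. By Lemma~\ref{lm:two-hop}, $B_0$ is a clique in $\prunedE$ containing $u$, so $B_0 \subseteq N(u)$, and near-uniform density gives $|B_0| \geq \amoebaN$. I would then pick $S = \{v_1,\ldots,v_\NUMCAT\} \subseteq B_0$ greedily, at each step $\ell$ requiring that $v_\ell$ avoid $\Ball[j]{v_{\ell'}}{2\prunedR}$ for every earlier $\ell' < \ell$ and every category $j \neq i$. \LCD bounds each forbidden set $\Ball[j]{v_{\ell'}}{2\prunedR}\cap B_0$ by $O(\log n)$, and there are only $O(\NUMCAT^2)$ such constraints in total, leaving room inside $|B_0|=\Omega(\NUMCAT^3 \log n)$. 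The resulting $S$ satisfies $\D[j](v_\ell,v_{\ell'}) > 2\prunedR$ for all $\ell \neq \ell'$ and all $j \neq i$.

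Then I would verify the three properties. The size bound $|N(S)| \geq \amoebaN$ is immediate from $N(S) \supseteq B_0$ (since $B_0$ is a clique and $S \subseteq B_0$). For the $\D[j]$-diameter lower bound ($j \neq i$), if it were less than $R$ then $B_0 \subseteq N(S)$ would fit inside some $\Ball[j]{w}{R}$, but $\Ball[i]{u}{\localR/2} \cap \Ball[j]{w}{R}$ has size $O(\log n)$ by \LCD (both radii are polylog), contradicting $|B_0|=\Omega(\NUMCAT^3 \log n)$. The main step is the $\D[i]$-diameter upper bound, proved by pigeonhole: any $w \in N(S)$ satisfies $(w,v_\ell)\in \prunedE$ for every $\ell$, certifying some category $j_\ell$ with $\D[j_\ell](w,v_\ell) \leq \prunedR$. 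If no $j_\ell$ were $i$, the $\NUMCAT$ indices $j_\ell$ would squeeze into $\NUMCAT-1$ remaining categories, so two indices $\ell_1 \neq \ell_2$ would agree on some $j \neq i$, forcing $\D[j](v_{\ell_1},v_{\ell_2}) \leq 2\prunedR$ and contradicting the construction of $S$. Hence some $j_\ell = i$, and the triangle inequality gives $\D[i](w,u) \leq \prunedR + \localR/2$, placing $N(S) \subseteq \Ball[i]{u}{(3/2)\prunedR}$ with diameter at most $3\prunedR$.

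The main obstacle is this pigeonhole step: it works cleanly only because $|S|$ exactly equals $\NUMCAT$ and the spread condition is enforced in the true metric (not merely in the pruned graph), so that two edge-certificates from the same $w$ cannot both be ``absorbed'' by a single non-$i$ category. Everything else is a routine consequence of \LCD and near-uniform density.
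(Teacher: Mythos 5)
Your proof is correct and follows essentially the same approach as the paper: take $B_0 = \Ball[i]{u}{\localR/2}$ (a clique in $\prunedE$ of size $\geq \amoebaN$), greedily extract a $\NUMCAT$-element set $S\subseteq B_0$ that is $2\prunedR$-separated in every category $j\neq i$ using \LCD, and then argue by pigeonhole that any $w\in N(S)$ must be $\prunedR$-close to some $v\in S$ in category $i$ because the $\NUMCAT-1$ other categories cannot absorb all $\NUMCAT$ closeness certificates. The only cosmetic difference is that you explicitly derive from \LCD the lower bound on the category-$j$ diameter of $B_0$, which the paper simply asserts.
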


\begin{proof}
Let $B = \Ball[i]{u}{\localR/2}$.
We show that there exists a candidate set $S\subseteq B$.
Recall that $B$ induces a clique in the pruned graph \prunedE, so
for any subset $S \subseteq B$, we have $B \subseteq N(S)$.
Since $B$ contains at least $\amoebaN$ nodes and has diameter at least $R$ in
each category $j \neq i$, $N(S)$ inherits these properties.
Thus, it remains to ensure that $N(S)$ has low diameter in category $i$.

We claim that \LCD implies the existence of a subset
$S\subseteq B$ of size $\NUMCAT$,
such that any two nodes in $S$ are at distance at least $2\,\prunedR$
in each category $j\neq i$.
Consider (for the proof only) the following simple algorithm.
The algorithm works with two set-valued variables, $S$ and $U$,
initialized to $S = \emptyset$ and $U = B$.
It runs the following loop $\NUMCAT$ times: pick any node
$v\in U$, add this node to $S$, and remove from $U$ all balls
$\Ball[j]{v}{2\,\prunedR}, j\neq i$.
Clearly, the following invariant is maintained after each iteration:
any two nodes $v \in S, w \in S\cup U$ are at distance at least
$2\,\prunedR$ in any category $j \neq i$.
Therefore, the algorithm finds the desired set $S$ unless $U$ were to
become empty prematurely.
This cannot happen because by \LCD,
$B$ and any $\Ball[j]{v}{2\,\prunedR}, j\neq i$ overlap in at most
$O(\log n)$ nodes, so the cardinality of $U$ decreases by at most
$O(\NUMCAT \log n)$ in each iteration.

Now fix the subset $S$ guaranteed by the previous paragraph.
Consider some node $w\in N(S)$.
For any category $j \neq i$, there can be at most one
node in $S$ within category-$j$ distance $\prunedR$ from $w$.
(If there were two such nodes $v,v'\in S$ then $\D_j(v,v')\leq \prunedR$,
a contradiction.)
It follows that at least one node $v \in S$ is at
distance more than $\prunedR$ from $w$ in \emph{each} category $j\neq i$.
Since the pruned graph $\prunedE$ contains the edge $(v,w)$,
$v$ and $w$ must be close in some category, and we have
proved that they can only be close in category $i$.
Therefore $\D_i(v,w)\leq \prunedR$.
Since $S\subseteq B$, it follows that $\D_i(u,w)\leq \prunedR + \localR/2$.
Therefore, any two nodes in $N(S)$ are at category-$i$ distance at most
$2\,\prunedR + \localR$ from one another.
\end{proof}

For each iteration $i$ of the \Amoeba Stage, we need to find a
seed clique $C$ for $\prunedE'$ such that $|C| \geq \amoebaN$ and
$\diam_j(C)\geq \log^2(n)$, for each category $j < i$.
By Lemma~\ref{lem:find-clique-fast}, one such clique is given by
$N(S)$, for any given node $u$ and some subset $S\subseteq N(u)$
of size \NUMCAT.
Therefore, we can run the original \SimpleTest to obtain the
pruned graph $\prunedE$, pick any node $u$, and iterate through all
$\NUMCAT$-node subsets $S \subseteq N(u)$ until we find a set $S$ such
that $N(S)$ is a clique in $\prunedE'$.
It is easy to see that this approach results in running time
$n \polylog(n)$. In fact, one only needs the initial pruning step to be local to node $u$, so the list of all candidate subsets $N(S)$ can be obtained in $\polylog(n)$ time.

\subsubsection{Efficient implementation of the \Amoeba step}

To implement the \Amoeba step efficiently, we use a queue which
initially contains all edges.
In each \Amoeba step, edges are popped from the queue until one is
found that satisfies Condition~\eqref{eq:amoeba-test} holds.
Once an edge $(u,v)$ satisfies this condition, it is added to the amoeba,
while all its adjacent edges are (re-)enqueued.
Any one edge is adjacent to at most polylogarithmically many other
edges, and can therefore be enqueued
at most polylogarithmically many times.
Thus the entire growth phase of the Amoeba algorithm is
implemented in $n\polylog(n)$ running time.
The following argument shows the correctness of this queue policy:
If an edge $(u,v)$ is checked and does not satisfy
Condition~\eqref{eq:amoeba-test}, then it can satisfy this condition
at some later point
in the execution of the Amoeba algorithm
only if another edge incident to $u$ or $v$ has been
added to the Amoeba, i.e., only if $(u,v)$ is re-enqueued.

\subsubsection{From a spanner to succinct distance labels}

Fix a category $i$. For the remainder of this section, all
  ``balls'' and ``distances'' refer to category $i$.
We use the spanner $\amoebaE = \amoebaE[i]$ produced by the \Amoeba
algorithm to produce \emph{distance labels} for $\D_i$ of polylogarithmic size,
so that for any two nodes $u$, $v$ the distance $\D_i(u,v)$ can be estimated with constant distortion from their labels alone (in polylogarithmic time).

Consider exponentially increasing distance scales $r$.
For each distance scale $r$, pick $k_r$
\emph{scale-$r$ beacon nodes} independently and
uniformly at random;
$k_r$ is chosen so that with high probability, each ball of radius $r$
contains $\Theta(\log n)$ scale-$r$ beacon nodes;
For each scale-$r$ beacon $b$, run a breadth-first search in
$\amoebaE$ for $\Theta(r)$ steps, to compute distance estimates
between $b$ and all nodes within distance $\Theta(r)$ from $b$.
Simple accounting shows that computing the
estimates for all scales and all beacons takes $n\polylog(n)$ time.

Thus, for every given node $u$, we have computed distance estimates
between $u$ and some subset $S_u$ of beacons.
$S_u$ includes all scale-$r$ beacons within distance $\Theta(r)$ from
$u$, for each scale $r$.
Together, these distance estimates constitute $u$'s distance label.
Given the distance labels of two nodes $u$ and $v$, one can
reconstruct the distance estimate for the pair $(u,v)$ by picking
the beacon $b \in S_u \cap S_v$ closest to node $u$, and
using the distance estimate for the pair $(b,v)$ as an estimate for
$(u,v)$.

\section{Improving the distortion for a single category}
\label{sec:additive-singleCat}

Our first improvement is to reduce the distortion from a
multiplicative constant to a factor $1+o(1)$.
In fact, under stronger assumptions on the uniformity of the metric
space, we will be able to reduce the distortion to additively
polylogarithmic. We first show the improvement for a single category,
and discuss the necessary extensions for multiple categories in
Section \ref{sec:additive-mult}.

In trying to improve the distortion beyond a multiplicative constant,
we face an immediate obstacle:
as discussed in Section~\ref{Prelims}, an algorithm can estimate the
normalization constant \CSW and the target degree \kSW only up to
a constant factor.
However, for further improvements of the distortion, more accurate
estimates of \CSW and \kSW appear to be necessary.
In order to side-step this technical obstacle, we define
\emph{normalized distances}
\begin{align}
\Dnorm(u,v) = \D(u,v)/ (\CSW\,\kSW)^{1/\DIM}, \label{eqn:normalized}
\end{align}
and we focus on \Dnorm instead of actual
distances as the quantities to be inferred.

Note that Theorem~\ref{thm:main} can also be interpreted to
yield an estimate \DnormS for \Dnorm which with high probability has
no contraction, constant expansion and $\polylog(n)$ additive error.
In this section, we improve this bound to unit
distortion with sub-linear additive error.

\begin{theorem}\label{thm:simple2ball-singleCat}
Consider a \singleSG of dimension $\DIM$, with
$\CSW \kSW = \Omega(\log n)$ and near-uniform density.
There is a polynomial-time algorithm which, with high probability,
reconstructs each normalized distance $\Dnorm(u,v)$ with additive
error $\pm \Dnorm^{\gamma} \, \log^{O(1)} n$,
    where $\gamma = \tfrac{\DIM+2}{2\DIM+2}$.
The algorithm runs in polynomial time.
\end{theorem}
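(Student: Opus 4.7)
\medskip

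The plan is to implement the \TwoBallTest as a post-processing step on top of the output of Theorem~\ref{thm:main}. From Theorem~\ref{thm:main} (single category), we obtain an initial estimate $\DnormS$ that with high probability satisfies $\Dnorm \le \DnormS \le O(\Dnorm) + \polylog(n)$. The idea is to use $\DnormS$ to define ``approximate balls'' around each node, and then use a fresh portion of the edge set (reserved by partitioning $\ESW$ at the outset) to count links between these balls, thereby extracting a much sharper distance estimate from the edge distribution itself.

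\medskip

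Concretely, fix a pair $(u,v)$, and let $D = \Dnorm(u,v)$. Choose a \emph{test radius} $\hat r = \hat r(D)$ in normalized units, to be optimized below, and form the balls $B_u = \NDBall{u}(\hat r)$, $B_v = \NDBall{v}(\hat r)$ computed from $\DnormS$. By near-uniform density, each ball has cardinality $\Theta(\hat r^{\DIM}\,\CSW\kSW)$, and every node in $B_u$ is within normalized distance $(1\pm o(1)) D$ of every node in $B_v$, with additive slack $O(\hat r)$. Let $X(u,v)$ count the number of edges of the reserved edge set that go between $B_u$ and $B_v$ and whose endpoints are confirmed to be ``long-range'' (say, $\DnormS$ above some threshold, so we use only edges from the power-law tail). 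Since each such edge is present independently with probability $\Theta(\CSW\kSW\,\D^{-\DIM})$ for the true pairwise distance $\D$, we get
\begin{equation*}
\Expect{X(u,v)} \;=\; \Theta\!\left(|B_u|\cdot|B_v|\cdot (\CSW\kSW)\cdot D^{-\DIM}\cdot(\CSW\kSW)^{-1}\right)
\;=\; \Theta\!\left(\hat r^{2\DIM}\,\CSW\kSW\,D^{-\DIM}\right),
\end{equation*}
up to a $1\pm O(\hat r/D)$ relative error coming from the spread of true distances across the two balls. We then \emph{invert}: define $\DnormP(u,v) = \bigl(c\cdot \hat r^{2\DIM}\CSW\kSW / X(u,v)\bigr)^{1/\DIM}$ for the matching constant $c$.

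\medskip

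The error analysis now balances two sources. First, the ball-radius slack contributes an additive error of order $\hat r$ in normalized distance. Second, Chernoff bounds on $X(u,v)$ give a relative error $O(1/\sqrt{\Expect{X(u,v)}})\cdot\polylog(n)$, which after inversion translates into an absolute error on $\Dnorm$ of order $D/\sqrt{\Expect{X(u,v)}}\cdot\polylog(n)=\polylog(n)\cdot D^{1+\DIM/2}/\hat r^{\DIM}$, assuming $\CSW\kSW=\Omega(\log n)$ so the expected count is logarithmically large. Setting these two error terms equal,
\begin{equation*}
\hat r \;=\; \Theta\!\left(D^{(\DIM+2)/(2\DIM+2)}\right)\cdot\polylog(n),
\end{equation*}
and plugging in yields the claimed additive error $\pm \Dnorm^{\gamma}\log^{O(1)} n$ with $\gamma = \tfrac{\DIM+2}{2\DIM+2}$. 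Since the true $D$ is not known a priori, one uses $\DnormS$ to pick $\hat r$; because $\DnormS = \Theta(D)$ up to polylog terms and $\gamma<1$, this only affects hidden constants. A union bound over all $\binom{n}{2}$ pairs, combined with the standard Chernoff inequalities stated in Theorem~\ref{thm:Chernoff}, delivers the high-probability guarantee; and since $X(u,v)$ for all pairs can be aggregated efficiently by iterating over edges and looking at their endpoints' initial-estimate balls, the running time stays polynomial (in fact near-linear, modulo the bookkeeping).

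\medskip

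The main obstacle is the independence argument. Naively, edges used in the initial pruning/\Amoeba phases also appear in $X(u,v)$, which would invalidate the Chernoff step. The clean fix is to partition $\ESW$ at the start into a constant number of independent subsets (as already done in Section~\ref{sec:amoeba}) and reserve one subset exclusively for the \TwoBallTest; this preserves the distributional form $f(r)=\Theta(\CSW\kSW r^{-\DIM})$ up to a constant scaling absorbed into $c$. A secondary technical point is the non-uniformity of edge probabilities across $B_u\times B_v$: the ratio between probabilities at the extremes is $1+O(\hat r/D)$, which at the optimal $\hat r$ is $o(1)$, so the expectation is captured up to the $\polylog(n)$ slack already accounted for. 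Finally, since the initial estimate has only constant multiplicative distortion and polylog additive error, one must verify that $B_u,B_v$ as computed from $\DnormS$ differ from the true balls only in a negligible fraction of nodes; near-uniform density makes this loss negligible compared to $|B_u|\cdot|B_v|$ and thus harmless for the Chernoff bound.
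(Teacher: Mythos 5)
Your proposal is correct and matches the paper's proof essentially step for step: both use the \TwoBallTest on top of the initial estimate from Theorem~\ref{thm:main}, form balls around $s$ and $t$ from $\DnormS$, count edges between them, invert the small-world edge-probability formula, and balance the spatial-spread error $O(\hat r)$ against the Chernoff sampling error to arrive at $\hat r \approx D^{(\DIM+2)/(2\DIM+2)}$ and hence the exponent $\gamma$. The only slip is a bookkeeping one in the expectation --- with $|B_u|,|B_v|=\Theta(\hat r^{\DIM}\CSW\kSW)$ the expected count should carry $(\CSW\kSW)^2$ rather than $\CSW\kSW$ --- but since $\CSW\kSW=\polylog(n)$ this is absorbed into the $\log^{O(1)}n$ factor and does not affect $\gamma$; the paper sidesteps this by working directly with cardinality-$\kappa$ balls rather than radius-$\hat r$ balls, which is a purely cosmetic difference.
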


The high-level idea is to augment the \SimpleTest from Section
\ref{sec:amoeba} with a post-processing step we call
\emph{\TwoBallTest.}
This is a variation of the common neighbors
heuristic where instead of common neighbors, the algorithm
counts 3-hop paths whose first and last hops are
sufficiently short according to the initial estimates.
More precisely, to estimate $\Dnorm(s,t)$, the algorithm counts
edges between two node sets \NBall{s} and \NBall{t} that are
small balls (centered at $s$ and $t$, respectively) with respect to
the initial estimates \DnormS.

The \TwoBallTest proceeds as follows. The input consists of \DnormS
and the original edge set \ESW. For every two nodes $s$ and $t$, the
normalized distance $\Dnorm(s,t)$ is estimated as follows. Let
\NRBall[\DnormS]{u}{\kappa} be the set of the $\kappa$ closest
nodes to node $u$ according to \DnormS, breaking ties arbitrarily;
note that this set is --- up to tie-breaking ---
a ball with respect to \DnormS.  Consider balls
    $\NBall{s} = \NRBall[\DnormS]{s}{\kappa}$
and
    $\NBall{t} = \NRBall[\DnormS]{t}{\kappa}$,
for some cardinality $\kappa$ to be specified later. Count the number
of edges in $\ESW$ between \NBall{s} and \NBall{t}, and let
\NEdges{s}{t} be that number. The new estimate is
\[ \DnormP(s,t) = \left(\kappa^2/\NEdges{s}{t} \right)^{1/\DIM}.\]
We take $\kappa = \ShrinkT{x}^{\DIM}$, where
    $\ShrinkT{x} \triangleq x^{(\DIM+2)/(2\DIM+2)}$ and $x = \DnormS(s,t)$.
See Algorithm~\ref{alg:2B} for the pseudocode.
\begin{algorithm}[htb]
{\bf Inputs.} Original edge set \ESW and initial estimates \DnormS from Theorem~\ref{thm:main}.

{\bf Output.} Improved distance estimates \DnormP.

{\bf For} each node pair $(s,t)$:
\begin{OneLiners}
\item[1.] $\NBall{s} = \NRBall[\DnormS]{s}{\kappa}$
and
    $\NBall{t} = \NRBall[\DnormS]{t}{\kappa}$,\\
where $\kappa = x^{\DIM(\DIM+2)/(2\DIM+2)}$ and $x = \DnormS(s,t)$.

\item[2.] \NEdges{s}{t} is the number of edges in \ESW
  between \NBall{s} and \NBall{t}.
\item[3.] $\DnormP(s,t) = (\kappa^2 /\NEdges{s}{t})^{1/\DIM}$.
\end{OneLiners}

\vspace{2mm}
{\bf Notation.} \NRBall[\DnormS]{u}{\kappa} is the set of
the $\kappa$ closest nodes to $u$ according to \DnormS,
breaking ties arbitrarily.\\

\caption{The \TwoBallTest.}
\label{alg:2B}
\end{algorithm}

The idea is that
$\Expect{\NEdges{s}{t}} \approx \kappa^2\, \Dnorm^{-\DIM}(s,t)$,
and our estimate inverts this relation. We pick $\kappa$ to optimize
the trade-off between the ``spatial uncertainty'' (the
pairwise distances between nodes in \NBall{s} and \NBall{t} are not
exactly $\Dnorm(s,t)$)
and ``sampling uncertainty'' (deviations of the number of
edges from the expectation).
The former increases with $\kappa$, while the latter decreases with $\kappa$.

\begin{proofof}[Proof of Theorem~\ref{thm:simple2ball-singleCat}]
Assume that \DnormS satisfies the high-probability property that it
is an estimate of $\Dnorm$ with constant distortion and $\polylog(n)$
additive error. Consider a node pair $(s,t)$,
at normalized distance $y = \Dnorm(s,t)$.

Assume that $\Dnorm(s,t)$ is large enough to ensure that \ShrinkT{y} is
larger than the $\polylog(n)$ additive error.
(Otherwise, the additive error guarantee is trivially satisfied.)
Then, by near-uniform density, all nodes in
\NRBall[\DnormS]{s}{\kappa} are at normalized distance at
most $c\,\ShrinkT{y}$ from $s$, for some constant $c$.
Likewise, all nodes in \NRBall[\DnormS]{t}{\kappa} are at
normalized distance at most $c\,\ShrinkT{y}$ from $t$.
Therefore
\begin{align}
\frac{\kappa^2}{(y + 2c\, \ShrinkT{y})^{\DIM}}
 \leq \Expect{\NEdges{s}{t}}
 \leq \frac{\kappa^2}{(y - 2c\, \ShrinkT{y})^{\DIM}}.
\label{eqn:expected-two-ball-edges}
\end{align}
We next apply Chernoff bounds to \NEdges{s}{t},
and use the bounds that
$\tfrac{1}{1-2\beta}\cdot (1+6\beta) \leq \tfrac{1}{1-8\beta}$ and
$\tfrac{1}{1+2\beta}\cdot (1-6\beta) \geq \tfrac{1}{1+8\beta}$
(with $\beta = c\,\tfrac{\ShrinkT{y}}{y}$)
to derive that
\begin{align*}
\Prob{\frac{\kappa^2}{(y + 8c\, \ShrinkT{y})^{\DIM}} \leq \NEdges{s}{t}
    \leq \frac{\kappa^2}{(y - 8c\, \ShrinkT{y})^{\DIM}}}
    \geq 1 - 1/{n^{O(\log n)}}.
\end{align*}
Taking the union bound over all node pairs $(s,t)$, it follows that
w.h.p.~$|(\kappa^2/\NEdges{s}{t})^{1/\DIM} - y| \leq O(\ShrinkT{y})$.
\end{proofof}

\subsection{The \recTwoBall}
\label{sec:rectwoball-single}
Given that the \TwoBallTest produces improved estimates of (normalized)
distances, it seems natural to run the algorithm again, using the
improved estimates as a starting point for defining the balls
\NBall{s} and \NBall{t} more accurately. \asedit{More precisely,} to estimate $\D(s,t)$, the algorithm can use the previously
computed estimates for smaller distance scales to define \NBall{s}
and \NBall{t}. \asedit{(The distance scales can be defined according to the coarse estimates provided by Theorem~\ref{thm:main}.)}
The technical goal is to improve the additive error in
Theorem~\ref{thm:simple2ball-singleCat}.

\asedit{Conceptually, this approach is recursive: the algorithm to estimate the distance for a given node pair $(s,t)$ recursively calls itself for smaller distance scales.
However, we present and analyse an iterative implementation, since it is more efficient computationally and somewhat easier to analyze.
We call the resulting algorithm (with carefully
optimized selection of cardinality $\kappa$ for the two balls)} the \emph{\recTwoBall.}

The analysis of this algorithm is significantly more delicate and
involved. In particular, in order to take advantage of the improved
estimates, a stronger uniformity condition is needed on the metric: we
say that the metric space has \emph{perfectly uniform density} iff
each ball of radius $r$ contains
  $\CPD\,r^{\DIM} \pm O(r^{\DIM-1})$
points, where \CPD is a known constant.
Then we can improve the additive error to $\polylog(n)$.

\begin{theorem}\label{thm:rec2ball-singleCat}
Consider a \singleSG with
$\CSW \kSW = \Omega(\log n)$ and perfectly uniform density.
Assume that the social distance is defined by the $\ell_2^{\DIM}$
norm, with $\DIM>2$.
Then, the \recTwoBall w.h.p.~reconstructs
all normalized distances with unit distortion and additive error $\polylog(n)$.
\end{theorem}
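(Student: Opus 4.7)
The plan is to iterate the \TwoBallTest across geometrically increasing distance scales, refining the estimates at each scale using the improved estimates obtained at smaller scales. The analysis combines three ingredients: (i) perfectly uniform density, which pins down ball cardinalities and boundaries up to an $O(r^{\DIM-1})$ correction; (ii) the $\ell_2^{\DIM}$ geometry, which by reflection symmetry across the midpoint of $(s,t)$ makes the spatial bias in the expected edge count between symmetric balls of radius $r$ around $s$ and $t$ quadratic rather than linear in $r/y$; and (iii) the condition $\DIM > 2$, which makes the sampling fluctuations controllable once the balls are taken to have radius a constant fraction of $y$.

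I would structure the recursion on distance scales as follows. Choose a geometrically increasing sequence of scales $r_0 < r_1 < \dots$ up to $\Theta(n^{1/\DIM})$. The base case uses Theorem~\ref{thm:simple2ball-singleCat}: for $y \leq \polylog(n)$ its additive error is already $\polylog(n)$. Inductively, assume all distances at scales at most $r_{k-1}$ have been estimated to within $\polylog(n)$ additive error. To estimate a pair $(s,t)$ with $\Dnorm(s,t) = y \in (r_{k-1}, r_k]$, take balls $\NBall{s}, \NBall{t}$ of some radius $r < y$, chosen so that the within-ball pairwise distances live at scales already estimated accurately. Under perfectly uniform density, the polylog accuracy of these estimates guarantees that the chosen $\NBall{s}, \NBall{t}$ coincide with the true balls up to a symmetric difference of polylog many nodes.

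The per-scale analysis first evaluates $\Expect{\NEdges{s}{t}}$ using the known ball cardinalities and the reflection symmetry, which kills the leading linear term and yields $\Expect{\NEdges{s}{t}} = \kappa^2 y^{-\DIM}(1 + O((r/y)^2))$. Combining this with a Chernoff bound for $\NEdges{s}{t}$ and inverting, the error in the new estimate decomposes into (a) a spatial-bias contribution from the $(r/y)^2$ term, (b) a sampling contribution of order $\sqrt{\log n}\, y^{\DIM/2+1}/r^{\DIM}$, and (c) a term inherited from the polylog input error through the symmetric difference of the balls. Taking $r$ to be a constant fraction of $y$ makes (b) proportional to $\sqrt{\log n}\, y^{1-\DIM/2}$, which is $o(1)$ precisely because $\DIM > 2$, and makes (c) polylog by the inductive hypothesis.

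The main obstacle is the spatial-bias contribution (a), which a direct symmetry argument only bounds by $O(r^2/y)$; at $r = \Theta(y)$ this is $\Theta(y)$, not polylog, so a naive analysis fails to close the recursion. To overcome this, the algorithm must \emph{compute out} the bias rather than merely bound it: using the inductive-hypothesis estimates of pairwise distances within $\NBall{s}$, within $\NBall{t}$, and between them (which, being polylog-accurate, effectively reconstruct the relative geometry of the two balls), one writes $\Expect{\NEdges{s}{t}}$ as an explicit function of $y$ by summing $f(\Dnorm(u,v))$ over $u \in \NBall{s}, v \in \NBall{t}$, and then solves this relation for $y$ from the observed $\NEdges{s}{t}$. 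The technical heart of the proof is then to (i) show that this debiased inversion is stable, (ii) verify that the errors from the imperfect within-ball estimates and from the symmetric differences of the balls propagate to at most polylog in the output, and (iii) control the accumulation of these errors across the $O(\log n)$ scales of the recursion so that the final additive error is still $\polylog(n)$.
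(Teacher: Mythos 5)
Your high-level plan (iterate the ball-counting test across scales, exploit $\ell_2$ and uniform density to pin down the expected edge count, invoke $\DIM>2$ to control fluctuations) matches the paper's intent, but the specific choices you make diverge from the paper's in two ways that matter, and at least one of them leaves a genuine gap.

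First, the paper does not use balls of radius a constant fraction of $y$, nor scales that grow geometrically. The \recTwoBall uses balls of radius $\ShrinkR{x}=x^{1/2+1/\DIM}$, which is \emph{sublinear} in $x$. The recursion is then from scale $x$ down to scale $\ShrinkR{x}$, and because $1/2+1/\DIM<1$ exactly when $\DIM>2$, the recursion depth is only $O(\log\log n)$ (that, and not your fluctuation argument, is the role of $\DIM>2$). This matters because the paper's Lemma~\ref{lem:adddist} shows $\AddDist{x}=O(\AddDist{\ShrinkR{x}})$, i.e.\ a constant-\emph{factor} amplification of the additive error per recursion level. Over $O(\log\log n)$ levels this gives $2^{O(\log\log n)}=\polylog(n)$; over the $O(\log n)$ levels of your geometric ladder, the same constant-factor amplification would give $n^{\Theta(1)}$ error, not $\polylog(n)$. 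You assert the errors stay polylog but do not show the per-level amplification factor is $1+o(1/\log n)$, which is what your ladder would need.

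Second, your ``compute out the bias'' step is circular as written. You propose to evaluate $\Expect{\NEdges{s}{t}}$ by summing $f(\Dnorm(u,v))$ over pairs $u\in\NBall{s},v\in\NBall{t}$, using inductively-obtained estimates of these distances. But when $r=\Theta(y)$, the cross-ball distances $\Dnorm(u,v)$ range over roughly $[y-2r,\,y+2r]$ --- i.e.\ they are at the same scale $\Theta(y)$ that you are currently trying to estimate, and have not yet been computed to polylog accuracy. The paper sidesteps this entirely: because $\ShrinkR{x}$ is tiny compared to $x$, the expected edge count is captured by an \emph{analytic} constant $\DIMCONST$ (known from $\DIM$, $\CPD$, and the $\ell_2$ geometry alone), so that $\Expect{\NEdges{s}{t}}=(\DIMCONST\,r^2/x)^\DIM$ up to a multiplicative factor $1+O(r^{-2})$. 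No data-driven reconstruction of the relative geometry of the two balls is needed, and no circularity arises. Your ``reflection symmetry kills the linear term'' observation is the germ of what $\DIMCONST$ formalizes, but at $r=\Theta(y)$ that quadratic bias is itself $\Theta(y)$, and the mechanism you propose to subtract it off relies on distance estimates you do not yet have.

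In short: the correct fix for the obstacle you correctly identify is to \emph{shrink the balls} (to radius $x^{1/2+1/\DIM}$), not to debias larger balls using reconstructed geometry; the former simultaneously removes the circularity and collapses the recursion depth to $O(\log\log n)$, which is what actually makes the polylog bound close.
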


\begin{note}{Remark.}
The algorithm uses a constant \DIMCONST that captures, up to the
first-order term, how the expected number of edges between two
radius-$r$ balls depends on $r$ and the distance between centers.
Specifically, in the setting of Theorem~\ref{thm:rec2ball-singleCat},
consider two radius-$r$ balls whose centers are at distance
$x > 4r$. The expected number of edges between these two balls is
    $(\DIMCONST\, r^2/x)^{\DIM}$,
up to a multiplicative factor $1+ O(r^{-2})$. Here, \DIMCONST is a constant
that depends only on the dimension \DIM and the constant \CPD in the
definition of perfectly uniform density. We assume that \DIMCONST is known
to the algorithm.

The restriction to the $\ell_2$ norm is essential to define \DIMCONST: under
$\ell_p$, $p\neq 2$, the expected number of edges between the two balls
significantly depends on the alignment of the $s$-$t$ line relative to
the coordinate axes.

For $\DIM=2$, a similar (but slightly more complicated) algorithm and
analysis yield additive error $2^{O(\sqrt{\log x})}$ for node
pairs at normalized distance $x$; we omit the details.
\end{note}

\OMIT{ $\ell_1$ does not allow \emph{rotations} (bijective maps which fix the origin and preserve distances) other than those that map the union of coordinate axes to itself, which implies that the more precise estimates of $\Dnorm(s,t)$ would need to depend on the alignment of the $s$-$t$ line relative to the coordinate axes. (See Appendix \ref{sec:L1-rotations} for more discussion).}

We next define the algorithm. Let us first set up the notation.
Let \DnormS be the normalized distance
estimates guaranteed by Theorem~\ref{thm:main}.
We will compute refined estimates \DnormP, which are initialized to
\DnormS.
Let \NRBall[\DnormP]{u}{\kappa} be the set of the $\kappa$ closest
nodes to $u$ according to \DnormP, breaking ties
arbitrarily.

\OMIT{Note that $E(r,x) = (\DIMCONST\,r^2/x)^{\DIM}$ for some constant $\DIMCONST$; we assume that this constant is known to the algorithm.}

\OMIT{ 
The algorithm is parameterized by distance scales $x_1<x_2<\ldots\,$  and an increasing function $r(x)$ such that $r(x_{i+1}) = x_i$ for each $i\geq 1$. The algorithm proceeds in stages $i=1,2,\,\ldots\,$such that in each stage $i$ it computes estimates for each node pairs $(s,t)$ with
    $x = \DnormS(s,t) \in (x_i,x_{i+1}]$.
} 

The \recTwoBall proceeds as follows.
The input consists of \DnormS and the original edge set \ESW.
The algorithm considers node pairs $(s,t)$ such that
$\DnormS(s,t)>\polylog(n)$, in order of increasing \DnormS.
For each such node pair, we define balls around $s$ and $t$
whose radius is roughly \ShrinkR{x}, where $x = \DnormS(s,t)$ and
$\ShrinkR{x} = x^{1/2+1/\DIM}$.
Formally, we define balls
    $\NBall[\prime]{s} = \NRBall[\DnormP]{s}{\kappa}$
and
    $\NBall[\prime]{t} = \NRBall[\DnormP]{t}{\kappa}$,
where $\kappa = \CPD\,\ShrinkR{x}^{\DIM}$.
Note that these balls are defined with respect to the improved estimates \DnormP.
Let \NEdges{s}{t} be the number of edges between \NBall[\prime]{s} and \NBall[\prime]{t}.
The new estimate is
$\DnormP(s,t) = \DIMCONST\;\ShrinkR{x}^2\;\NEdges{s}{t}^{-1/\DIM}$.
The pseudocode is shown in Algorithm~\ref{alg:Rec2B}.

\begin{algorithm}[htb]
{\bf Inputs.} Original edge set \ESW and initial estimates \DnormS from Theorem~\ref{thm:main}.

{\bf Output.} Improved distance estimates \DnormP.

$\DnormP\leftarrow \DnormS$.

{\bf For} each node pair $(s,t)$ such that $\DnormS(s,t)>\polylog(n)$,
in order of increasing \DnormS:
\begin{OneLiners}
\item[1.] $\kappa = \CPD\,\ShrinkR{x}^{\DIM}$, where $x = \DnormS(s,t)$ and $\ShrinkR{x} = x^{1/2+1/\DIM}$.
\item[2.] $\NBall[\prime]{s} = \NRBall[\DnormP]{s}{\kappa}$
and
    $\NBall[\prime]{t} = \NRBall[\DnormP]{t}{\kappa}$.
\item[3.] \NEdges{s}{t} is the number of edges in \ESW between \NBall[\prime]{s} and \NBall[\prime]{t}.
\item[4.] $\DnormP(s,t) = \DIMCONST\;\ShrinkR{x}^2\;\NEdges{s}{t}^{-1/\DIM}$.
\end{OneLiners}

\vspace{2mm}
{\bf Notation.}
\NRBall[\DnormP]{u}{\kappa} is the set of the $\kappa$
closest nodes to node $u$ according to \DnormP, breaking ties
arbitrarily. \\
\DIMCONST is the constant from the remark after
Theorem~\ref{thm:rec2ball-singleCat}.

\caption{The \recTwoBall.}
\label{alg:Rec2B}
\end{algorithm}

\asedit{The algorithm is quite simple: as in the \TwoBallTest, the balls \NBall[\prime]{s} and \NBall[\prime]{t} are defined via cardinality $\kappa$, and the improved distance estimate $\DnormP(s,t)$ is computed as a function of the number of edges between the two balls. The only complication is how
to pick $\kappa$ as a function of the initial distance estimate $x=\DnormS(s,t)$.}

\subsection{Recursive nature of the algorithm}
\label{app:rec2ball-prelim}

\ascomment{Note to review team: this subsection is new}

To illustrate the recursive nature of the algorithm, we argue that the improved distance estimates for distance scale $x$ depend only on those for distance scale $O(\ShrinkR{x})$, where the distance scales are defined according to the initial estimates \DnormS.

\begin{lemma}\label{cl:rec2ball-prelim}
In Algorithm~\ref{alg:Rec2B}, fix the original edge set \ESW and initial estimates \DnormS from Theorem~\ref{thm:main}. Let $C$ be the (constant) expansion in Theorem~\ref{thm:main}. Consider a node pair $(s,t)$ with $x=\DnormS(s,t)>\polylog(n)$. Then the improved distance estimate $\DnormP(s,t)$ depends only on
    $\DnormP(u,v)$ for node pairs $(u,v)$ with $\DnormS(u,v) \leq 8C\,\ShrinkR{x}$.
\end{lemma}

In other words, the improved distance estimates implicitly rely
on recursion from distance scale $x$ to distance scale \ShrinkR{x}.
Let $\rho(x)$ be the depth of this recursion: the number of steps until the
distance scale goes below $\polylog(n)$.
It is easy to see that $\rho(x) = O(\log \log n)$.

As an auxiliary step to prove Lemma~\ref{cl:rec2ball-prelim}, we claim
that Algorithm~\ref{alg:Rec2B} w.h.p.~reconstructs the normalized
distances with constant multiplicative distortion and $\polylog(n)$
additive error.

\begin{claim}\label{cl:rec2ball-prelim-multiplicative}
For each node pair $(u,v)$ with $\DnormS(u,v) \geq \polylog(n)$, with
high probability,
\begin{align*}
\tfrac12 \Dnorm(u,v) \leq \DnormP(u,v) \leq 2\,\Dnorm(u,v).
\end{align*}
\end{claim}

We prove this claim by induction on $\DnormS(u,v)$; the inductive step
is proved using an argument similar to the proof of
Theorem~\ref{thm:simple2ball-singleCat}; the easy details are omitted.

\begin{proofof}[Proof of Lemma~\ref{cl:rec2ball-prelim}]
Recall that $\DnormP(s,t)$ depends only on which nodes comprise the two balls
    $\NBall[\prime]{s}$ and $\NBall[\prime]{t}$.
Let us focus on the membership in $\NBall[\prime]{s}$.
(For $\NBall[\prime]{t}$, we argue similarly.)
\asedit{
Note that
\[
\NBall[\prime]{s}
   \; = \;\NRBall[\DnormP]{s}{\kappa}
   \; \subseteq \; \Ball{u}{2\,\ShrinkR{x}}
   \; \subseteq \; \{ u\in V:\; \DnormP(s,u)\leq 4\, \ShrinkR{x} \},
\]
where the first inclusion followed from perfectly uniform density and
the second from Claim~\ref{cl:rec2ball-prelim-multiplicative}.
Now, consider a node $u$ such that $\DnormS(s,u) > 8C\,\ShrinkR{x}$.
By Theorem~\ref{thm:main},  $\DnormS(s,u) > 8\,\ShrinkR{x}$, and by
Claim~\ref{cl:rec2ball-prelim-multiplicative},
this term is in turn bounded below by
$4\,\ShrinkR{x}$.
Therefore, we conclude that $u\not\in \NBall[\prime]{s}$.}
\end{proofof}

\subsection{Proof of Theorem~\ref{thm:rec2ball-singleCat}}
\label{app:rec2ball-singleCat-proof}

The high-level idea of the analysis is as follows.
Let \AddDist{x} be the maximum additive error for node pairs at
normalized distance at most $x$.
As in the \SimpleTest, the error comes from two sources: spatial
uncertainty and sampling uncertainty.
We show that the spatial uncertainty can contribute at most
$O(\AddDist{\ShrinkR{x}})$ to the overall additive error;
interestingly, this holds for any choice of \ShrinkR{x}.
We use Chernoff Bounds to bound the contribution of sampling
uncertainty by $O(\AddDist{\ShrinkR{x}})$ as well;
this is where the particular exponent in \ShrinkR{x} is used.
It follows that
    $\AddDist{x} = O(\AddDist{\ShrinkR{x}})$.
Finally, \asedit{by Lemma~\ref{cl:rec2ball-prelim}}, the distance
estimates for a given node pair implicitly rely on recursion from
distance scale $x$ to distance scale \ShrinkR{x}.
Let
    $\rho(x) \asedit{ = O(\log\log n)}$
be the depth of this recursion: the number of steps until the
distance scale goes below $\polylog(n)$.
It is easy to see that
    $\AddDist{x} = 2^{O(\rho(x))} \asedit{ = \polylog(n)}$.


Consider two nodes $s$ and $t$ whose normalized distance is
    $x = \Dnorm(s,t)$.

Let $\NDBall{s} = \NRBall[\Dnorm]{u}{\kappa}$ and
$\NDBall{t} = \NRBall[\Dnorm]{u}{\kappa}$ be the
sets of the $\kappa$ closest nodes to $s$ and $t$, respectively, under
the (correct) normalized distances $(V,\Dnorm)$.

We start with a simple lemma showing that this choice implies that
the actual \emph{sets} of nodes are very close between
\NBall[\prime]{s} and \NDBall{s} (and \NBall[\prime]{t} and \NDBall{t}, respectively).

\begin{lemma} \label{lem:ball-containment}
For a sufficiently large constant $\beta$, we have that
\begin{align*}
\Ball[\Dnorm]{s}{\ShrinkR{x} - 2 \AddDist{\ShrinkR{x}} - \beta}
& \subseteq \NBall[\prime]{s}
\subseteq \Ball[\Dnorm]{s}{\ShrinkR{x} + 2 \AddDist{\ShrinkR{x}} + \beta},\\
\Ball[\Dnorm]{t}{\ShrinkR{x} - 2 \AddDist{\ShrinkR{x}} - \beta}
& \subseteq \NBall[\prime]{t}
\subseteq \Ball[\Dnorm]{t}{\ShrinkR{x} + 2 \AddDist{\ShrinkR{x}} + \beta}.
\end{align*}
\end{lemma}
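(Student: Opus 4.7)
The plan is to exploit two inputs. First, $\kappa = \CPD\,\ShrinkR{x}^{\DIM}$ is chosen so that, under perfectly uniform density, $\kappa$ matches the cardinality of a $\Dnorm$-ball of radius $r := \ShrinkR{x}$ around $s$ up to an $O(r^{\DIM-1})$ slack. Second, by the recursive structure of the algorithm, the refined estimate $\DnormP$ is already accurate to additive error $\alpha := \AddDist{\ShrinkR{x}}$ for pairs $(s,u)$ at scale $O(r)$, because $\ShrinkR{x} \ll x$ means such pairs were processed in earlier iterations. I will pick $\beta$ as a sufficiently large absolute constant to absorb both the density slack and this inductive error.

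First, I will set up the inductive accuracy of $\DnormP$: for every pair $(s,u)$ already processed --- in particular every $u$ with $\Dnorm(s,u) = O(r)$ --- we have $|\DnormP(s,u) - \Dnorm(s,u)| \leq \alpha$, using monotonicity of $\AddDist$ and the fact that the algorithm processes pairs in order of increasing $\DnormS$. For $v$ too far to have been processed, I will instead use that $\DnormS$ has no contraction (Theorem~\ref{thm:main}), so $\DnormP(s,v) = \DnormS(s,v) \geq \Dnorm(s,v)$ keeps faraway nodes faraway. Using these bounds, I will translate the $\Dnorm$-ball containments in the lemma to sets defined by $\DnormP$-thresholds: $A^* := \{v : \DnormP(s,v) \leq r - \alpha - \beta\}$ and $B^* := \{v : \DnormP(s,v) < r + \alpha + \beta\}$, with $\Ball[\Dnorm]{s}{r - 2\alpha - \beta} \subseteq A^*$ and $V \setminus B^* \subseteq V \setminus \Ball[\Dnorm]{s}{r + 2\alpha + \beta}$.

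Next, I will invoke perfectly uniform density and a first-order Taylor expansion to show $|A^*| \leq \kappa \leq |B^*|$. Specifically, $A^* \subseteq \Ball[\Dnorm]{s}{r - \beta}$ (using the accuracy bound within this scale and no-contraction outside), and density gives $|\Ball[\Dnorm]{s}{r - \beta}| \leq \CPD(r-\beta)^{\DIM} + O(r^{\DIM-1}) \leq \kappa$ once $\beta$ is large enough for the linear term $\CPD\,\DIM\,r^{\DIM-1}\beta$ to dominate the $O(r^{\DIM-1})$ slack. Symmetrically $B^* \supseteq \Ball[\Dnorm]{s}{r + \beta/2}$, which has at least $\kappa$ nodes by the same Taylor argument. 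Both inclusions in the lemma then follow by a standard pigeonhole argument on the $\kappa$-nearest-neighbor set: since $|A^*| \leq \kappa$ and every $A^*$-element strictly undercuts every non-$A^*$-element in $\DnormP$, $\NBall[\prime]{s}$ must contain all of $A^*$; and since $|B^*| \geq \kappa$, the $\kappa$-th smallest $\DnormP$-value is strictly less than $r + \alpha + \beta$, so no node outside $B^*$ --- hence no node outside $\Ball[\Dnorm]{s}{r + 2\alpha + \beta}$ --- can be selected. The argument for $\NBall[\prime]{t}$ is identical with $t$ in place of $s$.

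The main subtle point will be ensuring that $\beta$ can indeed be chosen as an absolute constant, independent of $x$ and $\alpha$. This is exactly where perfectly uniform density is being used sharply: the density error is $O(r^{\DIM-1})$, precisely the order canceled by an $O(1)$ additive radius perturbation, so no dependence on $x$ leaks into $\beta$. A secondary subtlety is handling nodes $v$ that are genuinely far from $s$ but whose pair has not yet been processed; this is where the no-contraction property of $\DnormS$ is essential, since otherwise such a node could spuriously land in $A^*$ and break the cardinality bound.
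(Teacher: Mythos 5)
Your proof is essentially correct and follows the same approach as the paper: use the inductive accuracy bound of $\DnormP$ (with error $\alpha = \AddDist{\ShrinkR{x}}$) to translate the $\kappa$-nearest-neighbor set into $\Dnorm$-balls, use perfectly uniform density to compare their cardinalities to $\kappa = \CPD\,\ShrinkR{x}^{\DIM}$, and absorb the $O(\ShrinkR{x}^{\DIM-1})$ density slack with a constant radius padding $\beta$. Your treatment of the upper-bound inclusion is somewhat more explicit than the paper (which just writes ``analogous calculation''), and you are more careful about nodes whose pair $(s,v)$ has not yet been processed by falling back on the non-contraction of $\DnormS$.

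One slip worth flagging: where you set up the intermediate sets, you write ``$V \setminus B^* \subseteq V \setminus \Ball[\Dnorm]{s}{\ShrinkR{x} + 2\alpha + \beta}$,'' which is equivalent to $\Ball[\Dnorm]{s}{\ShrinkR{x} + 2\alpha + \beta} \subseteq B^*$. That inclusion is backwards (and in fact false in general); the inclusion your final pigeonhole step actually needs, and the one you implicitly establish via the accuracy/no-contraction case split, is $B^* \subseteq \Ball[\Dnorm]{s}{\ShrinkR{x} + 2\alpha + \beta}$. The final argument is sound once this containment is stated in the right direction.
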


\begin{proof}
We first prove the first inclusion.
Let $v \in \Ball[\Dnorm]{s}{\ShrinkR{x} - 2\AddDist{\ShrinkR{x}} - \beta}$
be arbitrary.
Because $\Dnorm(s,v) \leq \ShrinkR{x} - 2\AddDist{\ShrinkR{x}} - \beta$,
the definition of \AddDist{\cdot} implies that
$\DnormP(s,v) \leq \Dnorm(s,v) + \AddDist{\ShrinkR{x}}
\leq \ShrinkR{x} - \AddDist{\ShrinkR{x}} - \beta$.
On the other hand,
$\DnormP(s,u) \geq \ShrinkR{x} - \AddDist{\ShrinkR{x}} - \beta$
for all nodes $u$ such that $\Dnorm(s,u) \geq \ShrinkR{x} - \beta$.
Therefore, there can be at most
    $\CPD\,(\ShrinkR{x}-\beta)^{\DIM} \pm O((\ShrinkR{x}-\beta)^{\DIM-1})$
nodes $u$ with
    $\DnormP(s,u) \leq \DnormP(s,v)$.
This number is less than
    $\CPD\,\ShrinkR{x}^{\DIM} = \kappa$
whenever $\beta$ is large enough.

Because \NBall[\prime]{s} contains the $\kappa$ nodes closest to $s$ under
\DnormP (by its definition), this means that $v \in \NBall[\prime]{s}$.
Since this argument holds for arbitrary $v$, we have proved the first
claim.
The second inclusion is proved by an analogous calculation.
\end{proof}

We next show that the number of edges between \NDBall{s} and \NDBall{t} is close
to the number of edges between \NBall[\prime]{s} and \NBall[\prime]{t}.
To state this claim concisely, let
$\NumEdges(S,S')$ be the number of edges in \ESW between node
sets $S$ and $S'$.

\begin{lemma}
\label{lem:edge-count}
With high probability,
\begin{align*}
\left| \Expect{\NumEdges(\NBall[\prime]{s},\NBall[\prime]{t})} -
\Expect{\NumEdges(\NDBall{s},\NDBall{t})} \right|
= O(x \cdot \AddDist{\ShrinkR{x}}).
\end{align*}
\end{lemma}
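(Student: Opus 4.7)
The plan is to bound the difference of the two expected values as a deterministic computation once I condition on the (high-probability) inductive event underlying Lemma~\ref{lem:ball-containment}. Let $A = \NBall[\prime]{s}$, $A' = \NDBall{s}$, $B = \NBall[\prime]{t}$, $B' = \NDBall{t}$, and write $g(S,T) = \sum_{u \in S, v \in T} p(u,v)$ for the deterministic expected edge count with $p(u,v) = \Dnorm(u,v)^{-\DIM}$; then $\Expect{\NumEdges(S,T) \mid S,T} = g(S,T)$. Via the triangle inequality it suffices to bound $|g(A,B) - g(A',B)|$ and $|g(A',B) - g(A',B')|$ separately; these are symmetric, so I focus on the first. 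Using Lemma~\ref{lem:ball-containment} together with perfectly uniform density, both $A$ and $A'$ lie inside $\Ball[\Dnorm]{s}{\ShrinkR{x} + O(\AddDist{\ShrinkR{x}})}$ and both contain $\Ball[\Dnorm]{s}{\ShrinkR{x} - O(\AddDist{\ShrinkR{x}})}$, so $|A \triangle A'| = O(\ShrinkR{x}^{\DIM-1}\,\AddDist{\ShrinkR{x}})$, and analogously for $B,B'$. Crucially, since both sets have size exactly $\kappa$ by construction, I have the exact identity $|A \setminus A'| = |A' \setminus A|$.

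Next I write $g(A,B) - g(A',B) = \sum_{v\in B}\bigl(\sum_{u \in A\setminus A'} p(u,v) - \sum_{u\in A'\setminus A} p(u,v)\bigr)$, and for each relevant pair I Taylor-expand $p(u,v) = x^{-\DIM} - \DIM\, x^{-\DIM-1}\,\xi_{u,v} + O(x^{-\DIM-2}\,\xi_{u,v}^2)$, where $\xi_{u,v} = \Dnorm(u,v) - x$ and $|\xi_{u,v}| \leq 2\,\ShrinkR{x} + O(\AddDist{\ShrinkR{x}}) = O(\ShrinkR{x}) \ll x$, since $u$ and $v$ are within $\ShrinkR{x} + O(\AddDist{\ShrinkR{x}})$ of $s$ and $t$ respectively. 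The key observation is that the leading $x^{-\DIM}$ term cancels exactly inside the bracketed signed sum because $|A \setminus A'| = |A' \setminus A|$. This cancellation is essential: the naive bound $\leq \kappa\,|A \triangle A'|\,x^{-\DIM}$ evaluates to $\Theta(\ShrinkR{x}^{2\DIM-1}\AddDist{\ShrinkR{x}}/x^{\DIM}) = \Theta(x^{3/2 - 1/\DIM}\,\AddDist{\ShrinkR{x}})$, which for $\DIM > 2$ is a factor $x^{1/2 - 1/\DIM}$ too large and would cause the recursion of Theorem~\ref{thm:rec2ball-singleCat} to diverge.

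After cancellation, the linear term contributes at most $|B| \cdot \DIM\,x^{-\DIM-1}\,|A \triangle A'| \cdot \max|\xi_{u,v}|$; plugging in $|B| = \kappa = \Theta(\ShrinkR{x}^{\DIM})$, $|A \triangle A'| = O(\ShrinkR{x}^{\DIM-1}\,\AddDist{\ShrinkR{x}})$, $\max|\xi| = O(\ShrinkR{x})$, and the algebraic identity $\ShrinkR{x}^{2\DIM} = x^{\DIM+2}$ baked into $\ShrinkR{x} = x^{1/2+1/\DIM}$, this reduces to exactly $O(x \cdot \AddDist{\ShrinkR{x}})$. The quadratic remainder contributes $O(\ShrinkR{x}\,\AddDist{\ShrinkR{x}})$, strictly smaller since $\ShrinkR{x} < x$ for $\DIM > 2$. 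The ``with high probability'' in the lemma refers only to the inductive event of Lemma~\ref{lem:ball-containment} (via correctness of \DnormP on the relevant small scale); conditional on that event, everything above is a deterministic arithmetic computation. The main obstacle is recognizing the need for the exact cardinality matching $|A \setminus A'| = |A' \setminus A|$ to cancel the $x^{-\DIM}$ leading term: a direct symmetric-difference bound gives an estimate too weak by precisely a factor of $x/\ShrinkR{x}$, and only the equality of cardinalities (which comes from both being sets of exactly $\kappa$ nearest nodes) recovers the exponent needed to close the recursion.
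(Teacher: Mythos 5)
Your proposal is correct and uses essentially the same mechanism as the paper's proof: the paper constructs an explicit bijection $\phi$ between $\NBall[\prime]{s}\cup\NBall[\prime]{t}$ and $\NDBall{s}\cup\NDBall{t}$ (identity on the intersection, arbitrary on the symmetric differences, possible precisely because $|\NBall[\prime]{s}|=|\NDBall{s}|=\kappa$) and then bounds $|p(u,v)-p(\phi(u),\phi(v))| = O(x^{-\DIM}\ShrinkR{x}/x)$ per perturbed pair, which is exactly the cancellation of the leading $x^{-\DIM}$ term that you make explicit via the signed sum over $A\setminus A'$ and $A'\setminus A$ and the Taylor expansion. The only cosmetic difference is that you split the comparison into two steps ($A\to A'$ with $B$ fixed, then $B\to B'$ with $A'$ fixed) whereas the paper swaps both endpoints simultaneously through $\phi$; the per-pair bound and the count of perturbed pairs are identical, so both yield $O(x\,\AddDist{\ShrinkR{x}})$.
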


\begin{proof}
We construct a bijection
    $\phi: (\NBall[\prime]{s} \cup \NBall[\prime]{t}) \to (\NDBall{s} \cup \NDBall{t})$
as follows. Partition the domain and the co-domain into four disjoint regions each
(using $\oplus$ to denote the disjoint union of sets):
\begin{align*}
(\NBall[\prime]{s} \cup \NBall[\prime]{t}) &=
(\NBall[\prime]{s} \cap \NDBall{s})        \oplus
(\NBall[\prime]{s} \setminus \NDBall{s})   \oplus
(\NBall[\prime]{t} \cap \NDBall{t})        \oplus
(\NBall[\prime]{t} \setminus \NDBall{t}),\\
(\NDBall{s} \cup \NDBall{t}) &=
(\NBall[\prime]{s} \cap \NDBall{s})        \oplus
(\NDBall{s} \setminus \NBall[\prime]{s})   \oplus
(\NBall[\prime]{t} \cap \NDBall{t})        \oplus
(\NDBall{t} \setminus \NBall[\prime]{t}).
\end{align*}
The regions in each partition are indeed disjoint because
    $\NDBall{s} \cap \NDBall{t} = \NBall[\prime]{s} \cap \NBall[\prime]{t} = \emptyset$.
We define $\phi$ separately for each of the four subsets the domain.
First, any node in
    $(\NBall[\prime]{s} \cap \NDBall{s})$ or $(\NBall[\prime]{t} \cap \NDBall{t})$
is mapped to itself. Second, $\phi$ is an arbitrary bijection
    $(\NBall[\prime]{s} \setminus \NDBall{s}) \to (\NDBall{s} \setminus \NBall[\prime]{s})$
and
    $(\NBall[\prime]{t} \setminus \NDBall{t}) \to (\NDBall{t} \setminus \NBall[\prime]{t})$.
This completes the definition. For the second step, note that the respective domains and co-domains have the same size;
this is because
    $|\NDBall{s}| = |\NBall[\prime]{s}| = \kappa $ and $|\NDBall{t}| = |\NBall[\prime]{t}| = \kappa$.

Nodes $v \in (\NBall[\prime]{s} \setminus \NDBall{s}) \cup (\NBall[\prime]{t} \setminus \NDBall{t})$
called \emph{perturbed nodes}. By Lemma \ref{lem:ball-containment}, \NBall[\prime]{s} and \NBall[\prime]{t}
contain at most
  $\CPD \cdot (2 \AddDist{\ShrinkR{x}} + \beta)
    \cdot \ShrinkR{x}^{\DIM-1}$
perturbed nodes each.

By the perfectly uniform density assumption, at least
$\CPD\,r^{\DIM} - O(r^{\DIM-1})$ nodes have distance at most
$r$ from $s$. In particular, setting $r=\ShrinkR{x} + \beta$
gives us that at least $\kappa$ nodes satisfy the distance bound,
implying that every node $u \in \NDBall{s}$ satisfies
$\Dnorm(s,u) \leq  \ShrinkR{x} + \beta$,
Furthermore, by the second inclusion of Lemma \ref{lem:ball-containment},
every node $v \in \NBall[\prime]{s}$ satisfies
$\Dnorm(s,v) \leq  \ShrinkR{x} + 2 \AddDist{\ShrinkR{x}} + \beta$.
Similar bounds apply for $t$.
We thus get that $\Dnorm(v,\phi(v)) \leq 2\ShrinkR{x} + 2
\AddDist{\ShrinkR{x}} + O(1) < 3\ShrinkR{x}$
for all $v$, and of course $\Dnorm(v,\phi(v)) = 0$ for unperturbed nodes $v$.

Now consider a pair $u \in \NBall[\prime]{s}$ and $v \in \NBall[\prime]{t}$ such that
at least one of $u,v$ is perturbed.
(We call such a pair a \emph{perturbed pair}.)
By triangle inequality,
  $|\Dnorm(v,u) - \Dnorm(\phi(v), \phi(u))| \leq 6\ShrinkR{x}$,
and the number of perturbed pairs is at most
  $(4 \AddDist{\ShrinkR{x}} + 2\beta) \cdot \ShrinkR{x}^{2\DIM-1}$,
by the bound on the number of perturbed nodes.

Next, we bound how much a single perturbed pair
$u \in \NBall[\prime]{s}, v \in \NBall[\prime]{t}$
affects the expected number of edges between the balls.
Because
$x + 6 \ShrinkR{x} \geq \Dnorm(\phi(u), \phi(v))
  \geq x - 6\ShrinkR{x}$, we get that
\[\frac{\Dnorm(u,v)}{\Dnorm(\phi(u),\phi(v))} \in 1 \pm O(\ShrinkR{x}/x).\]
We can now express the difference between the probabilities
of the edges $(\phi(u),\phi(v))$ and $(u, v)$ as
\begin{eqnarray*}
\left| (x \pm O(\ShrinkR{x}))^{-\DIM} - (x \pm 2\ShrinkR{x})^{-\DIM} \right|
& = &
x^{-\DIM} \cdot \left| \left(1 \pm \frac{O(\ShrinkR{x})}{x}\right)^{-\DIM}
- (1 \pm \frac{2\ShrinkR{x}}{x})^{-\DIM} \right| \\
& = &
O\left(x^{-\DIM} \cdot \left(\frac{1}{1 \pm O(\ShrinkR{x}/x)} - \frac{1}{1 \pm 2\ShrinkR{x}/x}\right)\right)\\
& = &
O\left(x^{-\DIM} \cdot \ShrinkR{x}/x\right).
\end{eqnarray*}
In the second step, we truncated the Binomial expansion
(because $\ShrinkR{x}/x = o(1/\DIM)$), and the final step again used that
$\ShrinkR{x}/x$ is small.
Summing over all perturbed pairs, the total expected difference in the
number of edges can be bounded by above as follows:
\[ \left|\E\left[\,
    \NumEdges(\NBall[\prime]{s},\NBall[\prime]{t}) - \NumEdges(\NDBall{s},\NDBall{t})\,
\right]\right|
\leq O\left(\frac{\ShrinkR{x}}{x} \cdot x^{-\DIM} \cdot \AddDist{\ShrinkR{x}} \cdot \ShrinkR{x}^{2\DIM-1}\right)
= O(x \,\AddDist{\ShrinkR{x}}), \]
where the last step was obtained by substituting the definition of $\ShrinkR{x}$.
The concentration now follows from Chernoff Bounds.
\end{proof}

\begin{lemma} \label{lem:adddist}
$\AddDist{x} = O(\AddDist{\ShrinkR{x}})$.
\end{lemma}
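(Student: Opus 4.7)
The plan is to combine Lemma~\ref{lem:edge-count} with the first-order expression for the expected edge count given in the remark after Theorem~\ref{thm:rec2ball-singleCat}, apply Chernoff bounds to get concentration of $\NEdges{s}{t}$, and then convert the resulting error in $\NEdges{s}{t}$ back into an error in $\DnormP(s,t)$ via the sensitivity of the map $N \mapsto \DIMCONST\,\ShrinkR{x}^2\,N^{-1/\DIM}$. Throughout, fix a node pair $(s,t)$ with $x = \Dnorm(s,t)$, and write $\mu^* = \Expect{\NumEdges(\NDBall{s},\NDBall{t})}$ and $\mu = \Expect{\NumEdges(\NBall[\prime]{s},\NBall[\prime]{t})}$.

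First I will apply the remark after Theorem~\ref{thm:rec2ball-singleCat} to the two ``ideal'' balls $\NDBall{s},\NDBall{t}$ of radius $r = \ShrinkR{x}$ and center-distance $x$, which yields $\mu^* = (\DIMCONST\,\ShrinkR{x}^2/x)^{\DIM}\,(1 + O(\ShrinkR{x}^{-2}))$. Substituting $\ShrinkR{x} = x^{1/2+1/\DIM}$, this gives $\mu^* = \DIMCONST^{\DIM}\,x^2\,(1 + o(1))$. Combining with Lemma~\ref{lem:edge-count}, we get $\mu = \mu^* \pm O(x\,\AddDist{\ShrinkR{x}}) = \DIMCONST^{\DIM}\,x^2 \pm O(x\,\AddDist{\ShrinkR{x}} + x^{2-2/\DIM})$. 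A standard Chernoff bound then gives $|\NEdges{s}{t} - \mu| = O(\sqrt{\mu\,\log n}) = O(x\,\sqrt{\log n})$ with high probability. Summing the three sources of error,
\begin{align*}
|\NEdges{s}{t} - \DIMCONST^{\DIM}\,x^2|
    \;\leq\; O\bigl(\,x\,\AddDist{\ShrinkR{x}} + x^{2-2/\DIM} + x\,\sqrt{\log n}\,\bigr).
\end{align*}

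Next I will propagate this error through the estimator $f(N) = \DIMCONST\,\ShrinkR{x}^2\,N^{-1/\DIM}$. Note that $f(\DIMCONST^{\DIM}\,x^2) = x$ exactly, and on a neighborhood of this point, $|f'(N)| = \Theta(x/(\DIM\,\mu)) = \Theta(1/x)$. Hence a perturbation of size $\Delta N$ in $N$ produces a perturbation of size $\Theta(\Delta N / x)$ in $f(N)$, as long as $\Delta N \ll \mu$, which is satisfied because all three error terms above are $o(x^2)$ (assuming inductively that $\AddDist{\ShrinkR{x}} = o(x)$). Dividing through by $x$, the first term becomes $O(\AddDist{\ShrinkR{x}})$, the second becomes $O(x^{1-2/\DIM})$, and the third becomes $O(\sqrt{\log n})$. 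The latter two are $\polylog(n)$ at the base of the recursion and, being strictly smaller than the first term once the recursion has unfolded a constant number of levels, get absorbed into $O(\AddDist{\ShrinkR{x}})$. This gives $|\DnormP(s,t) - x| = O(\AddDist{\ShrinkR{x}})$, which is exactly the claim.

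The main technical subtlety I anticipate is ensuring that the argument is valid across the full range of $x$, including when $x$ is close to the cutoff $\polylog(n)$ at which the algorithm stops recursing. For such $x$, $\ShrinkR{x}$ may be of the same order as $x$, so the ``$\ShrinkR{x}$ much smaller than $x$'' approximations used above (e.g., truncating the Taylor expansion of $f$, or bounding $\ShrinkR{x}^{-2}$ in the remark) must be checked to still hold up to a constant. A secondary care point is verifying the monotonicity hypothesis implicit in the recursive algorithm: when estimating the pair $(s,t)$, the algorithm has already finalized estimates for pairs within \NBall[\prime]{s} and \NBall[\prime]{t} with $\DnormS$-distance smaller than $\DnormS(s,t)$, so by induction those estimates enjoy the additive error $\AddDist{\ShrinkR{x}}$ used inside Lemma~\ref{lem:ball-containment} and Lemma~\ref{lem:edge-count}. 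Both issues are routine but must be stated carefully.
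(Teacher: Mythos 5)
Your approach is essentially the same as the paper's (combine the remark's first-order formula with Lemma~\ref{lem:edge-count}, apply Chernoff, then propagate the error through the inverse map), but there is an arithmetic error that, as written, invalidates the final absorption step.

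You claim $\mu^* = \DIMCONST^{\DIM}\,x^2 \pm O(x^{2-2/\DIM})$. But the multiplicative correction is $1 + O(\ShrinkR{x}^{-2})$ with $\ShrinkR{x}^{-2} = x^{-(1 + 2/\DIM)}$, so the additive correction is $\DIMCONST^{\DIM}\,x^2 \cdot O(x^{-1-2/\DIM}) = O(x^{1-2/\DIM})$, not $O(x^{2-2/\DIM})$ — you dropped a factor of $x^{-1}$. This matters. After dividing by $x$ via the sensitivity $|f'| = \Theta(1/x)$, your erroneous term contributes $O(x^{1-2/\DIM})$ to $|\DnormP(s,t) - x|$, which for $\DIM > 2$ grows polynomially in $x$. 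Your claim that it is ``absorbed into $O(\AddDist{\ShrinkR{x}})$ once the recursion has unfolded a constant number of levels'' is not sound: the quantity $x^{1-2/\DIM}$ depends only on $x$, not on recursion depth, and for the top-level (largest) $x$ it dwarfs the target $\polylog(n)$ additive error. A recurrence $\AddDist{x} \leq O(\AddDist{\ShrinkR{x}}) + \Omega(x^{1-2/\DIM})$ would not unroll to $\polylog(n)$, and the lemma as stated would be false. With the corrected exponent, the term after dividing by $x$ is $O(x^{-2/\DIM}) = O(1)$, which is trivially dominated by $\AddDist{\ShrinkR{x}} \geq \polylog(n)$, and your argument goes through. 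The paper sidesteps the issue entirely by using the weaker but adequate bound ``$\pm O(x)$'' for the ideal-ball expectation; after the algebraic inversion (which the paper carries out exactly rather than via a first-order derivative, but that is a cosmetic difference), this contributes $O(1)$ and is absorbed. Your secondary observations (the $\Theta(1/x)$ sensitivity of $f$, the $O(\sqrt{\log n})$ Chernoff contribution being absorbed since $\AddDist{\cdot} \geq \polylog(n)$, and the care points about $x$ near the $\polylog(n)$ cutoff and about the ordering of the recursion) are all correctly identified.
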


\begin{proof}
Consider two nodes $s$ and $t$ at normalized distance $x$.
Using an analysis very similar to the one in the proof of
Lemma~\ref{lem:edge-count}, the expected number of edges between
\NDBall{s} and \NDBall{t} is
$(\DIMCONST\, \ShrinkR{x}^2/x)^{\DIM} \pm O(x)
= \DIMCONST^{\DIM} x^2 \pm O(x)$
(where \DIMCONST is the constant from the remark after
Theorem~\ref{thm:rec2ball-singleCat}).
\ShrinkR{x} is chosen so that Chernoff Bounds ensure that w.h.p., the
actual number of edges between \NDBall{s} and \NDBall{t} does not deviate from
its expectation by more than $O(x \cdot \AddDist{\ShrinkR{x}})$.
Combining this number of edges with the bound from Lemma \ref{lem:edge-count},
the expected number of edges between \NBall[\prime]{s} and \NBall[\prime]{t} is
$\NEdges{s}{t} = \DIMCONST^{\DIM} x^2
                   \pm O(x \cdot \AddDist{\ShrinkR{x}})$
with high probability.
(The big-$O$ term combines both the misestimates bounded by the
Chernoff Bound and the ones from Lemma \ref{lem:edge-count}.)

Because the algorithm estimates the distance as
$\DIMCONST\, \ShrinkR{x}^2 \NEdges{s}{t}^{-1/\DIM}$,
the additive distortion is at most
\begin{eqnarray*}
\left|x - \DIMCONST\, \ShrinkR{x}^2\, \NEdges{s}{t}^{-1/\DIM}\right|
& = &
x \cdot \left|1 - \frac{\DIMCONST\, x^{2/\DIM}}{%
(\DIMCONST^{\DIM}\, x^2 \pm O(x \AddDist{\ShrinkR{x}}))^{1/\DIM}}  \right|\\
& = &
x \cdot \left|1 - \left(\frac{\DIMCONST^{\DIM}\,x}{\DIMCONST^{\DIM}\, x \pm O(\AddDist{\ShrinkR{x}})}\right)^{1/\DIM}\right|\\
& = &
x \cdot \left|1 - \left(1 \pm \frac{O(\AddDist{\ShrinkR{x}})}{\DIMCONST^{\DIM}\, x \pm O(\AddDist{\ShrinkR{x}})}\right)^{1/\DIM}\right|\\
& \leq &
x \cdot \frac{O(\AddDist{\ShrinkR{x}})}{\DIMCONST^{\DIM}\, x \pm O(\AddDist{\ShrinkR{x}})}\\
& \leq &
O(\AddDist{\ShrinkR{x}}).
\end{eqnarray*}
In the penultimate inequality, we used that
$|1-(1\pm \delta)^{1/\DIM}| \leq \delta$ for any $\delta$, and the
final inequality used that $\AddDist{\ShrinkR{x}} = o(x)$ to simplify
the denominator.
\end{proof}

\asedit{To complete the proof of Theorem~\ref{thm:rec2ball-singleCat}, we consider the recursion depth $\rho(x) = O(\log\log n)$ as described in the first paragraph of this subsection, and observe that
    $\AddDist{x} = 2^{O(\rho(x))} \asedit{= \polylog(n)}$.}



\OMIT{ 
\begin{lemma}\label{lm:RecTest}
Consider the SWG. Assume that the social distance is $\D=\ell_2^{\DIM}$, $\DIM\geq 2$.
Given a constant-distortion estimates $\DnormS(s,t)$, with probability at least $1-1/n$, for each
    $x \geq \polylog(n)$,
the \recTwoBall reconstructs each normalized distance $\D(s,t)=x$:
\begin{enumerate}
\item ($\DIM=2$) with additive distortion $\distortion_{x} = O(2^{O(\sqrt{\log{x}})})$.
\item ($\DIM>2$) with additive distortion  $O(\log^{O(1)}{n})$.
\end{enumerate}
\end{lemma}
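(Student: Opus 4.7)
The plan is to prove the two parts separately. Part~2 ($\DIM>2$) follows immediately from Theorem~\ref{thm:rec2ball-singleCat}, which guarantees that the \recTwoBall produces estimates with unit distortion and additive error $\polylog(n)$; every normalized distance $x \geq \polylog(n)$ is covered by that guarantee. So all the work is in Part~1, where $\DIM=2$.

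For Part~1 the obstacle is structural: the recursion radius $\ShrinkR{x} = x^{1/2+1/\DIM}$ that drove the recursion for $\DIM>2$ collapses to $\ShrinkR{x}=x$ when $\DIM=2$, so the algorithm stops making progress. The idea is to replace this schedule with a slower-shrinking one, for instance $\ShrinkR{x} = x\,/\,2^{c\sqrt{\log x}}$ for a suitable constant $c$, and then redo the analysis of Lemmas~\ref{lem:ball-containment}--\ref{lem:adddist} in this regime. For $\DIM=2$ the expected number of edges between two radius-$r$ balls at center-distance $x$ scales as $(\DIMCONST\, r^2/x)^2$, so with $r=\ShrinkR{x}$ the mean is $\Theta(\DIMCONST^2\, x^2 \cdot 2^{-4c\sqrt{\log x}})$. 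Chernoff then yields a sampling uncertainty in the edge count which, after inverting the relation $x = \DIMCONST\,r^2/\sqrt{\mu}$, translates into a distance error of order $x^2\sqrt{\log n}/\ShrinkR{x}^2 = 2^{O(\sqrt{\log x})}$. The spatial-uncertainty contribution is still $O(\AddDist{\ShrinkR{x}})$, using essentially the same bijection-between-balls argument as in Lemma~\ref{lem:edge-count}.

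Putting these together gives the recurrence $\AddDist{x} \leq O(\AddDist{\ShrinkR{x}}) + 2^{O(\sqrt{\log x})}$. Under the substitution $u=\sqrt{\log x}$, one recursive step sends $u$ to $\sqrt{\log x - c\sqrt{\log x}} = u - \Theta(1)$, so $u$ decreases by a constant amount each step, and the recursion reaches the $\polylog(n)$ regime after $O(\sqrt{\log x})$ steps. Since each step contributes an additive term dominated by $2^{O(\sqrt{\log x})}$, the total additive error is $2^{O(\sqrt{\log x})}$, matching the bound claimed in the lemma.

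I expect the main obstacle to be proving a tighter version of Lemma~\ref{lem:edge-count} in this regime. In the $\DIM>2$ analysis the ratio $\ShrinkR{x}/x$ is polynomially small, which lets one truncate Binomial expansions freely and absorb error terms; here the same ratio is only $2^{-\Theta(\sqrt{\log x})}$, so the lower-order terms dropped in that lemma are no longer automatically negligible. One must track lower-order Binomial contributions more carefully and apply the Chernoff bound to a more delicate expression, but the overall inductive scheme (ball containment, bijection between perturbed balls, and recursive unrolling) should all carry over from the $\DIM>2$ case.
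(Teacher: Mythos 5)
Your proposal matches the paper's intended approach. Part 2 is indeed just Theorem~\ref{thm:rec2ball-singleCat} restated, and for Part~1 the paper's own (rough, and ultimately omitted) proof also changes the shrinkage schedule so that the successive ratios $r_j/r_{j-1}$ grow geometrically, which yields a recursion depth $\Theta(\sqrt{\log x})$ exactly as you derive via the substitution $u=\sqrt{\log x}$; your explicit choice $\ShrinkR{x} = x/2^{c\sqrt{\log x}}$ produces the same schedule up to constants. You correctly flag the real technical care needed, namely that $\ShrinkR{x}/x$ is only $2^{-\Theta(\sqrt{\log x})}$ rather than polynomially small, so the Binomial truncation in the analog of Lemma~\ref{lem:edge-count} and the absorption of the Chernoff deviation into $O(\AddDist{\ShrinkR{x}})$ must be rechecked --- this is precisely why the paper describes the $\DIM=2$ case as ``slightly more complicated.'' One small imprecision: unrolling the recurrence $\AddDist{x} \le O(\AddDist{\ShrinkR{x}}) + 2^{O(\sqrt{\log x})}$ over $O(\sqrt{\log x})$ levels down to a $\polylog(n)$ base case actually yields $2^{O(\sqrt{\log x})}\cdot\polylog(n)$; for $x$ close to the $\polylog(n)$ threshold the polylog factor is not absorbed by the $2^{O(\sqrt{\log x})}$ term, so the bound as stated (and as stated in the paper) is a little loose in that regime --- worth noting, though it does not change the intended conclusion for the main range of $x$.
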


\begin{proof}
In order to bound the additive distortion, we need to bound $i(x)$.
We consider the cases $\DIM=2$ and $\DIM > 2$ separately.

For $\DIM=2$, we show that $i(x) = \sqrt{\log x}$,  this will give us the desired stretch.
Let $r_i = \ShrinkR{x}$, $r_{i-1} = r(\ShrinkR{x})$ and so on.
And let $\epsilon_i = r_i / r_{i-1}$
Note that
\begin{align*}
x <  1/\epsilon_1 \cdot 1/\epsilon_2 \cdot... \cdot 1/\epsilon_i  \cdot x_1 <  \prod_{1 \leq  i \leq j-1}{(\alpha^{1/2})^{i}} = O({(\alpha^{1/2})^{i(x)^2}}),
\end{align*}
and we conclude that $j = O(\sqrt{x})$. This gives the desired additive distortion.

We now turn to analyze the \recTwoBall when the dimension is greater than 2.
For $\DIM> 2$, we get
$1 / \epsilon_j > x^{(1/2-1/\DIM)^{i(x)- j + 1}}$, we get that $i(x) = O(\log\log(x))$.
The lemma follows.
\end{proof}

For perfectly uniform density the algorithm is the same.
The only difference is in the analysis of Lemma \ref{lem:edge-count}.
Note that the difference in the expectation on the number of edges between two balls whose centers are at distance $x$ in the canonical SWG vs.
the perfectly uniform density is at most $O(1/x)$.
Therefore the difference in the expectation on the number of edges between \NDBall{s} and \NDBall{t} vs. \NBall[\prime]{s} and \NBall[\prime]{t}
in the perfectly uniform density stays $O(\AddDist{\ShrinkR{x}}/x)$ and the rest of the analysis is the same.

}



\OMIT{ 
The high level idea of the proof is as follows.
In the first step, we show that the additive distortion for nodes at distance $x$ is some constant $\alpha$ times the additive distortion on $\ShrinkR{x}$, namely $\AddDist{x} = \alpha \AddDist{\ShrinkR{x}}$.
Note that the additive distortion $\AddDist{x}$ comes from two sources.
The first is from the fact that we don't know \NDBall{s} and \NDBall{t} but rather we have some estimations \NBall[\prime]{s} and \NBall[\prime]{t} on them.
The second source comes from the fact that even if we would have the balls \NDBall{s} and \NDBall{t}, the number of edges observed between these balls do not necessarily match the expectation.
To handle the first error, we show that \NBall[\prime]{s} and \NBall[\prime]{t} are ``close'' enough to \NDBall{s} and \NDBall{t}.
More specifically, we show the additive error from the difference of
the expectation on the number of edges between balls \NBall[\prime]{s}
and \NBall[\prime]{t} vs.~the balls \NDBall{s} and \NDBall{t} is $O(\AddDist{\ShrinkR{x}})$.
To handle the second error, we show by applying Chernoff bound that w.h.p.~the number of edges observed between the balls is ``close'' to the expectation.
Namely, w.h.p.~the additive error coming from the second source is also $O(\AddDist{\ShrinkR{x}})$.
From the first step we conclude that $\AddDist{x} = \alpha^{i(x)} \AddDist{x_1}$,
where $i(x)$ is the number of times we need to invoke $r$ on $x$ recursively until getting a number smaller than $x_1$.
In the second step of the analysis, we bound the number of iterations $i(x)$ and conclude the desired additive distortion.
} 
\section{Improving the distortion for multiple categories}
\label{sec:additive-mult}

\asedit{We next improve distortion from a multiplicative constant to
  $1+o(1)$ for multiple categories as well.}
We employ
the two algorithms from Section~\ref{sec:additive-singleCat}.
The main difference with the single-category case is that when we
count the number of edges between the balls in the original \multiSG
for some category $i$, some of these edges may come from other
categories, which might affect the estimation.
We would like to claim that the number of edges from other categories
between the two balls is small compared to the number of
edges from category $i$.
Unfortunately, such a claim does not follow from the
\LCD condition, which prompts the following stronger condition.

The stronger condition, called \globalCD[\GCDscale], states that at
all scales up to \GCDscale, categories look essentially ``random''
with respect to one another. More specifically, given a pair of balls
$B$, $B'$ in some category $i$, we count the number of node pairs
$(u,u')$, $u\in B$, $u'\in B'$ such that $u$ and $u'$ are close in
some other category $j$:
\begin{align}
\pairs_j(B,B',r)
    \triangleq |\Set{(u,u')}{u\in B,\, u'\in B',\, \D[j](u,u') < r}|.
\end{align}
If the node identifiers within each category are permuted randomly,
then the expected number of such node pairs is
    $\Theta(r^{\DIM}/n) \cdot |B|\, |B'|$,
and with high probability, the deviations are bounded by:
\begin{align}
\pairs_j(B,B',r)
\leq  O(r^{\DIM}/n) \cdot |B|\, |B'|\, + O(\log^2 n).
\label{eqn:globalCD}
\end{align}
\globalCD[\GCDscale] asserts that~\eqref{eqn:globalCD} holds
``locally:'' at all distance scales up to \GCDscale.

\begin{definition}\label{def:globalCD}
The \emph{\globalCD[\GCDscale]{}} condition states
that~\eqref{eqn:globalCD} holds for any two categories $i \neq j$, any
two disjoint category-$i$ balls $B$, $B'$ with
$|B| \cdot |B'| \leq \GCDscale^{\DIM}$,
and any $r \in (0,\GCDscale]$.
\end{definition}

\begin{note}{Remark.}
Equation~\eqref{eqn:globalCD} for randomly permuted categories is derived in Section~\ref{sec:permutations}. The expectation is relatively easy to derive, whereas the high-probability guarantee requires a more careful analysis.
We obtain (a slightly weaker version of) \LCD as a special case if $\GCDscale = \polylog(n)$ and $B$ is restricted to be a single node; \asedit{specifically, we obtain Equation~\eqref{eq:cat-disj} with right-hand side $O(\log^2 n)$.}
\end{note}

\OMIT{ 
\begin{definition}\label{def:globalCD}
The \emph{\globalCD[\GCDscale]} condition states that for any two
categories $i \neq j$, any two disjoint
category-$i$ balls $B$, $B'$ with
$|B| \cdot |B'| \leq \GCDscale^{\DIM+1-1/(\DIM+1)}$,
and any $r \in (0,\GCDscale^{1+1/(\DIM+1)}]$:
\begin{align}
\pairs_j(B,B',r)
 & \triangleq |\Set{(u,u')}{u\in B,\, u'\in B',\, \D[j](u,u') < r}|
\leq  O(r^{\DIM}/n) \cdot |B|\, |B'|\, + O(\log^2 n).
\label{eqn:globalCD}
\end{align}
When the parameter \GCDscale is omitted, we require that
\eqref{eqn:globalCD} hold for all $r$ and all balls $B, B'$,
and term the condition \emph{\globalCD.}
\end{definition}

\begin{note}{Remark.}
\globalCD (and thus also \globalCD[\GCDscale] for any \GCDscale) is ensured
w.h.p.~when the categories are randomly permuted, in the sense of
Lemma~\ref{lm:BasicLCD};
see Lemma~\ref{lm:globalCD} for a precise statement and proof.
Indeed, the expected number of pairs of nodes from $B$ and $B'$ at
distance $r$ under a random permutation is
$\Theta(r^{\DIM}/n) \cdot |B|\, |B'|$.
However, deriving the high-probability guarantee requires a more
careful analysis.
Also note that (a slightly weaker version of) \LCD is obtained
as a special case of \globalCD[\polylog(n)] when $B$ is restricted
to be a single node.
\end{note}
} 

\OMIT{
Recall that the constant distortion in Theorem~\ref{thm:main} contains
a non-trivial absolute constant and moreover depends on the
parameters of our model. Our results are phrased to reduce this
constant to a specific parameter which describes how well the social
distance is approximated by its own unit-disk graph.

\begin{definition}
Given a metric space $(V,\D)$, let \Dsp be the shortest-paths metric of
the unit-disk graph. The \emph{shortest-path distortion} is defined as
    $\max_{u,v\in V} \Dsp(u,v)/\D(u,v)$.
\end{definition}
} 

We will improve over the constant distortion under the condition
above. We present two results: an extension of the \TwoBallTest
(Section~\ref{sec:extended-two-ball}) and an analysis of the
\recTwoBall for multiple categories
(Section~\ref{sec:recTwoBall-multicat}).

Like in the single-category case, we focus on normalized distances.
For each category $i$, let \CSW[i] and \kSW[i]
be the normalization constant and the target degree, respectively.
The \emph{normalized} category-$i$ distance between nodes $u,v\in V$ is
$\Dnorm[i] (u,v) \triangleq \D[i](u,v)/ (\CSW[i]\,\kSW[i])^{1/\DIM}$.

\subsection{The \extTwoBall}
\label{sec:extended-two-ball}

The \globalCD[\GCDscale] condition does not apply to distance scales
beyond \GCDscale, and even for $\GCDscale=\infty$, the guarantee of Equation
\eqref{eqn:globalCD} is quite weak at very large scales.
Accordingly, we find that the \TwoBallTest becomes problematic at
large distance scales.
To deal with these issues, we apply the \TwoBallTest only to
distance scales small enough to provide strong guarantees.
The improved distance estimates define edge lengths,
and a post-processing step computes shortest paths with respect to
these edge lengths.
The resulting algorithm, called \emph{\extTwoBall,} satisfies the
following theorem.

\begin{theorem}\label{thm:extTwoBall}
Assume the setting of Theorem~\ref{thm:main} with
    \globalCD[\GCDscale^{1+1/(\DIM+1)}],
$\GCDscale\geq \polylog(n)$ for a sufficiently large $\polylog(n)$.
Then, the \extTwoBall runs in polynomial time,
and with high probability produces distance estimates \DnormP[i] with
the following guarantee:
\begin{quote}
For any pair $(s,t)$ at normalized distance $x = \Dnorm[i](s,t)$, the estimate
$\DnormP[i](s,t)$ has multiplicative distortion
   $1 \pm \left[
        (\min(x,\GCDscale,\hat{\GCDscale}))^{-\DIM/(2\DIM+2)}
        \cdot O(\log^2 n)
   \right]$,
where $\hat{\GCDscale} = \left(\frac{n}{\log n}\right)^{(2\DIM+2)/(2\DIM^2+3\DIM)}$.
\end{quote}
\end{theorem}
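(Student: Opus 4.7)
The plan is to apply the single-category \TwoBallTest of Algorithm~\ref{alg:2B} independently for each category, but only on pairs whose initial estimate lies below a safe threshold $\tilde R = \Theta(\min(\GCDscale, \hat{\GCDscale}))$, and then to extend to larger distances by shortest paths in the resulting weighted graph. First, invoke Theorem~\ref{thm:main} to obtain, for every category $i$, initial constant-distortion estimates $\DnormS[i]$. For any candidate pair $(s,t)$, these let us form balls around $s$ and $t$ of any target cardinality $\kappa$ under $\DnormS[i]$ that agree with the true category-$i$ balls up to polylogarithmic boundary perturbations, by an argument analogous to Lemma~\ref{lem:ball-containment}.

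Second, for each category $i$ and each pair $(s,t)$ with $\DnormS[i](s,t) \leq c\,\tilde R$, set $\kappa = x^{\DIM(\DIM+2)/(2\DIM+2)}$ where $x = \DnormS[i](s,t)$, form $\NBall{s}$ and $\NBall{t}$ as the $\kappa$ closest nodes to $s$ and $t$ respectively under $\DnormS[i]$, count the edges $\NEdges{s}{t}$ between them in \ESW, and output $\DnormP[i](s,t) = (\kappa^2/\NEdges{s}{t})^{1/\DIM}$. The central technical claim is that with high probability every such pair at true distance $x$ satisfies
\begin{align*}
\NEdges{s}{t} \;=\; (1 \pm \epsilon(x))\,\frac{\kappa^2}{x^\DIM},
\qquad \epsilon(x) = x^{-\DIM/(2\DIM+2)}\cdot O(\log^2 n).
\end{align*}
Decompose $\NEdges{s}{t}$ into a category-$i$ ``signal'' plus the sum of category-$j$ ``noise'' for $j\neq i$. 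The signal term is handled exactly as in the proof of Theorem~\ref{thm:simple2ball-singleCat}, producing the single-category error.

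Third, for the cross-category noise, fix $j\neq i$ and partition pairs $(u,v)\in\NBall{s}\times\NBall{t}$ by their $\Dnorm[j]$-scale into dyadic bands $[2^k, 2^{k+1})$. The strengthened hypothesis \globalCD[\GCDscale^{1+1/(\DIM+1)}] is invoked here, and the exponent $1+1/(\DIM+1)$ is precisely what is required so that both conditions of Definition~\ref{def:globalCD} cover the range of pairs and scales touched when $|\NBall{s}|\cdot|\NBall{t}| = \kappa^2$ can reach $\tilde R^{\DIM(\DIM+2)/(\DIM+1)}$. The hypothesis yields at most $O(2^{k\DIM}/n)\kappa^2 + O(\log^2 n)$ pairs at scale $2^k$, each producing a category-$j$ edge with probability $\min(1,2^{-k\DIM})$. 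Summing the weighted bound over the $O(\log n)$ relevant scales gives per-category expected noise $O(\kappa^2\log n/n) + O(\log^2 n)$; summing over the other $\NUMCAT-1$ categories gives $O(\NUMCAT \kappa^2\log n/n) + O(\NUMCAT\log^2 n)$, and Chernoff bounds (edges across categories are mutually independent given node positions) give concentration. The definition $\hat{\GCDscale} = (n/\log n)^{(2\DIM+2)/(2\DIM^2+3\DIM)}$ is pinned exactly so that the relative noise $\NUMCAT \log n \cdot x^\DIM/n$ matches $\epsilon(x)$ at $x=\hat{\GCDscale}$; inverting the formula $\DnormP[i](s,t) = (\kappa^2/\NEdges{s}{t})^{1/\DIM}$ then transfers the additive edge-count bound to the claimed multiplicative distortion, exactly as in the proof of Theorem~\ref{thm:simple2ball-singleCat}.

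Finally, for pairs with $x > \tilde R$, build for each $i$ a weighted graph $H_i$ whose edges are pairs $(u,v)$ with $\DnormS[i](u,v)\leq c\,\tilde R$, weighted by $\DnormP[i](u,v)$, and return shortest-path distance in $H_i$ as the final estimate. Near-uniform density guarantees a path in $H_i$ with $\Theta(x/\tilde R)$ hops of length $\Theta(\tilde R)$ whose total true length equals $x$ up to additive $O(\tilde R)$; since each hop already has multiplicative distortion $1\pm \epsilon(\tilde R)$, the total inherits the same relative error, matching the theorem. The \textbf{main obstacle} is the cross-category noise decomposition: the dyadic accounting must stay below the target distortion at every scale simultaneously, and the joint requirement that the \globalCD scale cover every band while the relative noise remain below $\epsilon(x)$ is exactly what forces both the ball-size exponent $\DIM(\DIM+2)/(2\DIM+2)$ and the strengthened scale $\GCDscale^{1+1/(\DIM+1)}$; the balance between signal concentration and cross-category contamination is what pins down the exponent in $\hat{\GCDscale}$.
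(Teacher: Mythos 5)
Your overall architecture matches the paper's: obtain initial estimates from Theorem~\ref{thm:main}, run the \TwoBallTest on pairs whose initial estimate is below a safe threshold, control the cross-category noise via the strengthened \globalCD condition in dyadic bands (this part mirrors Lemma~\ref{lem:2ball-mult}), and stitch the results together over long distances by shortest paths. The dyadic accounting, the role of $\GCDscale^{1+1/(\DIM+1)}$ in covering $\kappa^2 \le \GCDscale^{\DIM(\DIM+2)/(\DIM+1)}$, and the source of $\hat{\GCDscale}$ from balancing the $x^{\DIM+1}/n$ noise against the $x^{(\DIM+2)/(2\DIM+2)}$ sampling error are all in line with the paper. (One minor imprecision: the \globalCD condition does not cover scales above $\GCDscale^{1+1/(\DIM+1)}$, so those bands require a separate counting argument, as the paper does in Case~1 of Lemma~\ref{lem:2ball-mult}; you fold them into ``the $O(\log n)$ relevant scales,'' which glosses over this.)

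However, your post-processing step has a genuine gap on the \emph{contraction} side. You define $H_i$ as all edges $(u,v)$ with $\DnormS[i](u,v)\le c\tilde R$ --- an \emph{upper} bound only --- and take the unrestricted shortest path. Your argument then exhibits a path with $\Theta(x/\tilde R)$ hops of length $\Theta(\tilde R)$, which is correct for the \emph{expansion} direction, but says nothing about competing shorter-looking paths. If the shortest path in your $H_i$ uses many \emph{short} hops, the per-hop additive error $\Delta_\ell$ from the \TwoBallTest can exceed the true hop length $\ell$ once $\ell=\polylog(n)$, so a path of true length $x$ built out of polylogarithmic hops could be assigned an estimated length far below $x(1-o(1))$; the errors do not cancel, and the multiplicative contraction can blow up. The paper avoids this by defining $H_t$ in Equation~\eqref{eq:ext-two-ball-Ht} to contain only hops with $\DnormS(u,v)\ge \GCDscale/(2\EXPAN)$ (apart from the final hop into $t$); this forces every used hop to have true length $\ge \GCDscale/(2\EXPAN^2)$, so that $\Dnorm[H](u,v)\ge \GCDscale/(4\EXPAN^2)$ per hop, and paths with more than $4x\EXPAN^2/\GCDscale$ hops are automatically at least $x$ long --- exactly the contraction argument in Lemma~\ref{lm:ext-two-ball-SP}. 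To repair your argument you need the same lower bound on hop lengths (not merely an upper bound on $\DnormS$); without it the claimed ``inherits the same relative error'' does not follow.
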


\begin{note}{Remark.}
The distortion in Theorem~\ref{thm:extTwoBall} can be interpreted as
    $1\pm O\left( \, \ell^{-\DIM/(2\DIM+2)} \cdot \log^2 n \right)$,
where $\ell = \min(x,\GCDscale,\hat{\GCDscale})$ is, in some sense, the effective distance scale.
\end{note}


We begin by defining the \extTwoBall precisely.
The input consists of the \multiSG and the distance estimates
$\DnormS = \DnormS[i]$ for a given category $i$,
as guaranteed by Theorem~\ref{thm:main}.
Recall that these are non-contracting estimates with constant
expansion \EXPAN and $\polylog(n)$ additive error;
we assume that (an upper bound on) \EXPAN
is known to the algorithm. Apart from \EXPAN, the algorithm is
parameterized by the distance scale \GCDscale from
Theorem~\ref{thm:extTwoBall}.

The algorithm proceeds as follows.
(See Algorithm~\ref{alg:ext2B} for the pseudocode).
It focuses on the edge set
    $H  = \Set{(u,v)}{\DnormS(u,v) \leq \GCDscale}$.
For each edge $(u,v) \in H$, it applies the \TwoBallTest
with respect to distances \DnormS to obtain improved distance
estimates $\Dnorm[H](u,v)$.
These improved estimates are treated as edge lengths for $H$.
For each node pair $(s,t)$, we distinguish two cases.
If the edge $(s,t)$ is in $H$, we simply set
the final estimate $\DnormP[i](s,t) = \Dnorm[H](s,t)$.
Otherwise, the final distance estimate $\DnormP[i](s,t)$ is the length of the
shortest $s$-$t$ path using the edge set
\begin{align} \label{eq:ext-two-ball-Ht}
H_t = \Set{(u,v)\in H}{\DnormS(u,v) \geq \tfrac{\GCDscale}{2\EXPAN} \text{ or $v=t$}}.
\end{align}
In other words, the distance is estimated by the length of the
  shortest path using only ``sufficiently long'' edges, except for
  possibly the last edge, which may be short.

\begin{algorithm}[htb]
{\bf Inputs.} Original edge set \ESW and initial estimates $\DnormS = \DnormS[i]$ from Theorem~\ref{thm:main}.

{\bf Parameters.} Distance scale \GCDscale and expansion \EXPAN of \DnormS.

{\bf Output.} Improved distance estimates \DnormP[i].\vspace{2mm}

$H = \Set{(u,v)}{\DnormS(u,v) \leq \GCDscale}$. \vspace{2mm}

{\bf The \TwoBallTest}. For each node pair $(s,t)\in H $,
\begin{OneLiners}
\item[1.] $\NBall{s} = \NRBall[\DnormS]{s}{\kappa}$
and
    $\NBall{t} = \NRBall[\DnormS]{t}{\kappa}$,\\
where $\kappa = x^{\DIM(\DIM+2)/(2\DIM+2)}$ and $x = \DnormS(s,t)$.

\item[2.] \NEdges{s}{t} is the number of edges in \ESW
  between \NBall{s} and \NBall{t}.
\item[3.] $\Dnorm[H](s,t) = (\kappa^2 /\NEdges{s}{t})^{1/\DIM}$.
\end{OneLiners}

\vspace{2mm}
{\bf Post-processing}. For each node pair $(s,t)$,

If $(s,t)\in H$, then $\DnormP[i](s,t)=\Dnorm[H](s,t)$; otherwise

\begin{OneLiners}
\item[1.] $H_t = \Set{(u,v)\in H}{\DnormS(u,v) \geq \tfrac{\GCDscale}{2\EXPAN} \text{ or $v=t$}} $.

\item[2.] $\DnormP[i](s,t)$ is the length of the shortest $s$-$t$ path
  in $H_t$ with respect to edge lengths $\Dnorm[H]$.
\end{OneLiners}

\vspace{2mm}
{\bf Notation.} \NRBall[\DnormS]{u}{\kappa} is the set of
the $\kappa$ closest nodes to $u$ according to \DnormS,
breaking ties arbitrarily.\\

\caption{The \extTwoBall (for a given category $i$).}
\label{alg:ext2B}
\end{algorithm}

\OMIT{ 
$\mathcal{P}_{s,t}$ be the set of all $s$-$t$
paths in $H$ such that no edge $e=(u,v)$ in $P$ except possibly the
last edge has $\DnormS[i](u,v) < \ell$.
In other words, $\mathcal{P}_{s,t}$ consists of paths all of whose
edges are fairly long, except for the last one (according to \DnormS[i]).
Define
\begin{align}
\DnormP[i](s,t)
& = \min_{P \in \mathcal{P}_{s,t}} \sum_{(u,v) \in P} \Dnorm[H](u,v)
\label{eqn:multi-estimate}
\end{align}
to be the shortest-path distance in $H$ between $s$ and $t$,
where paths are restricted to be from $\mathcal{P}_{s,t}$.
(Notice that given $H$, $\DnormP[i](s,t)$ can be computed efficiently
by searching over all candidate last edges, then applying a
shortest-path search in a graph restricted to the long edges in $H$.)
} 

\subsubsection{Analysis: the \TwoBallTest for multiple categories}

We begin the analysis by showing that for sufficiently small distances,
\globalCD[\GCDscale] ensures that the basic \TwoBallTest gives
accurate estimates.

\begin{lemma} \label{lem:2ball-mult}
Assume that the \globalCD[\GCDscale^{1+1/(d+1)}] condition holds, and let
$(s,t)$ be a node pair at normalized category-$i$ distance
$\Dnorm[i](s,t) = x \leq \GCDscale$.
Then, the \TwoBallTest obtains a distance estimate $\DnormP[i](s,t)$ of
$\Dnorm[i](s,t)$ with the following guarantee:
\[ \left| \DnormP[i](s,t)- \Dnorm[i](s,t) \right|
    \leq \left( x^{(\DIM+2)/(2\DIM+2)} + \tfrac{x^{\DIM+1}}{n} \right) \cdot O(\log^2 n).
\]
\end{lemma}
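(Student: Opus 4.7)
The plan is to mimic the single-category proof of Theorem~\ref{thm:simple2ball-singleCat}, then add a new source of error coming from edges in categories other than $i$ that contribute to \NEdges{s}{t}. First I would control the in-category contribution. Since \DnormS has constant expansion and $\polylog(n)$ additive error, near-uniform density implies that the balls $\NBall{s} = \NRBall[\DnormS]{s}{\kappa}$ and $\NBall{t}=\NRBall[\DnormS]{t}{\kappa}$ (with $\kappa = x^{\DIM(\DIM+2)/(2\DIM+2)}$) consist of nodes at normalized category-$i$ distance $O(\ShrinkT{x})$ from $s$ and $t$ respectively, where $\ShrinkT{x} = x^{(\DIM+2)/(2\DIM+2)}$. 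Consequently every pair $(u,v)\in\NBall{s}\times\NBall{t}$ has $\Dnorm[i](u,v)\in [x - O(\ShrinkT{x}),\; x + O(\ShrinkT{x})]$, so the expected number of category-$i$ edges between the balls lies in $[\kappa^2/(x+O(\ShrinkT{x}))^{\DIM},\; \kappa^2/(x-O(\ShrinkT{x}))^{\DIM}]$. Chernoff Bounds give concentration tight enough to keep the sampling error within $O(\ShrinkT{x}\cdot \log^{1/2} n)$, so in the absence of cross-category edges the inversion $\DnormP[i](s,t)=(\kappa^2/\NEdges{s}{t})^{1/\DIM}$ produces the first claimed error term $O(\ShrinkT{x}\cdot\log^2 n)$, exactly as in Theorem~\ref{thm:simple2ball-singleCat}.

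The new ingredient is to bound the number of cross-category edges between \NBall{s} and \NBall{t}. For each other category $j\neq i$, observe that $|\NBall{s}|\cdot|\NBall{t}|=\kappa^2\leq \GCDscale^{\DIM(\DIM+2)/(\DIM+1)} = (\GCDscale^{1+1/(\DIM+1)})^{\DIM}$, so the hypothesis \globalCD[\GCDscale^{1+1/(\DIM+1)}] applies. Partitioning pairs $(u,v)$ by their category-$j$ distance into dyadic shells $r\in[2^k,2^{k+1}]$ and applying~\eqref{eqn:globalCD} to each shell, the number of pairs is bounded by $O(2^{k\DIM}/n)\kappa^2+O(\log^2 n)$, while the category-$j$ edge probability at that scale is $O(\min(1,2^{-k\DIM}))$. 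Summing over the $O(\log n)$ shells up to scale $\GCDscale^{1+1/(\DIM+1)}$ gives an expectation of at most $O(\kappa^2\log n/n+\log^2 n)$ category-$j$ edges between the balls. Shells beyond $\GCDscale^{1+1/(\DIM+1)}$ contribute only $O(1)$ total, since there the edge probability decays faster than $\kappa^2$ itself and $\kappa^2/\GCDscale^{\DIM(1+1/(\DIM+1))} = O(1)$ by construction. A union bound over the $O(1)$ other categories, followed by Chernoff Bounds, promotes the count of spurious edges in \NEdges{s}{t} to $O(\kappa^2\log n/n+\log^2 n)$ with high probability.

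Finally I would propagate the spurious count through the inversion. If $\epsilon$ extra edges are added to a true count of $\Theta(\kappa^2/x^{\DIM})$, then $(\kappa^2/\NEdges{s}{t})^{1/\DIM}$ is rescaled by a factor $(1+x^{\DIM}\epsilon/\kappa^2)^{-1/\DIM}$, contributing an additive error of $O(x^{\DIM+1}\epsilon/\kappa^2)$. Plugging in $\epsilon=O(\kappa^2\log n/n+\log^2 n)$ produces a term $O(x^{\DIM+1}\log n/n)$, matching the second summand in the lemma, plus a term $O(x^{\DIM+1}\log^2 n/\kappa^2)=O(x^{1/(\DIM+1)}\log^2 n)$ which is dominated by the first summand because $1/(\DIM+1)<(\DIM+2)/(2\DIM+2)$ for any $\DIM\geq 1$. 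Combining with the in-category analysis yields the claimed bound.

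The main obstacle, in my view, is selecting the right scale at which to invoke \globalCD. The balls \NBall{s}, \NBall{t} each contain $\kappa$ nodes, yet cross-category pairs inside them range over category-$j$ distances that may far exceed \GCDscale; the exponent $1+1/(\DIM+1)$ in the \globalCD parameter is tuned so that $(\GCDscale^{1+1/(\DIM+1)})^{\DIM}$ exactly accommodates the product $\kappa^2$, and simultaneously so that pairs beyond that scale contribute negligibly to the edge count. Reconciling these two constraints, and verifying that the resulting cross-category error appears as an $x^{\DIM+1}/n$ correction rather than contaminating the dominant $\ShrinkT{x}$ term, is where care is needed.
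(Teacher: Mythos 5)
Your proposal is correct and follows essentially the same route as the paper's proof: handle the in-category edges as in Theorem~\ref{thm:simple2ball-singleCat}, bound the cross-category spurious edges with \globalCD via dyadic shells up to scale $\GCDscale^{1+1/(\DIM+1)}$ (noting that $\kappa^2\leq(\GCDscale^{1+1/(\DIM+1)})^{\DIM}$ and that farther shells contribute $O(1)$), and propagate the resulting $O(\kappa^2\log n/n+\log^2 n)$ excess through the inversion. Your bookkeeping of the two resulting error terms and the observation that $x^{1/(\DIM+1)}$ is dominated by $x^{(\DIM+2)/(2\DIM+2)}$ match the paper's calculation.
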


\OMIT{\[
\Dnorm[i](s,t)
- O (x^{(\DIM+2)/(2\DIM+2)}
+ x^{\DIM+1} \cdot \frac{\log n}{n}
+ x^{1/(\DIM+1)} \log^2 n)
\; \leq \; \DnormP[i](s,t) \; \leq \;
\Dnorm[i](s,t)
+ O (x^{(\DIM+2)/(2\DIM+2)}).
\]}

\begin{proof}
Recall from the proof of Theorem~\ref{thm:simple2ball-singleCat} that to estimate
$\Dnorm[i](s,t)$, the \TwoBallTest
considers two balls \NBall{s}, \NBall{t} around $s$ and $t$,
respectively, and counts edges between them.
The balls were chosen so that
$|\NBall{s}| = |\NBall{t}| = \kappa \triangleq \ShrinkT{x}^{\DIM}$, where
    $\ShrinkT{x} = x^{(\DIM+2)/(2\DIM+2)}$.
The improved distance estimate is
    $\DnormP(s,t) \triangleq (\kappa^2/\NEdges{s}{t})^{1/\DIM}$,
where \NEdges{s}{t} is the number of edges between \NBall{s} and \NBall{t}.

If only edges from \ESW[i] were counted,
Theorem~\ref{thm:simple2ball-singleCat} would apply verbatim.
However, edges between \NBall{s} and \NBall{t} from other categories
can be erroneously included in the count.
The presence of other categories never \emph{decreases}
\NEdges{s}{t}, so the high-probability lower bound on
\NEdges{s}{t}, and hence the high-probability upper bound on
$\DnormP(s,t)$, carries over from
Theorem~\ref{thm:simple2ball-singleCat}.

We need to prove a lower bound on $\DnormP(s,t)$. Let
$\NEdges[i]{s}{t}$ be the number of category-$i$ edges between
\NBall{s} and \NBall{t}. In the proof of Theorem
\ref{thm:simple2ball-singleCat}, we showed that with high probability,
$\NEdges[i]{s}{t}
 \leq \frac{\kappa^2}{(x - 8c\, \ShrinkT{x})^{\DIM}}$,
for some constant $c$. This implies
$ \NEdges[i]{s}{t}
     \leq \frac{\kappa^2}{x^{\DIM}}(1 + O(c\,r_x/x))^d
     \leq \frac{\kappa^2}{x^{\DIM}}(1 + O(c \DIM\, r_x/x))$.

We next count edges from other categories between
\NBall{s} and \NBall{t}.
Fix some category $j \neq i$, and consider node pairs
$(u \in \NBall{s},u' \in \NBall{t})$.
We distinguish between two distance scales for $\Dnorm[j](u,u')$.
\begin{enumerate}
\item We first consider the case that
$\Dnorm[j](u,u') > \GCDscale^{1+1/(\DIM+1)}$.
The probability for the edge $(u,u')$ to exist is then at most
$O(\GCDscale^{-(\DIM+1-1/(\DIM+1))})$.
The number of candidate pairs $(u,u')$ is at most
$\kappa^2 = x^{\DIM+1-1/(\DIM+1)} \leq \GCDscale^{\DIM+1-1/(\DIM+1)}$,
so the expected number of such long edges is $O(1)$.
Using Chernoff Bounds, with high probability, the number of long edges
is at most $O(\log^2 n)$.
\item The other case is $\Dnorm[j](u,u') \leq \GCDscale^{1+1/(\DIM+1)}$.
We divide the range of possible distances into exponentially
increasing buckets of the form $(y,2y]$. Suppose that
$y \leq \Dnorm[j](u,u') \leq 2y$ (for some $y \leq \GCDscale/2$).
Then, the pair $(u,u')$ has an edge with probability at most $O(y^{-\DIM})$,
and by the \globalCD[\GCDscale^{1+1/(\DIM+1)}] condition,
there are at most
$O(y^{\DIM}/n) \cdot |\NBall{s}|\, |\NBall{t}|\, + O(\log^2 n)$ pairs
$(u,u')$ at this distance scale.
Using linearity of expectations, and summing over all $O(\log n)$
distance scales $y$, we obtain that
the expected number of short category-$j$ edges between
\NBall{s} and \NBall{t} is at most
$O(\frac{|\NBall{s}| \, |\NBall{t}| \, \log n}{n} + \log^2 n)$,
and Chernoff Bounds establish concentration.
\end{enumerate}

Combining both cases, and substituting that
$|\NBall{s}| = |\NBall{t}| = \kappa$ gives us that with high
probability, the number of category-$j$ edges
between \NBall{s} and \NBall{t} is
at most $O(\frac{\log n}{n} \cdot \kappa^2 + \log^2 n)$.
Combining these edges across all categories $j\neq i$
and plugging in the upper bound for \NEdges[i]{s}{t}, we obtain:
\begin{align*}
\NEdges{s}{t}
&\leq \frac{\kappa^2}{x^{\DIM}}
  \left( 1 + O\left( c \DIM\, \frac{\ShrinkT{x}}{x} \right)\right)
           + O(\NUMCAT) \left( \tfrac{\log n}{n} \cdot \kappa^2 + \log^2 n
  \right).
\end{align*}
Adding some $\log n$ factors for simplification, and hiding the
constants inside $O(\cdot)$, we can re-write this bound as follows:
\begin{align*}
\NEdges{s}{t}
&\leq \frac{\kappa^2}{x^{\DIM}}
    \left(1 + O(\log^2 n) \left(
        x^{-\DIM/(2\DIM+2)} + \tfrac{x^{\DIM}}{n}
    \right)\right).
\end{align*}
Substituting the definition
    $\DnormP(s,t) \triangleq (\kappa^2/\NEdges{s}{t})^{1/\DIM}$,
it follows that
\begin{align*}
\DnormP[i](s,t)
&\geq x\,\left(1 - O(\log^2 n) \left(
        x^{-\DIM/(2\DIM+2)} + \tfrac{x^{\DIM}}{n}
    \right)\right) \\
&\geq x - O(\log^2 n) \left(
        x^{(\DIM+2)/(2\DIM+2)} + \tfrac{x^{\DIM+1}}{n}
    \right).\qedhere
\end{align*}
\end{proof}

\subsubsection{Analysis: the post-processing step}

Theorem~\ref{thm:extTwoBall} easily follows from
Lemma~\ref{lem:2ball-mult} and the following Lemma
\ref{lm:ext-two-ball-SP}, which analyzes the post-processing step.
The lemma is not specific to the actual estimates produced by the
\TwoBallTest.
Instead, it states that if each individual edge's length is estimated
with small additive distortion (compared to the true edge length),
then the multiplicative distortion of the overall estimates is small.
For readability, we continue to omit the subscript $i$ from all metrics.

\begin{lemma}\label{lm:ext-two-ball-SP}
Assume the setting of Theorem~\ref{thm:main},
and let \EXPAN be the expansion in \DnormS.
Consider running the post-processing step of the \extTwoBall
(parameterized by some \GCDscale) on
distance estimates \Dnorm[H] satisfying the following for some
$\Delta < \tfrac{\GCDscale}{4 \EXPAN^2}$:
\begin{align}\label{eq:ext-two-ball-SP}
    \left| \Dnorm[H](u,v)-\Dnorm(u,v) \right| \leq \Delta
    \quad \text{ for all } (u,v) \in H.
\end{align}
Then, the final estimates $\DnormP(s,t)$ have multiplicative distortion
    $1+O(\EXPAN^2 \Delta/\GCDscale)$
for all node pairs $(s,t)$ not in $H$.
\end{lemma}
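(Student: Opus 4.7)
The plan is to bound $\DnormP(s,t)$ above and below separately, both in terms of $\Dnorm(s,t)$. The key parameter controlling both bounds is a ``step size'' $\rho = \Theta(\GCDscale/\EXPAN)$: the largest normalized distance between two nodes whose $\DnormS$-estimate is guaranteed to lie in the window $[\tfrac{\GCDscale}{2\EXPAN},\GCDscale]$, i.e., the window in which an edge belongs both to $H$ and to the ``long-edge'' part of $H_t$. The no-contraction/expansion-$\EXPAN$ property of $\DnormS$ (plus the $\polylog$ additive error, absorbed into constants) guarantees that any node pair with $\Dnorm$-distance in the range $[\GCDscale/(2\EXPAN),\ \GCDscale/\EXPAN-\polylog]$ is both in $H$ and long in $H_t$.

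For the upper bound, I will construct an explicit $s$--$t$ path in $H_t$ whose $\Dnorm[H]$-length is close to $\Dnorm(s,t)$. Walking along a shortest path in the underlying Euclidean category-$i$ metric from $s$ to $t$, I place waypoints $s=u_0,u_1,\dots,u_k=t$ at $\Dnorm$-distance roughly $\rho$ apart, using the near-uniform density to guarantee that such waypoints exist within an $O(1)$ perturbation; the last edge is allowed to be short since $H_t$ exempts edges into $t$. By the choice of $\rho$, every edge $(u_i,u_{i+1})$ lies in $H_t$, so by hypothesis~\eqref{eq:ext-two-ball-SP} its $\Dnorm[H]$-length is within $\Delta$ of $\Dnorm(u_i,u_{i+1})$. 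Summing gives $\DnormP(s,t) \leq \sum \Dnorm(u_i,u_{i+1}) + k\Delta \leq \Dnorm(s,t)(1+O(1/\rho)) + k\Delta$; since $k = O(\Dnorm(s,t)/\rho) = O(\EXPAN\,\Dnorm(s,t)/\GCDscale)$, this yields $\DnormP(s,t) \leq \Dnorm(s,t)\bigl(1+O(\EXPAN\,\Delta/\GCDscale)\bigr)$.

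For the lower bound, let $P$ be the actual shortest $s$--$t$ path in $H_t$ under $\Dnorm[H]$ and let $k$ be its number of edges. For each edge $(u,v)\in P$ other than possibly the terminal one, the definition~\eqref{eq:ext-two-ball-Ht} of $H_t$ gives $\DnormS(u,v) \geq \GCDscale/(2\EXPAN)$, which, by no contraction, forces $\Dnorm(u,v) \geq \GCDscale/(2\EXPAN^2)$ up to lower-order terms. Combined with the hypothesis $\Delta < \GCDscale/(4\EXPAN^2)$, the $\Dnorm[H]$-length of every non-terminal edge is at least $\GCDscale/(4\EXPAN^2)$, so $k = O(\EXPAN^2\,\DnormP(s,t)/\GCDscale)$. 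Plugging in the upper bound from the previous paragraph (which lets us replace $\DnormP(s,t)$ by $O(\Dnorm(s,t))$) yields $k = O(\EXPAN^2\,\Dnorm(s,t)/\GCDscale)$. Finally, the additive-distortion hypothesis on $\Dnorm[H]$ gives the telescoping estimate $\DnormP(s,t) = \sum_{(u,v)\in P}\Dnorm[H](u,v) \geq \sum\Dnorm(u,v) - k\Delta \geq \Dnorm(s,t) - k\Delta$, and substituting the bound on $k$ produces $\DnormP(s,t) \geq \Dnorm(s,t)\bigl(1-O(\EXPAN^2\,\Delta/\GCDscale)\bigr)$, as required.

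The main obstacle I expect is the waypoint construction in the upper bound: I need to show that one can always find a node close to any target point along the $s$-to-$t$ line whose distance to the previous waypoint falls inside the narrow window $[\GCDscale/(2\EXPAN),\GCDscale/\EXPAN]$, so that the resulting edge lies both in $H$ and in $H_t$. Near-uniform density gives a node within $O(1)$ of any target location, but the window needs to be large enough compared to this $O(1)$ slack plus the $\polylog$ additive error of $\DnormS$; this is where the hypothesis $\GCDscale \geq \polylog(n)$ (with a sufficiently large polylogarithmic factor) enters. All remaining steps are triangle-inequality calculations and the telescoping sums indicated above.
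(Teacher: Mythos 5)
Your proposal is correct and follows essentially the same route as the paper: the upper bound is via the same waypoint-path construction (nodes near evenly spaced points on the $s$--$t$ segment, with the last hop exempt from the long-edge requirement), and the lower bound hinges on the same observation that every non-terminal hop of any $H_t$-path has $\Dnorm[H]$-length at least $\GCDscale/(4\EXPAN^2)$ because $\Delta < \GCDscale/(4\EXPAN^2)$. The only organizational difference is in the lower bound: the paper does a case split (if the path has at most $4x\EXPAN^2/\GCDscale$ hops, telescoping gives the bound directly; if it has more, the hop-length bound already forces $\Dnorm[H](P) \geq x$ with no contraction at all), whereas you first bound the hop count $k$ using the hop-length lower bound together with your already-proved upper bound on $\DnormP(s,t)$, then telescope once. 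Both are fine; the paper's version is slightly more self-contained in that the lower-bound argument never invokes the upper bound. Your concern about the $\polylog(n)$ additive error in $\DnormS$ interacting with the waypoint window is a fair one — the paper glosses over it by writing $\DnormS(u,v) \leq \EXPAN\,\Dnorm(u,v) \leq \GCDscale$, implicitly absorbing the additive error into the waypoint spacing; your explicit appeal to $\GCDscale \geq \polylog(n)$ with a sufficiently large polylog is the right way to make that step airtight.
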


\begin{proofof}[Proof of Theorem \ref{thm:extTwoBall}]
Without loss of generality, assume that
    $\GCDscale \leq \hat{\GCDscale} $,
where
    $\hat{\GCDscale}$
is from the theorem statement.
(If $\GCDscale > \hat{\GCDscale}$, then we could parameterize the
algorithm by $\hat{\GCDscale}$ instead.)
Then the upper bound in Lemma~\ref{lem:2ball-mult} becomes
    $\Delta_x \triangleq x^{(\DIM+2)/(2\DIM+2)}\cdot O(\log^2 n)$.

To complete the proof of Theorem \ref{thm:extTwoBall}, notice that
all edges $(u,v) \in H$, by definition, satisfy
$\DnormS(u,v) \leq \GCDscale$. As \DnormS is non-contracting, this also
implies that $\Dnorm(u,v) \leq \GCDscale$, so the bound \eqref{eq:ext-two-ball-SP}
holds with
    $\Delta = \Delta_R$,
according to Lemma~\ref{lem:2ball-mult}.
If $(s,t) \in H$ (which happens when $\DnormS(s,t) \leq \GCDscale$),
then we can apply Lemma \ref{lem:2ball-mult} directly to the edge
$(s,t)$, obtaining the bound in terms of $x$.
\end{proofof}

\begin{proofof}[Proof of Lemma \ref{lm:ext-two-ball-SP}]
Fix a node pair $(s,t)\notin H$, and let $x = \Dnorm(s,t)$.
Because $(s,t) \notin H$, and the estimate \DnormS has expansion at
most \EXPAN, we get that
$\Dnorm(s,t) \geq \tfrac{1}{\EXPAN}\, \DnormS(s,t)
                > \tfrac{\GCDscale}{\EXPAN}$.
Let $H_t \subseteq H$ be the edge set defined in~\eqref{eq:ext-two-ball-Ht},
and for any path $P$, let $\Dnorm(P)$ the length of the path $P$
according to the distance function \Dnorm.

We claim that the edge set $H_t$ contains an $s$-$t$ path $P$ with
    $k = \cel{x/(\tfrac{\GCDscale}{\EXPAN}-1)}$
hops and length
    $\Dnorm(P)\leq \Dnorm(s,t) + k$.
Consider the straight line between $s$ and $t$ in $\R^{\DIM}$.
For each $i$, let
$p_i$ be the point at \Dnorm-distance
$i \cdot (\tfrac{\GCDscale}{\EXPAN}-1)$
from $s$ on the straight line between $s$ and $t$.
The point $p_i$ itself may not be the location of any node in the
social network. However, by near-uniform density (which
guarantees that every unit cube contains at least one node of the
network), each point $p_i$ has a node $u_i$ at distance
at most $\D(p_i,u_i) \leq \DIM$.
Thus, $\Dnorm (p_i,u_i) \leq \DIM/(\CSW\,\kSW)^{1/\DIM} \leq \half$
for large enough $n$, as $\CSW \kSW = \Omega(\log n)$.

Let $P$ be the path $(s=u_0,u_1,u_2, \ldots, u_{k-1}, t=u_k)$.
By triangle inequality, all edges $(u_i,u_{i+1}) \in P$
have \Dnorm-length within $\pm 1$ of the distance
$\D(p_{i},p_{i+1})$ between the corresponding points $p_i$.
Therefore, $\Dnorm(P) \leq \Dnorm(s,t)+ k$.
Moreover, because each edge $(u,v)\in P$ satisfies
    $\Dnorm(u,v) \leq \tfrac{\GCDscale}{\EXPAN}$,
the fact that \DnormS has expansion at most \EXPAN implies that
    $\DnormS(u,v) \leq \GCDscale$.
In particular, each edge of $P$ is in $H$.
Furthermore, all edges $(u_i,u_{i+1})\in P$ except possibly the last
one satisfy
    $\DnormS(u_i,u_{i+1}) \geq \Dnorm(u_i,u_{i+1})
                          \geq \tfrac{\GCDscale}{\EXPAN}-2$.
By definition of $H_t$, it follows that the path $P$ is in $H_t$,
completing the proof of the claim.

Next, we upper-bound the estimated distance $\DnormP(s,t)$.
Simply using the path $P$ we just exhibited, we see that
\[
\DnormP(s,t)
\; \leq \; \Dnorm[H](P)
\; \overset{\eqref{eq:ext-two-ball-SP}}{\leq} \;
\Dnorm(P) + k\Delta
\; \leq \;
\Dnorm(s,t) + k(\Delta+1),
\]
where the last inequality used
the property that $\Dnorm(P)\leq \Dnorm(s,t) + k$.
An upper bound of $1+O(k\EXPAN/\GCDscale)$ on the expansion now
follows by substituting $k=O(x\,\tfrac{\EXPAN}{\GCDscale})$.

It remains to bound the contraction, by proving that
each $s$-$t$ path $P$ in $H_t$ has
  $\Dnorm[H](P)\geq \Dnorm(s,t) - O(x\,\EXPAN^2 \Delta/\GCDscale)$.
By the same argument as in the preceding paragraph, this holds
whenever $P$ has at most $4 x\,\EXPAN^2/\GCDscale$ hops.
We therefore focus on the case when $P$ has at least
$4x\,\EXPAN^2/\GCDscale$ hops.
Each of these hops $(u,v)$, except possibly the last one, has
  $\DnormS(u,v) \geq \tfrac{\GCDscale}{2\EXPAN}$ by definition of $H$.
In turn, by the maximum expansion of \DnormS,
the actual length of each hop is at least
    $\Dnorm(u,v) \geq \tfrac{\GCDscale}{2\EXPAN^2}$, so that the
estimates \Dnorm[H] satisfy
    $\Dnorm[H](u,v) \geq \Dnorm(u,v)-\Delta
                    \geq \tfrac{\GCDscale}{4\EXPAN^2}$,
because we assumed that $\Delta \leq \tfrac{\GCDscale}{4\EXPAN^2}$.
Summing over all (at least) $4x\,\EXPAN^2/\GCDscale$ hops $(u,v)$,
we obtain that
    $\Dnorm[H](P) \geq x = \Dnorm(s,t)$,
so in this case, the estimate has no contraction at all.
This completes the proof of the lower bound.
\end{proofof}


\subsection{The \recTwoBall for multiple categories}
\label{sec:recTwoBall-multicat}

We show that the \recTwoBall from
Section~\ref{sec:rectwoball-single} can be applied in the
case of multiple categories with \globalCD, yielding
poly-logarithmic additive error.
\asedit{To construct improved distance estimates for a given category $i$,
the algorithm is parameterized with the original edge set \ESW and
    initial estimates $\DnormS = \DnormS[i]$ from Theorem~\ref{thm:main},
and run verbatim from then on.}
The analysis only needs to be modified slightly
to deal with edges from other categories.
However, our guarantees only apply to node pairs at distances
    $x \leq n^{1/(\DIM+1)} = D^{\DIM/(\DIM+1)}$,
where $D = n^{1/\DIM}$ is the diameter of the metric space.

\begin{theorem}\label{thm:rec2ball-multi}
Consider a \multiSG with $\CSW \kSW = \Omega(\log n)$,
with \globalCD and perfectly uniform density for each category.
Assume that the social distance in each category is defined by the
$\ell_2^{\DIM}$ norm, with $\DIM>2$.
Then, the \recTwoBall runs in polynomial time,
and produces distance estimates \DnormP[i] satisfying the following
guarantee with high probability:
\begin{quote}
For every pair $(s,t)$ of nodes at normalized distance
$\Dnorm[i](s,t) \leq n^{1/(\DIM+1)}$, we have that
$$| \DnormP[i](s,t) - \Dnorm[i](s,t) | \leq \polylog(n).$$
\end{quote}
\end{theorem}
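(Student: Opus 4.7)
The plan is to extend the recursive analysis of Theorem~\ref{thm:rec2ball-singleCat} (via Lemmas~\ref{lem:ball-containment}, \ref{lem:edge-count}, and \ref{lem:adddist}) to account for edges contributed by categories $j\neq i$. The single-category argument establishes the recursive identity $\AddDist{x} = O(\AddDist{\ShrinkR{x}})$ in three steps: (i) the ball $\NBall[\prime]{s}$ obtained from the prior estimates $\DnormP$ is set-theoretically close to the ideal ball $\NDBall{s}$; (ii) the expected edge counts between the perturbed balls and the ideal balls differ by $O(x\,\AddDist{\ShrinkR{x}})$; and (iii) inverting $\E[\NEdges{s}{t}] \approx \DIMCONST^{\DIM}\,x^2$ transforms this edge-count error into an additive distance error of the same order as $\AddDist{\ShrinkR{x}}$.

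Steps (i) and (ii) transfer verbatim to category $i$ in the multi-category setting, because they depend only on perfectly uniform density within category $i$ and on the monotone refinement of estimates $\DnormP$. The new step is to bound the contamination: the number of edges from other categories $j\neq i$ that are inadvertently counted in $\NEdges{s}{t}$. Here I would invoke \globalCD exactly as in Lemma~\ref{lem:2ball-mult}, partitioning node pairs $(u\in\NBall[\prime]{s}, v\in\NBall[\prime]{t})$ by their category-$j$ distance into exponentially increasing buckets. For each category $j$ and each bucket of scale $y$, \globalCD gives at most $O(y^{\DIM}/n)\cdot|\NBall[\prime]{s}|\cdot|\NBall[\prime]{t}| + O(\log^2 n)$ candidate pairs, each realized as an edge with probability $O(y^{-\DIM})$. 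Summing over buckets and over the $O(\NUMCAT)$ other categories yields an expected contamination of $O(\kappa^2 \log n / n + \log^2 n)$, with concentration via Chernoff Bounds.

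With $\kappa = \CPD\,\ShrinkR{x}^{\DIM}$ and $\ShrinkR{x}=x^{1/2+1/\DIM}$, one has $\kappa^2 \approx x^{\DIM+1}\cdot x$, so the contamination adds at most $O(x^{\DIM+1}\log n/n)\cdot \polylog(n) + \polylog(n)$ to the edge-count deviation. Propagating through the inversion $\DIMCONST\,\ShrinkR{x}^2\,\NEdges{s}{t}^{-1/\DIM}$ exactly as in Lemma~\ref{lem:adddist} contributes an extra additive distance error of at most $O(x^{\DIM+1}\log n/n + \polylog(n))$. The restriction $\Dnorm[i](s,t)\leq n^{1/(\DIM+1)}$ is precisely what guarantees that this additional term is $\polylog(n)$, and since $\ShrinkR{x}<x$, the same bound holds at every recursion level. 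The modified recursion becomes $\AddDist{x} \leq O(\AddDist{\ShrinkR{x}}) + \polylog(n)$, which still yields $\AddDist{x}=\polylog(n)$ after the $O(\log\log n)$ levels of recursion identified in Section~\ref{app:rec2ball-singleCat-proof}.

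The main obstacle is ensuring that the constant factors in the contamination bound do not compound multiplicatively across recursion levels; this reduces to verifying that the $\polylog(n)$ slack in each level is additive, which follows because Chernoff concentration at scale $x$ controls the deviation of $\NEdges{s}{t}$ by $O(x\,\AddDist{\ShrinkR{x}})$ rather than by a multiplicative factor. A secondary technical point is that $\DnormS$ was produced by Theorem~\ref{thm:main} for the \LCD regime, whereas here we assume the stronger \globalCD; since \globalCD implies (a slight weakening of) \LCD, the initial estimates remain valid, and the rest of the argument proceeds as sketched.
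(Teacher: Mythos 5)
Your proposal matches the paper's proof in approach and substance: both extend the single-category recursive analysis by using \globalCD to bound the contamination from categories $j\neq i$, partitioning candidate pairs into exponentially increasing distance buckets, summing the expected contributions, and using Chernoff plus a union bound over categories; and both observe that the restriction $\Dnorm[i](s,t)\leq n^{1/(\DIM+1)}$ is what keeps this contamination within the $O(x\,\AddDist{\ShrinkR{x}})$ error budget of Lemma~\ref{lem:adddist}, so that the recursion still closes at $\polylog(n)$. The only slip is a minor bookkeeping one: the expected edge-count contamination is $O(\kappa^2 \log n / n + \log^2 n) = O(x^{\DIM+2}\log n/n + \log^2 n)$, which under $x\leq n^{1/(\DIM+1)}$ simplifies to $O(x\log^2 n)$; you dropped a factor of $x$ in the intermediate line but then recovered the correct additive distance error $O(x^{\DIM+1}\log n/n + \polylog(n))=\polylog(n)$, so the conclusion is unaffected. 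Your remark about \globalCD implying a (slight weakening of) \LCD, so the initial estimates from Theorem~\ref{thm:main} remain valid, is a valid observation that the paper's proof leaves implicit.
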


For normalized distances larger than $n^{1/(\DIM+1)}$, even under
actual randomly permuted categories, the number of edges from other
categories grows prohibitively large; it seems
unlikely that this obstacle could be easily overcome.

However, we can \asedit{combine the improved estimates provided by}
Theorem~\ref{thm:rec2ball-multi} with the post-processing step from
the \extTwoBall (with $\GCDscale=n^{1/(\DIM+1)})$. The resulting
algorithm estimates normalized distances $x>\GCDscale$ with additive
error $(x/\GCDscale) \, \polylog(n)$. (This follows from the
shortest-path argument encapsulated in
Lemma~\ref{lm:ext-two-ball-SP}.)

\begin{proofof}[Proof of Theorem~\ref{thm:rec2ball-multi}]
The proof of Theorem \ref{thm:rec2ball-singleCat} applies almost
verbatim. Recall that the \recTwoBall counts edges between balls
\NBall[\prime]{s}, \NBall[\prime]{t} around $s$ and $t$, containing
$\kappa = \Theta(\ShrinkR{x}^{\DIM})$ nodes each, where
$\ShrinkR{x} = x^{1/2 + 1/\DIM}$.
These balls are calculated with respect to the distances estimated by
the algorithm in earlier stages.
The only added difficulty for the analysis in the case of multiple
categories is bounding the additional edges between \NBall[\prime]{s} and
\NBall[\prime]{t} arising from categories $j \neq i$.

Notice that there are $\kappa^2 = O(x^{\DIM+2}) \leq O(x \cdot n)$ pairs of
nodes that could have an edge between them.
Focus on one category $j \neq i$, and divide node pairs
$(u,v), u \in \NBall[\prime]{s}, v \in \NBall[\prime]{t}$ into buckets of the form
$(y,2y]$ depending on their distance in category $j$.
By \globalCD, the bucket $(y,2y]$ contains at most
$O(\frac{y^{\DIM}}{n} \cdot |\NBall[\prime]{s}| \, |\NBall[\prime]{t}| + \log^2 n)
= O(y^{\DIM} \cdot x + \log^2 n)$ node pairs.
Each of these node pairs gives rise to an edge with probability
at most $O(y^{-\DIM})$, and summing over all $O(\log n)$ buckets
$(y,2y]$ gives us that the expected number of category-$j$ edges
between \NBall[\prime]{s} and \NBall[\prime]{t} is at most
$O(x \log n + \log^2 n) = O(x \log^2 n)$.
Using Chernoff Bounds and a union bound over all categories, with high
probability, the total number of edges added by categories
$j \neq i$ is at most $O(\NUMCAT x \log^2 n)$.

Because $\log^2 n = O(\AddDist{\ShrinkR{x}})$ for sufficiently large
\asedit{(but poly-logarithmic)} $x$, the
$O(\NUMCAT x \log^2 n) = O(x \, \AddDist{\ShrinkR{x}})$ additional
edges are easily subsumed in the error bound of $O(x \, \AddDist{\ShrinkR{x}})$
already present in the proof of Lemma \ref{lem:adddist}.
For smaller distances $x$, the only change will be a slightly
different poly-logarithmic base case for \AddDist{\ShrinkR{x}}.
\end{proofof}



\section{Category Disjointness and Random Permutations}
\label{sec:permutations}

Recall that our motivation for the definition of the
\LCD and \globalCD[R] conditions
was that they intuitively capture the notion of categories looking
random with respect to one another ``locally.''
In this section, we confirm the intuition guiding the definition, by
showing that both conditions are satisfied with high
probability when the metric space for each category $i$ is randomly
permuted, in the sense that
    $\D[i](u,v) = \DP[i](\sigma_i(u),\,\sigma_i(v))$
for some ``base metric'' $\DP[i]$ and a random permutation $\sigma_i$
on the node set.
Accordingly, both conditions are indeed significantly
weaker (in particular, more local) than requiring that
metrics be randomly permuted.

\begin{lemma} \label{lm:CD}
Consider a \multiSG with near-uniform density.
For each category $i$, let \DP[i] be a ``base'' metric,
and $\sigma_i$ a uniformly random permutation of the node set.
(The permutations for different metrics are pairwise independent.)
For each node pair $(u,v)$, the category-$i$ distance is
$\D[i](u,v) = \DP[i](\sigma_i(u),\,\sigma_i(v))$.
Then, with high probability, the \LCD and \globalCD conditions are
satisfied.\footnote{Therefore \globalCD[R] is satisfied for any $R$.}
\end{lemma}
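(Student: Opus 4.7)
The strategy is to prove each of the two conditions separately. In each case, we first compute the expectation of the quantity of interest under the random permutation(s), verify that it meets the claimed bound, and then establish concentration via a Chernoff-type bound for sums under a random permutation. Throughout, the plan is to condition on $\sigma_i$ in order to ``freeze'' the objects defined in category $i$, and then exploit the independence of the permutation for the other category. A union bound over the polynomially many choices of balls and radii finishes the argument.

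For \LCD, fix categories $i \ne i'$, centers $u, u'$, and radii $r, r' = O(\polylog n)$. Conditioning on $\sigma_i$ makes $S := \Ball[i]{u}{r}$ a deterministic set of at most $O(\polylog n)$ nodes. Since $\sigma_{i'}$ is a uniformly random permutation independent of $\sigma_i$, and $\Ball[i']{u'}{r'}$ is the $\sigma_{i'}$-preimage of the base-metric ball around $\sigma_{i'}(u')$, this set — once we further condition on $\sigma_{i'}(u')$, whose value does not affect the distribution of the rest of the ball — is a uniformly random subset of $V$ of some size $m \le O(\polylog n)$. Hence $|S \cap \Ball[i']{u'}{r'}|$ is hypergeometrically distributed with mean $|S|\cdot m/n = o(1)$. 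Hoeffding's stochastic-domination of the hypergeometric by the binomial, together with Theorem~\ref{thm:Chernoff}, then yields $|S \cap \Ball[i']{u'}{r'}| \le O(\log n)$ with probability $1 - n^{-\omega(1)}$; a union bound over the $\poly(n)$ choices of $(u, u', r, r', i, i')$ settles \LCD.

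For \globalCD, fix categories $i \ne j$ and disjoint category-$i$ balls $B, B'$ with $|B|\,|B'| \le R^{\DIM}$ and $r \in (0, R]$. Condition on $\sigma_i$ so that $B$ and $B'$ are fixed, and let
\begin{align*}
X \;=\; \sum_{u \in B,\, u' \in B'} \indicator{\DP[j]\!\bigl(\sigma_j(u),\sigma_j(u')\bigr) < r}.
\end{align*}
By symmetry over $\sigma_j$, each indicator has expectation $\Theta(r^{\DIM}/n)$ (the number of base-metric points within distance $r$ of a given node, normalized by $n-1$), so $\E[X] = \Theta(r^{\DIM}/n) \cdot |B|\,|B'|$, matching the target. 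For concentration, I plan to reveal $\sigma_j$ in two stages — first on $B$, then on $B'$ — so that $A := \sigma_j(B)$ is a uniformly random subset of $V$ of size $|B|$, and $\sigma_j|_{B'}$ is a uniformly random injection $B' \to V \setminus A$. Conditional on $A$, one may write $X = \sum_{u' \in B'} |A \cap N_r(\sigma_j(u'))|$, where $N_r(w)$ denotes the base-metric radius-$r$ ball around $w$; each summand is bounded by $O(\min(|B|, r^{\DIM}))$ and the indices $\sigma_j(u')$ are sampled without replacement.

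The main obstacle is to obtain a \emph{hybrid} additive/multiplicative concentration bound matching the target $O(r^{\DIM}/n)\cdot |B|\,|B'| + O(\log^2 n)$, and for this I plan to state and prove a permutation Chernoff bound in this section, applied in two regimes. When $\E[X] = \Omega(\log^2 n)$, the standard Chernoff/Azuma route for sampling without replacement (via Hoeffding's domination result) yields $X = (1 \pm o(1))\,\E[X]$ with high probability. When $\E[X] = o(\log^2 n)$, multiplicative concentration is too weak, and I plan a moment argument: the key point is that despite the permutation-induced dependence, any $k$ indicators in $X$ involve at most $2k$ distinct nodes of $V$, and the probability of their joint occurrence under the random $\sigma_j$ differs from the independent case by a factor of $1 + O(k/n)$, so the $k$-th factorial moment of $X$ is at most $(1+o(1))\,\E[X]^k$. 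Applying Markov's inequality to $X^{k}$ with $k = \Theta(\log n / \log\log n)$ then gives $X \le O(\log^2 n)$ with probability $1 - n^{-\omega(1)}$. A union bound over the polynomially many admissible tuples $(B, B', r, i, j)$ completes the proof.
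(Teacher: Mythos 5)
Your LCD argument is essentially correct and lands in the same place as the paper's: the paper reduces to a single uniform permutation by composing $\sigma := \sigma_{i'}^{-1}\circ\sigma_i$ (rather than conditioning on $\sigma_i$) and then invokes a Chernoff bound for permutation sums proved via negative association (Theorem~\ref{thm:chernoff-permutations}); your hypergeometric-domination route is equivalent for this part.

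For \globalCD, however, your Case~1 has a real gap. After conditioning on $A := \sigma_j(B)$ you write $X = \sum_{u'\in B'} |A \cap N_r(\sigma_j(u'))|$ and bound each summand by $M := \min(|B|, r^{d})$, then invoke Hoeffding/Azuma for sampling without replacement. But $M$ can be far larger than the \emph{typical} value of a summand, and the deviation then carries an $M\log n$ term (already in Bernstein, which is stronger here than Hoeffding's $M\sqrt{|B'|\log n}$) that can dwarf the target $O(\E[X] + \log^2 n)$. Concretely, take $|B| = |B'| = \log^{10}n$ and $r^{d} = n/\log^{10}n$: then $\E[X] = \Theta(\log^{10}n)$, $M = \log^{10}n$, and the $M\log n$ term alone is $\log^{11}n$, a full $\log n$ factor above the target. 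The missing step is a \emph{truncation}: one must first prove (by a separate application of permutation concentration, one per $u'\in B'$) that each summand is at most $N := \Theta\bigl(\max(\log n,\ r^{d}|B|/n)\bigr)$ with high probability, and only then apply a Chernoff-type bound to the \emph{$N$-normalized} sum with weights $\alpha_{u'} = \min(1, |A\cap N_r(u')|/N)\in[0,1]$. This is exactly the paper's argument: it proves a Chernoff bound for $\sum_i \alpha_i\,\indicator{\sigma(i)\in I}$ with $\alpha_i\in[0,1]$ via negative association, applies it once to cap each per-$u'$ count by $N$, and once more to the $N$-scaled weighted sum, giving $X = O(\E[X] + N\log n)$, which reduces to the target in each branch of the $\max$. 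Your Case~2 factorial-moment route is more promising but under-justified: the inference from ``each $k$-tuple of indicators touches at most $2k$ nodes'' to ``the $k$-th factorial moment is $(1+o(1))\E[X]^k$'' glosses over tuples that reuse nodes on both sides (small bipartite cliques), whose joint probability under the permutation can be substantially larger than the $k$-fold product; establishing that the reduced combinatorial count of such shapes compensates requires a careful shape-by-shape accounting that you do not give. The paper's single normalization argument handles all parameter regimes at once and avoids both difficulties.
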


\begin{proof}
Our proof uses an extension of Chernoff Bounds to dependent random
variables in which the randomness comes from a random permutation
(Theorem~\ref{thm:chernoff-permutations}, stated and proved below).

We begin by proving that the \LCD condition is satisfied.
Fix two categories $i \neq i'$.
Consider balls $B$, $B'$ of radii $r, r' = \polylog(n)$ in categories
$i, i'$, respectively.
Note that $\Expect{|B \cap B'|} = O((r r')^{\DIM}/n) < 1$.

Define a mapping from category $i$ to category $j$ by
    $\sigma(u) \triangleq \sigma_j^{-1}(\sigma_i(u)): V \to V$.
$\sigma(u)$ captures at what point of the metric space \D[j] a node in
the metric space \D[i] ends up.
Because $\sigma_i, \sigma_j$ were independent uniform permutations on
$V$, so is $\sigma$.
We will consider nodes $u \in B$, which we capture by setting
$\alpha_u = \indicator{u\in B}$.
Such a node is also in $B'$ iff $\sigma(u) \in B'$.
Thus, defining $X_u = \indicator{ \sigma(u)\in B' }$, we get that
    $|B\cap B'| = \sum_{u\in V} \alpha_u\, X_u$,
and by Theorem~\ref{thm:chernoff-permutations}, this sum is at most
$O(\log n)$ with high probability.

Next, we prove that the \globalCD condition holds as well.
Fix a category $j$, distance scale $r > 0$, and
two disjoint sets $B, B'\subseteq V, |B'| \geq |B|$
(which will be balls in category $i$).
Define the random variable
$f(B,B') \triangleq \sum_{v \in B, v' \in B'} \, \indicator{\D[j](v,v') < r}$
to be the number of node pairs at category-$j$ distance at most $r$.

We will prove a high-probability bound on $f(B,B')$ conditioned
on the choice of all permutations $\sigma_i$ for $i \neq j$.
In other words, we consider the probability space induced by the
random choice of $\sigma = \sigma_j$.
We will prove that with high probability,
\begin{align} \label{eq:globalCD-pf}
f(B,B') & = O(r^{\DIM}/n) \cdot |B| \, |B'| + O(\log^2 n).
\end{align}
Then the \globalCD condition follows by taking
a Union Bound over all categories $i,j$, all pairs of balls $B,B'$ in
category $i$, and all distinct distances $r$ in category $j$.

We begin by calculating the expectation of $f(B,B')$ using linearity
of expectation. Notice that
$\Expect{\indicator{\D[j](v,v') < r}} = \Prob{\D[j](v,v') < r}$ is the
probability that $v'$ is mapped to a node in a ball around $v$ of
radius $r$. Since there are $\Theta(r^{\DIM})$ nodes in the ball around $v$
of radius $r$ (wherever $v$ itself is mapped), we get that
$\Expect{\indicator{\D[j](v,v') < r}} = \Theta(r^{\DIM}/n)$, and
$\Expect{f(B,B')} = \Theta(r^{\DIM}/n) \cdot |B| \, |B'|$.

It remains to prove that $f(B,B')$ is concentrated around its expectation.
Thereto, we will use Theorem~\ref{thm:chernoff-permutations} twice.
First, focus on an arbitrary node $v'$ and consider $f(B, \SET{v'})$.
We have that $\Expect{f(B, \SET{v'})} = \Theta(r^{\DIM}/n) \cdot |B|$.
We can reveal the randomness of $\sigma$ by first revealing
$\sigma(v')$, which defines a set
$U = \Set{u \in V}{\DP[j](u,\,\sigma(v')) < r}$.
Then,
\begin{align*}
f(B, \SET{v'})
  & = \textstyle \sum_{v \in V}\; \alpha_v\, \indicator{\sigma(v) \in U},
\end{align*}
where $\alpha_v = 1$ if $v \in B$, and $\alpha_v = 0$ otherwise.
Thus, Theorem~\ref{thm:chernoff-permutations}
implies concentration of $f(B, \SET{v'})$ for any $v'$,
and gives us that with high probability,
$f(B, \SET{v'}) = O(\max(\log n, \frac{r^{\DIM}}{n} \cdot |B|))$
for all $v'$.
Let $N := \Theta(\max(\log n, \frac{r^{\DIM}}{n} \cdot |B|))$
denote this high-probability bound.

Next, our goal is to sum over all $v' \in B'$.
First, reveal $\sigma(v)$ for all $v \in B$, and condition on this
choice, writing $T = \Set{\sigma(v)}{v \in B}$.
Then, $\sigma$ is defined by a uniformly random permutation
from $V \setminus B$ to $V \setminus T$,
or --- equivalently --- by a uniformly random permutation
$\sigma^{-1}$ from $V \setminus T$ to $V \setminus B$.
For each $u' \in V \setminus T$, let
$\beta_{u'} = \sum_{u \in T} \indicator{\DP[j](u,u') < r}$
be the number of nearby locations to which nodes in $B$ were mapped.
Then, we can write
\begin{align*}
 \textstyle
f(B,B')
    \; = \; \sum_{u' \in V \setminus T} \;
  \beta_{u'}\, \indicator{\sigma^{-1}(u') \in B'}
\; = \; N \cdot \sum_{u' \in V \setminus T} \;
  \frac{\beta_{u'}}{N} \cdot \indicator{\sigma^{-1}(u') \in B'}.
\end{align*}
Defining $\alpha_{u'} = \min(1,\frac{\beta_{u'}}{N}) \in [0,1]$,
we get that with high probability (in the high-probability event that
$f(B, \SET{v'}) \leq N$ for all $v'$),
\begin{align*}
\textstyle
f(B,B')
 \leq N \cdot \sum_{u' \in V \setminus T} \;
  \alpha_{u'} \, \indicator{\sigma^{-1}(u') \in B'},
\end{align*}
and $\sigma^{-1}$ is a uniformly random permutation.
By Theorem \ref{thm:chernoff-permutations}, with high probability,
\begin{align*}
f(B,B')
& \leq \textstyle \; N \cdot O(\sum_{u' \in V \setminus T} \;
  \alpha_{u'}\ \indicator{\sigma^{-1}(u') \in B'} + \log n)\\
&= \; O(\Expect{f(B,B')} + N \log n).
\end{align*}
If $N = \Theta(\log n)$, this bound is obviously
$O(\Expect{f(B,B')} + \log^2 n)$.
Otherwise, $N = \Theta(\frac{r^{\DIM}}{n} \cdot |B|)$,
and $\frac{r^{\DIM}}{n} \cdot |B| = \Omega(\log n)$, which implies
(because $r^{\DIM} \leq n$) that $|B| = \Omega(\log n)$.
And because we assumed that $|B'| \geq |B|$,
we get that
\[
\Expect{f(B,B')}
\; = \; \Theta(r^{\DIM}/n) \cdot |B| \, |B'|
\; \geq \; \Theta(r^{\DIM}/n) \cdot |B| \, \log n
\; \geq \; \Theta(N \log n)
\]
so that the $N \log n$ term is subsumed in the \Expect{f(B,B')} term.
This completes the proof of the lemma.
\end{proof}

\begin{theorem}[Chernoff Bounds for permutations]
\label{thm:chernoff-permutations}
Fix $n\in\N$ and a subset $I \subseteq \SET{1, \ldots, n}$.
Let $\sigma$ be a uniformly random permutation of $\SET{1, \ldots, n}$.
For each $i \in \SET{1, \ldots, n}$,
fix $\alpha_i \in [0,1]$ and let
    $X_i = \indicator{\sigma(i) \in I}$.
Let
    $X = \sum_{i=1}^{n} \alpha_i X_i$
and $\mu = \E[X]$. Then $X$ satisfies both conditions from Theorem~\ref{thm:Chernoff}:
\begin{align*}
\Prob{|X-\mu| > \delta\mu} & \leq \exp(-\mu\, \delta^2/3),
    \qquad \text{ for any } \delta>0 \\
\Prob{X > (1+\delta)\mu'} & \leq \exp(-\mu'\, \delta^2/3),
    \qquad \text{ for any } \delta\in(0,1).
\end{align*}
\end{theorem}

While the result appears standard, we are not aware of a published
proof, so for completeness we provide a self-contained proof.
The proof uses Chernoff Bounds for \emph{negatively associated} random
variables (see, e.g., \cite{dubhashi:panconesi:concentration-book}).
We summarize the relevant result in the following theorem:

\begin{theorem}[{\cite[pages 34--35 and Problem 3.1]{dubhashi:panconesi:concentration-book}}]
\label{thm:neg-assoc}
Let $X_1, \ldots, X_n$ be random variables
jointly distributed on $[0,1]^n$ such that $\sum_i X_i$ is a constant.
For any subset $I \subseteq \{1, \ldots, n\}$, write
$S_I \triangleq  \sum_{i \in I} X_i$.
Assume that the following hold for any such subset:

\begin{itemize}
\item Any $X_i$ with $i \in I$ is conditionally independent of
the $X_j$ with $j \notin I$ given $S_I$.

\item For any coordinate-wise non-decreasing function
$f:\R^{|I|} \to \R$, the conditional expectation
$\ExpectC{f(X_i,\, i \in I)}{S_I=t}$ is non-decreasing as a
function of $t \in \R$.
\end{itemize}

Then, the random variables $X_1, \ldots, X_n$ are said to be
\emph{negatively associated}.
In particular, it follows that $X \triangleq \alpha_i\, X_i$
satisfies the bounds from Theorem~\ref{thm:chernoff-permutations},
for any fixed $\alpha_1, \ldots, \alpha_n \in [0,1]$.
\end{theorem}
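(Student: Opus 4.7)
My plan is to establish the full negative association (NA) property from the two hypotheses, and then derive the Chernoff bounds by the standard moment generating function argument, which goes through verbatim for NA random variables.

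First I would prove NA: for any disjoint $A, B \subseteq \{1,\ldots,n\}$ and any coordinate-wise non-decreasing functions $f, g$,
\[
\E[\,f(X_i, i\in A)\, g(X_j, j\in B)\,] \leq \E[\,f(X_i, i\in A)\,]\cdot \E[\,g(X_j, j\in B)\,].
\]
The key is to condition on $S_A$. Using the first hypothesis (with $I = A$), the vectors $(X_i)_{i \in A}$ and $(X_j)_{j \notin A}$ are conditionally independent given $S_A$, so
\[
\E[\,f(X_A) g(X_B) \mid S_A\,] = \E[\,f(X_A) \mid S_A\,] \cdot \E[\,g(X_B) \mid S_A\,].
\]
By the second hypothesis applied to $I = A$, the map $t \mapsto \E[f(X_A) \mid S_A = t]$ is non-decreasing. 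For the other factor, since $\sum_i X_i$ is constant, $S_{A^c}$ is a deterministic decreasing affine function of $S_A$; applying the second hypothesis to $I = A^c$ (with $g$ extended trivially in the coordinates outside $B$, which preserves monotonicity) shows that $\E[g(X_B) \mid S_{A^c} = s]$ is non-decreasing in $s$, hence $\E[g(X_B) \mid S_A = t]$ is non-increasing in $t$. Taking an outer expectation over $S_A$ and applying Chebyshev's sum inequality (the standard two-copy trick: $\E[(\phi(T)-\phi(T'))(\psi(T)-\psi(T')) ] \le 0$ when $\phi$ is non-decreasing and $\psi$ is non-increasing) yields the desired inequality.

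Next I would extend this to products of more than two groups by induction, which is routine once the two-group case is established, giving the full NA conclusion: for any family of pairwise disjoint index sets $A_1,\dots,A_k$ and coordinate-wise non-decreasing $f_1,\dots,f_k$, $\E\!\left[\prod_j f_j(X_{A_j})\right] \le \prod_j \E[f_j(X_{A_j})]$. In particular, for $t > 0$ each $x \mapsto e^{t\alpha_i x}$ is non-decreasing, and the singletons $\{i\}$ are pairwise disjoint, so
\[
\E\!\left[e^{tX}\right] = \E\!\left[\prod_i e^{t\alpha_i X_i}\right] \le \prod_i \E\!\left[e^{t\alpha_i X_i}\right].
\]
The right-hand side is exactly the MGF bound that appears in the independent case. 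From here the standard calculation (Markov's inequality on $e^{tX}$, using $\E[e^{t\alpha_i X_i}] \le 1 + (e^{t\alpha_i}-1)\E[X_i] \le \exp((e^{t\alpha_i}-1)\E[X_i])$ since $\alpha_i X_i \in [0,1]$, and then optimizing $t$) yields both inequalities stated in Theorem~\ref{thm:Chernoff}, exactly as in the textbook proof.

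I expect the main obstacle to be the opposite-monotonicity argument for $\E[g(X_B)\mid S_A]$: one must be careful that the second hypothesis can indeed be applied to $I = A^c$ with a function that only depends on the coordinates in $B \subseteq A^c$, and that the reduction from increasing-in-$S_{A^c}$ to decreasing-in-$S_A$ uses only the constraint $\sum_i X_i = c$. The inductive extension from two groups to $k$ groups is the other slightly delicate point: here one merges two of the groups (say $A_1, A_2$) and observes that for fixed values of $X_{A_3\cup\cdots\cup A_k}$, the conditional distribution of $(X_{A_1}, X_{A_2})$ still satisfies both hypotheses (with the conditioning value absorbed as an additive constant in $S_I$), so the two-group inequality applies pointwise and the induction closes.
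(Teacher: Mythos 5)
The paper never proves this statement: it is imported verbatim from Dubhashi and Panconesi (pages 34--35 and Problem~3.1 there) and used as a black box inside the proof of Theorem~\ref{thm:chernoff-permutations}. So you are supplying a proof where the paper has none, and your core argument is exactly the standard textbook one, and it is sound: condition on $S_A$, use the first hypothesis with $I=A$ to factor $\E[f(X_A)g(X_B)\mid S_A]$, observe that $t\mapsto\E[f(X_A)\mid S_A=t]$ is non-decreasing while $\E[g(X_B)\mid S_A=t]=\E[g(X_B)\mid S_{A^c}=c-t]$ is non-increasing (this is where the constancy of $\sum_i X_i$ and the hypothesis for $I=A^c$, with $g$ extended trivially to the coordinates of $A^c\setminus B$, enter), and conclude with Chebyshev's correlation inequality for oppositely monotone functions of $S_A$; the Chernoff bounds then follow from the multiplicativity of the moment generating function bound.

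Two repairs are needed, both routine. First, your inductive passage from two groups to $k$ groups is more complicated than necessary and, as described, not justified: nothing in the two stated hypotheses guarantees that they are inherited by the conditional law of $(X_{A_1},X_{A_2})$ given $X_{A_3\cup\cdots\cup A_k}$. The standard induction needs only your two-group inequality: the factors $e^{t\alpha_i x_i}$ are non-negative and non-decreasing, so the product over the first $k-1$ groups is itself a coordinate-wise non-decreasing function of those coordinates, and one applies the two-group inequality with $A=A_1\cup\cdots\cup A_{k-1}$ and $B=A_k$, then recurses. Second, the two-sided bound in Theorem~\ref{thm:Chernoff} also requires $\E[e^{-tX}]\le\prod_i\E[e^{-t\alpha_i X_i}]$, i.e.\ the product inequality for \emph{non-increasing} functions; this follows immediately from your two-group inequality since $\Cov{f(X_A)}{g(X_B)}=\Cov{-f(X_A)}{-g(X_B)}$ and $-f,-g$ are non-decreasing when $f,g$ are non-increasing. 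With these two fixes your proposal is a complete, self-contained proof of the cited result.
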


\begin{proofof}[Proof of Theorem \ref{thm:chernoff-permutations}]
First note that by definition, $\sum_{i=1}^n X_i = |I|$ is a constant.
Thus, it suffices to verify that the random variables $X_i$ are
negatively associated.

Fix $I \subseteq \{1, \ldots, n\}$.
For each $t \in \N$,
let $\F_t$ be the set of all tuples $(x_i,\,i\in I)$
such that $x_i\in\{0,1\}$ and $\sum_{i\in I} x_i = t$.
Let $U_t$ be the uniform distribution over $\F_t$.

To establish the first property of negative association, simply note
that the conditional distribution of $(X_i,\,i\in I)$ given $S_I = t$
and any assignment for $(X_i,\,i\notin I)$ is $U_t$, so independence
is established.

For the second property, fix a coordinate-wise non-decreasing
function $f:\R^{|I|} \to \R$. Since the conditional distribution of
$(X_i,\,i\in I)$ given $\{S_I = t\}$ is $U_t$, we have that
\[ g(t) \triangleq \ExpectC{f(X_i,\, i\in I)}{S_I=t}
        = \Expect[\vec{x}\sim U_t]{f(\vec{x})}.
\]
We need to show that
    $g(t+1)\geq g(t)$.
We couple selections according to $U_t$ and $U_{t+1}$ as follows.
\begin{OneLiners}
\item Pick $\vec{x} \sim U_t$.
\item Pick $j$ uniformly at random from $\Set{i\in I}{x_i=0}$.
\item Set $y_j=1$, and $y_i=x_i$ for all $j\neq i$.
\end{OneLiners}
Notice that $\vec{y} \sim U_{t+1}$.
By monotonicity of $f$, we have that
$f(\vec{y})\geq f(\vec{x})$. It follows that
\[ g(t+1)
    = \Expect[\vec{y}\sim U_{t+1}]{f(\vec{y})}
    \geq \Expect[\vec{x}\sim U_t]{f(\vec{x})}
    = g(t).
\]
The claim now follows from applying the result for negatively
associated random variables.
\end{proofof} 
\section{Constant target degree}
\label{sec:constDeg}

\newcommand{\AdaptiveEDP}{Adaptive EDP algorithm\xspace}

The analysis so far has relied heavily on the fact that the target degree
\kSW (essentially the expected average node degree) was at least
logarithmic.
Indeed, as discussed in Section \ref{Prelims}, the first
obvious problem with constant expected degree is that with
non-negligible probability, the social graph \ESW is disconnected.
To circumvent this problem, much of the past literature
(e.g., \cite{fraigniaud:small-worlds-survey,small-world,small-world-nips,nguyen:martel})
assumes that in addition to the random edges, the network also
contains a set \Eloc of \emph{local edges} deterministically.\footnote{%
Without loss of generality, \Eloc can also include all edges which
would be included by the basic small-world model with probability 1.}
In the literature, \Eloc is frequently the \DIM-dimensional grid.
We adopt a more general model in which \Eloc can be essentially any
set of short edges.
A constant target degree poses two additional challenges beyond mere
connectivity:

\begin{itemize}
\item There are insufficiently many long-range links to support
  pruning via counting common neighbors. Even for short distances,
  the number of common neighbors is only constant, and
  high-probability guarantees can therefore not be
  obtained.\footnote{See, e.g., the difficulties faced by
  \cite{fraigniaud:lebhar:lotker}. The authors of
    \cite{fraigniaud:lebhar:lotker} consider a small-world model
    with one random neighbor for each node.
    They can only make guarantees
    about pruning away all but a poly-logarithmic number of long-range
    edges. The main reason is that even distant nodes will choose the
    same random neighbor with probability $\Omega(1/n)$, and
    high-probability bounds therefore only guarantee at most
    poly-logarithmically many long random edges to remain.}
  Therefore, in order to identify short edges as such, we need to rely
  on the structure of \Eloc.

\item To avoid stochastic dependence between multiple stages (such as
  the \SimpleTest and \TwoBallTest), we had previously partitioned
  \ESW randomly into separate sets to be used in the stages.
  With constant node degrees, this may risk leaving the \SimpleTest
  with only half of the local edges \Eloc. Hence, partitioning the
  edges may not be viable any more.
  On the other hand, if the same edges are used in multiple stages,
  subtle stochastic dependencies between the stages are created; our
  analysis needs to carefully account for these dependencies.
\end{itemize}

In this section, we explore the changes (in modeling, algorithms and analysis)
necessary to deal with constant target degrees.
We focus on the single-category case for the remainder of the section.

\asedit{Formally, we posit $\Eloc \subset \ESW$: the observed social
  graph $\ESW$ consists of all edges in \Eloc and the random edges
  generated according to the model in Section~\ref{Distance-Based}. We
  assume that the maximal degree $\Kloc$ of $\Eloc$ is constant.}

Our results apply so long as the set of local edges is ``rich enough'' in local
connectivity.

\begin{definition}[``Richness'' of local edges] \label{def:local-structure}

\begin{enumerate}
\item An edge set $E$ is a
\emph{$(\CONTR, \EXPAN)$-spanner} if its shortest-path distance \Dsp satisfies
the following for all node pairs $(u,v)$:
\[
\CONTR \cdot \D(u,v) \; \leq \; \Dsp(u,v) \; \leq \; \EXPAN \cdot \D(u,v)
\]

\item A set $E$ of edges is \emph{$(\PN,\PL)$-connected} if for every
edge $(u,v) \in E$, $E$ contains \PN edge-disjoint $u$-$v$ paths of
at most \PL edges each.

\item \Eloc is \emph{$(\PN,\PL)$-rich} with distortion $(\CONTR,\EXPAN)$
if it is a $(\CONTR, \EXPAN)$-spanner and
contains a $(\PN,\PL)$-connected $(\CONTR, \EXPAN)$-spanner $E \subseteq \Eloc$
(called its \emph{connectivity witness}).
\end{enumerate}

\end{definition}

\begin{note}{Remark.}
As an example, the \DIM-dimensional toroidal grid is
$(2\DIM-1,3)$-rich and (for $\DIM \geq 2$) $(2\DIM,7)$-rich,
both with distortion $(1,O(1))$.%
\footnote{%
Fix an edge $(u,v)$.
As a base case, for $\DIM=2$, it is easy to construct three paths of
lengths $(1,3,3)$, or four paths of lengths $(1,3,5,7)$.
For each added dimension, there are two additional disjoint paths of
length 3, taking one edge along the new dimension, an edge parallel to
$(u,v)$, and another edge in the new dimension. These paths are
clearly disjoint.}


\end{note}

Next we present a solution which relies on knowing parameters $(\PN,\PL)$ of the local structure's richness. In other words, the pruning algorithm needs to know
how rich a local structure to expect. In Section \ref{sec:adaptive-EDP}, we show how to
make the pruning algorithm adapt to the available richness under fairly mild assumptions.

\subsection{Basic Approach: Edge-Disjoint Paths}
\label{sec:EDP}

Our solution is based on a more careful design of the pruning stage,
where instead of counting common neighbors, the algorithm counts
edge-disjoint paths of bounded length.
The pruning stage is very simple: The algorithm starts with an edge set $E = \ESW$.
It prunes each edge $(u,v)\in E$ such that $E$ does not contain \PN
edge-disjoint $u$-$v$ paths of at most \PL hops each.
This is repeated until no more edges can be pruned.
We call this algorithm the \emph{\ItAlg{\PN}{\PL};}
here, \emph{EDP} stands for Edge-Disjoint Paths.
See Algorithm~\ref{alg:EDP-test} for pseudocode.

\begin{algorithm}[htb]
{\bf Input.} Edge set $E$.

{\bf Repeat}
\begin{OneLiners}
\item[1.] Find any $(u,v) \in E$ s.t.~$E$ does not contain \PN
  edge-disjoint $u$-$v$ paths of at most \PL hops each.
\item[2.] Prune $(u,v)$ from $E$.
\end{OneLiners}
{\bf Until} no such edges $(u,v)$ remain.
\caption{The \ItAlg{\PN}{\PL}.}
\label{alg:EDP-test}
\end{algorithm}

The idea is that this algorithm keeps a sufficiently rich subset of
local edges, and prunes all edges in \ESW whose length exceeds some
threshold \constDR (defined in Equation~\eqref{eq:EDPtest}).
(We call such edges \emph{long edges}.)
For edges of intermediate length, the algorithm makes no guarantees
about whether they are pruned.
Crucially, the pruned graph does not depend on the long edges, in the
following sense:
Let $\ESW, \ESWP$ be two edge sets generated according to the same
distribution, such that the random choices for non-long edges
are the same, and the random choices for long edges are independent.
Then, with high probability (over the random process
generating all edges of \ESW and \ESWP), the remaining set of edges
after pruning is the same for both \ESW and \ESWP.
The advantage of this guarantee is that we do not need to
worry about dependencies on the pruned graph, so long as the
post-processing stage only uses long edges.
Therefore, we can use the pruned graph to define the initial
estimates \DnormS for normalized distances and then use a suitably
modified and optimized version of the (Recursive) \TwoBallTest which only
considers node pairs $(s,t)$ for which $\DnormS(s,t)$ is sufficiently
large. We omit the (easy) modifications of the algorithm and analysis.

We start the analysis of the \ItAlg{\PN}{\PL} with several observations.
First, notice that
the pruned graph \PrunedG{E} is the maximal $(\PN,\PL)$-connected
subset of $E$, i.e., the union of all such subsets.
It follows that \PrunedG{E} does not depend on the order in which the edges
are pruned.
Second, because \PrunedG{E} is the maximal $(\PN,\PL)$-connected subset
of $E$, the pruned graph \PrunedG{E} does not depend
on the presence or absence of the pruned edges
$ e\in E\setminus \PrunedG{E}$.
Formally, $\PrunedG{E} = \PrunedG{E'}$
whenever $\PrunedG{E} \subseteq E' \subseteq E$.

To ensure correctness, we can use the \ItAlg{\PN}{\PL} only if
the local structure is $(\PN,\PL)$-rich.
The performance depends on the parameters $(\PN,\PL)$: we get better
estimates for larger \PN and smaller \PL.
We summarize our results as follows.
In a slight abuse of notation, here, the (Recursive) \TwoBallTest
refers to the suitably modified version that works with the
\ItAlg{\PN}{\PL}.

\begin{theorem} \label{thm:constant-main}
Consider a \singleSG of near-uniform density.
Suppose that the local edge set \Eloc is $(\PN,\PL)$-rich with
distortion $(\CONTR,\EXPAN)$,
\asedit{and has constant maximal degree $\Kloc$}.
Let $D = \Theta(n^{1/\DIM})$ be the diameter of the metric space,
\asedit{and assume that $\delta< D^{2/\PN}$.}
For any constant $\alpha>0$ (which need not be known to the
algorithm), let
\begin{align}\label{eq:EDPtest}
\constDR[\alpha]
& = \; D^{(2+\alpha)/\PN} \cdot \PL \cdot (O(\kSW \asedit{+\Kloc} + \log^{1+\alpha} n))^{2h/\DIM} \\
& = \; D^{(2+\alpha)/\PN} \cdot (O(\log n))^{O(\PL)}. \nonumber
\end{align}
Let \asedit{$T(E)$} be the edge set retained by the \ItAlg{\PN}{\PL}.
Then, with probability at least $1-O(n^{-\alpha})$, the following hold.
{\renewcommand{\labelenumi}{(\alph{enumi})}
\begin{enumerate}
\item \asedit{$T(E)$} contains the connectivity witness \ElocP of \Eloc
and no edges whose length exceeds \constDR[\alpha].
The algorithm makes no guarantees for other edges.

\item Let \Dsp be the shortest-path distance on \asedit{$T(E)$}.
Then, for all node pairs $(u,v)$, we have that
\[
\D(u,v) \; \leq \;
\beta\, \Dsp(u,v)
\; \leq \; \EXPAN \cdot \beta\, \D(u,v),
\text{ where }
\beta = \max(\tfrac{1}{\CONTR}, \constDR[\alpha]).
\]
In words, the shortest paths distance in \asedit{$T(E)$}, scaled up by
$\beta$, gives no contraction,
and expansion at most $\EXPAN\, \beta$.

\item The \TwoBallTest reconstructs all normalized distances
  $\Dnorm(u,v)$ with unit distortion and additive error
        $\constDR[\alpha] (\Dnorm^\gamma(u,v) + \constDR[\alpha])$,
where
        $\gamma = {\frac{\DIM+2}{2\DIM+2}}$.

\item Assume that the metric has perfectly uniform density,
and the social distance is the $\ell_2^{\DIM}$ norm for $\DIM \geq 3$
dimensions. Then the Recursive Two-Ball Algorithm reconstructs all
normalized distances with unit distortion and additive error
$\constDR[\alpha] \cdot \polylog(n)$.
\end{enumerate}}
\end{theorem}

\OMIT{ 
Notice that these guarantees crucially rely on the algorithm's
parameters $(\PN,\PL)$ matching those of the local structure's richness.
In other words, the pruning algorithm needs to know how rich a local
structure to expect. In Section \ref{sec:adaptive-EDP}, we show how to
make the pruning algorithm adapt to the available richness under
fairly mild assumptions.
However, first, we prove Theorem \ref{thm:constant-main}.
} 

\begin{proof}
Most of the proof will focus on the first part of the theorem, i.e.,
that  with high probability, all edges of length at least
\constDR[\alpha] are pruned.
The remaining parts then follow analogously to previous proofs.
The proof of the second part is virtually identical to the proof of
Lemma \ref{lem:shortest-paths}.
The analysis of the (Recursive) \TwoBallTest is also similar to the
high-degree case, as long we we
establish the independence between the pruned graph and the long
edges: the edges of length exceeding \constDR[\alpha].
The reason that this independence is sufficient is that the
(Recursive) \TwoBallTest only uses long edges, and its analysis can
then omit any conditioning on the pruned graph.

To prove independence formally, let \ESW be a random edge set, and $E$
the set of all its non-long edges (of length at most \constDR[\alpha]).
Let \ESWP be another random edge set drawn from the same distribution
whose non-long edges are also exactly $E$, while its long edges are
generated independently from those of \ESW.
With high probability, the \ItAlg{\PN}{\PL} will prune all long edges from
both \ESW and \ESWP.
By the observation preceding Theorem \ref{thm:constant-main}, this
implies that $\PrunedG{\ESW} = \PrunedG{E}$ and
$\PrunedG{\ESWP} = \PrunedG{E}$, so that the
\ItAlg{\PN}{\PL} will produce the same pruned edge set on both graphs.

The remainder of the proof focuses on the first part of the theorem,
i.e., the fact that with high probability, all long edges are pruned.
The proof involves an intricate Deferred Decisions argument
encapsulated in Lemma~\ref{lm:feasibleSets} below,
which may be of interest in its own right.

\newcommand{\KeepingProb}{\ensuremath{\pi_{s,t}}\xspace}

Fix parameters $(\PN,\PL)$ and a node pair $(s,t)$, and let
$r = \D(s,t) > \constDR[\alpha]$.
In applying Lemma \ref{lm:feasibleSets}, we consider the ``universal
set'' $U$ of all node pairs.
Recall that the edge set $E = \ESW$ includes each node pair $(u,v)$
independently with some probability $p_{(u,v)}$.
The ``feasible subsets'' of $U$
(``feasible paths'') are all simple $s$-$t$ paths of at most \PL
hops. Any such path must contain at least one hop of length at least
$\tfrac{r}{\PL}$.
\asedit{Since $\tfrac{r}{\PL} > D^{2/b} >\EXPAN$, this hop cannot
  belong to $\Eloc$; instead, it must be a random edge. The
  probability of this random edge being present in \ESW is at most}
 $q\triangleq \CSW\, \kSW\, (\PL/r)^{\DIM}$.
By Lemma~\ref{lm:feasibleSets}, we obtain that for each $c\in \N$,
\begin{align} \label{eq:DisjFeasPaths}
\KeepingProb
&\triangleq \Prob{\text{\ESW contains \PN disjoint feasible paths}} \\
& \leq \Prob{|E'|>c} + \tfrac{1}{1-cq}\, (c q)^{\PN}, \nonumber
\end{align}
where $E'$ is the set of all node pairs $(u,v)$ such that
$\ESW \cup \SET{(u,v)}$ contains a feasible path.

The edge $(s,t)$ is retained with probability at most
\KeepingProb.  Once we prove that $\KeepingProb = O(n^{-(2+\alpha)})$,
we can complete the proof by taking the Union Bound over all $n^2$
node pairs $(s,t)$. So it remains to upper-bound the right-hand side
of~\eqref{eq:DisjFeasPaths} by $O(n^{-(2+\alpha)})$.

We first bound $\Prob{|E'|>c}$ in~\eqref{eq:DisjFeasPaths}.
Let the random variable $\Delta$ denote the maximum degree of \ESW.
Any node pair \asedit{$(u,v)\in T(E)$} has the property that \ESW contains both an
$s$-$u$ path and a $v$-$t$ path of length at most \PL hops each.
Therefore, for fixed endpoints $(s,t)$, there are at most
$\Delta^{\PL}$ candidates for $u$ and at most $\Delta^{\PL}$
candidates for $v$, and thus at most $\Delta^{2h}$ candidates for
$(u,v)$. We have thus proved that
    $|E'| \leq \Delta^{2h}$.
Now, using Chernoff Bounds to upper-bound $\Delta$, we have:
\begin{align*}
    \Prob{\Delta \geq \asedit{\Kloc + } \Theta(\kSW+\log\tfrac{n}{\delta}) }
        \leq \delta/n^2, \quad \text{ for all } \delta>0.
\end{align*}
\asedit{The term $\Kloc$ accounts for the edges in the local structure
  $\Eloc$, and the other terms represent the contribution of the
  random edges. We conclude that}
    $\Prob{ |E'| \geq c} \leq \delta/n^2 $
for
    $c=(\asedit{\Kloc + } \Theta(\kSW+\log\tfrac{n}{\delta}))^{2\PL}$.

Substituting this choice of $c$ into~\eqref{eq:DisjFeasPaths} and
taking $\delta=n^{-\alpha}$, we obtain:
\begin{align*}
\KeepingProb  = O(n^{-(2+\alpha)} + (cq)^{\PN}).
\end{align*}
Finally, we show that $\KeepingProb = O(n^{-(2+\alpha)})$ by substituting
    $q= \CSW\, \kSW\, (\PL/r)^{\DIM}$
and
    $r \geq \constDR[\alpha]$.
\end{proof}

\begin{lemma} \label{lm:feasibleSets}
Consider a \emph{universe set} $U$ and a collection $\F$ of non-empty
subsets of $U$ called \emph{feasible sets}.
A random set $E \subseteq U$ is obtained by including each element
$e\in U$ independently with probability $p_e$. The goal is to bound
from above the number of disjoint feasible subsets of $E$.

Fix $q\in [0,1]$ such that each feasible set contains at least one element $e$ with $p_e\leq q$. Let $E'$ be the set of elements $e\in U$ such that \asedit{$F \subseteq E$ and $e \in F$} for some feasible set $F$. Then, for each $\PN \in \N$,
\begin{align} \label{eq:feasibleSets}
&\Prob{\text{$E$ contains \PN disjoint feasible sets}} \\
& \qquad \leq \min_{c\in\N} \,\left[ \Prob{|E'|>c} + \frac{1}{1-cq}\, (c q)^{\PN} \right]. \nonumber
\end{align}
\end{lemma}

\begin{proof}
An element $e\in U$ with $p_e\leq q$ is called a \emph{witness}.
Fix an arbitrary ordering $\rho$ of $U$ in which all non-witnesses
precede all witnesses. For each feasible set $F\in \F$, the latest
witness \asedit{$w$} in $F$ according to $\rho$ is called a \emph{canonical
  witness} for $F$. If furthermore $F\subseteq E$, then $w$ is called
\emph{$E$-important}. Since each feasible set $F\subseteq E$ contains
an $E$-important witness, from here on, we will focus on counting
distinct $E$-important witnesses (rather than disjoint feasible sets
$F\subseteq E$).

We reveal one by one whether elements of $U$ are included in $E$, in
the order of $\rho$.
For each witness $w$, let $E_w$ be the actual subset of $E$ that is
revealed \emph{before} $w$ is considered.
Let us say that $w$ is \emph{$\rho$-important} if it is a
canonical witness for some feasible set $F\subseteq E_w \cup
\SET{w}$. Then, $w$ is $E$-important if and only if
$w \in E$ and $w$ is $\rho$-important.
The latter two events, namely $\SET{\text{$w$ is $\rho$-important}}$
and $\SET{w \in E}$, are independent.

Let $w(t)$ be \Kth{t} $\rho$-important witness chosen in the above
revelation process, $X_t = \indicator{w(t)\in E}$, and let
$N$ be the total number of $\rho$-important witnesses.
Then, $S_N \triangleq \sum_{t=1}^N\, X_t $
is the total number of $E$-important witnesses.
Our goal is to bound $S_N$ from above.

We accomplish this goal via Lemma~\ref{lm:probab} below.
The sequence $\SET{X_t}$ and the stopping time $N$ satisfy the
conditions in Lemma~\ref{lm:probab} (the upper bound).
Specifically, we have established that
$\ExpectC{X_t}{N\geq t} =p_{w(t)} \leq q$,
and the event $\{ X_t=1\}$ is independent of the past history given
that $N\geq t$.
By Lemma~\ref{lm:probab}, we obtain that for all $c$,
\begin{align}
\Prob{S_N \geq \PN}
    & \leq \BDF{c}{q}{\PN} + \Prob{N > c},
\end{align}
where $\Binomial{c}{q}$ is a random variable distributed according to
the Binomial distribution with $c$ samples and success probability
$q$. We have
    $\Prob{N > c} \leq \Prob{|E'| > c}$,
since each $\rho$-important witness is in $E'$.
We complete the proof by noting that
\[
\BDF{c}{q}{\PN}
  \; = \; \sum_{l=\PN}^c \binom{c}{l}\, q^l (1-q)^{c-l}
  \; \leq \; \sum_{l=\PN}^c (cq)^l
  \; \leq \; \tfrac{1}{1-cq}\, (c q)^{\PN}. \qedhere
\]
\end{proof}

Lemma \ref{lm:probab} below is a technical lemma for analyzing a
certain kind of ``revelation process,'' in which a sequence of
history-dependent 0-1 random variables $X_t$ is revealed, and the length $N$ of
this sequence is also a history-dependent random variable.
The lemma shows that whenever the \asedit{conditional} expectation of each individual 0-1
random variable can be bounded, we can also bound the sum \asedit{$S_N$ of these random variables}: we bound
the distribution of $S_N$ in terms of the corresponding Binomial distribution.
We will also use this lemma in the analysis of the adaptive algorithm
in Section~\ref{sec:adaptive-EDP}.


\begin{lemma} \label{lm:probab}
Consider a stochastic process $X_t \in \SET{0,1}, t\in \N$ and a
stopping time $N$ on a filtration $\SET{\F_t: t \in \N}$.%
\footnote{\asedit{In other words, $\SET{\F_t: t \in \N}$ is an increasing sequence of $\sigma$-algebras such that each random variable $X_t$, $t\in \N$ is $\F_t$-measurable, and $\{N\leq t\}\in \F_t$ for each $t\in \N$.}}
Define $S_N \triangleq \sum_{t=1}^N X_t$.
Assume that for some constants $p \leq q$,
\begin{align*}
    \ExpectC{X_t}{N\geq t, F} & \in [p, q] \quad \text{for all } t \in \N, F \in \F_{t-1}.
\end{align*}

Let $\Binomial{t}{p}$ be a random variable distributed
according to the Binomial distribution
with $t$ samples and success probability $p$.
Then, for all $x, t \in \N$,
\begin{align}\label{eq:probab-lemma}
\BDF{t}{p}{x} - \Prob{N < t}
&\leq \;\; \Prob{S_N  \geq x} \\
&\leq \;\; \BDF{t}{q}{x} + \Prob{N < t}. \nonumber
\end{align}
\end{lemma}

\begin{proof}
It suffices to prove the lower bound in~\eqref{eq:probab-lemma}; the
upper bound is then derived from the lower bound applied to the
stochastic process $\Set{1-X_t}{t\in \N}$.
Let $\Set{Y_t}{t\in \N}$ be a family of mutually independent 0-1
random variables with expectation $p$, and define
\begin{align*}
X^*_t = \begin{cases}
    X_t,    & N \geq t \\
    Y_t,    & \text{otherwise}.
\end{cases}
\end{align*}
For each $t$, let $S_t = \sum_{s=1}^t X_s,\; S^*_t = \sum_{s=1}^t X^*_s$, and
$\F^*_t = \sigma(X^*_1, \ldots, X^*_t)$.
For each event $F\in \F^*_{t-1}$, we have that
\begin{align*}
\ExpectC{X^*_t}{F, N\geq t}
    &= \ExpectC{X_t}{F, N\geq t} \geq p, \\
\ExpectC{X^*_t}{F, N< t}
    &= \ExpectC{Y_t}{F} = p,
\end{align*}
which implies that
\begin{align*}
\ExpectC{X^*_t}{F}
&= \; \ExpectC{X^*_t}{F, N \geq t} \cdot \ProbC{N\geq t}{F}
      + \ExpectC{X^*_t}{F, N< t} \cdot \ProbC{N< t}{F} \\
&\geq  p.
\end{align*}
By induction on $t$, it follows that
    $\Prob{S^*_t \geq x} \geq \BDF{t}{p}{x}$
for all $x,t\in \N$.
Noting that $S_N \geq S_t = S^*_t$ whenever $N \geq t$, we obtain that
\begin{align*}
\Prob{S_N \geq x}
    & \geq \ProbC{S_N \geq x}{N\geq t} \cdot \Prob{N\geq t} \\
    & \geq \ProbC{S^*_t \geq x}{N\geq t} \cdot \Prob{N\geq t} \\
    & = \Prob{S^*_t \geq x \mbox{ and } N\geq t} \\
    & \geq \Prob{S^*_t \geq x} - \Prob{N<t}  \\
    & \geq \BDF{t}{p}{x} - \Prob{N<t}. \qedhere
\end{align*}

\end{proof}

\subsubsection{Running times in Theorem~\ref{thm:constant-main}}
While the main thrust in this paper is information-theoretic, the
algorithms in Theorem~\ref{thm:constant-main} are actually
polynomial. Let us discuss how to improve the running times to near-linear,
an important feature for the sizes of networks we are envisioning.

The na\"{i}ve implementation of the \ItAlg{\PN}{\PL} checks every
remaining edge at each iteration, which gives a running time of
$\tilde{O}(n^2)$. We show how to reduce it to to $\tilde{O}(n)$.

\begin{lemma}\label{lm:EDP-runtime}
The \ItAlg{\PN}{\PL} can be implemented in
$\tilde{O}(n)$ time for constant \PN and \PL.
\end{lemma}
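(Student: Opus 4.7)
The plan is to replace the naive ``check every edge at every iteration'' loop by an event-driven procedure that does only $\polylog(n)$ work per pruning and processes each edge only a bounded number of times. I would maintain a queue of edges to be (re-)examined, initially containing all of $E$; whenever an edge is popped, run a local $(\PN,\PL)$-EDP test on it, and if it fails, prune it and re-enqueue a small set of potentially-affected edges. A preliminary degree bound is needed: in the constant-degree regime of Section~\ref{sec:constDeg} the expected node degree is $O(1)$, so Chernoff Bounds (Theorem~\ref{thm:Chernoff}) guarantee that w.h.p.\ the maximum degree of $E$ is $\Delta=O(\log n)$ and $|E|=O(n\log n)$; this is where the $\tilde{O}(\cdot)$ logarithmic factors arise.

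Next I would show that each local EDP check runs in $\polylog(n)$ time. Any $u$-$v$ path of at most \PL edges lies inside the \PL-hop ball around $\{u,v\}$, which contains at most $2\Delta^{\PL}=\polylog(n)$ nodes and edges since \PL is constant. A truncated BFS from $u$ exposes this local subgraph in $\polylog(n)$ time; the existence of \PN edge-disjoint $u$-$v$ paths of length at most \PL can then be decided by enumerating all at most $\Delta^{\PL}=\polylog(n)$ such paths and searching for a pairwise-edge-disjoint subfamily of size \PN, which takes $\binom{\polylog(n)}{\PN}=\polylog(n)$ time since \PN is constant.

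The main idea is that a pruning is \emph{local}: if deleting an edge $(u,v)$ invalidates the EDP certificate of another edge $(u',v')$, then some short $u'$-$v'$ path in the old graph must have used $(u,v)$, which forces both $u'$ and $v'$ to lie in the \PL-hop ball around $\{u,v\}$. Consequently, upon pruning $(u,v)$, only the (at most $\polylog(n)$) edges whose endpoints both sit in that ball need to be re-enqueued. The correctness of this policy relies on the observation made just after Algorithm~\ref{alg:EDP-test} that \PrunedG{E} is the \emph{maximal} $(\PN,\PL)$-connected subset of $E$, so the order in which failing edges are discovered is irrelevant; an edge survives iff it passes the EDP test each time it is examined after the last relevant prune.

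The main obstacle is bounding the total number of enqueueing events, since in principle a single edge could be re-enqueued many times. I would resolve this with a simple charging argument: charge every enqueueing (other than the $|E|$ initial ones) to the unique pruning that triggered it. Each pruning is responsible for at most $\polylog(n)$ re-enqueueings, and the total number of prunings is bounded by $|E|=\tilde{O}(n)$, so the total number of queue operations is $\tilde{O}(n)$. Combined with the $\polylog(n)$ cost of each local check, this yields total running time $\tilde{O}(n)$ for constant \PN and \PL, as claimed.
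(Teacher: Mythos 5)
Your proof is correct and takes essentially the same approach as the paper: a queue-based implementation with $\polylog(n)$-time local EDP checks, and a charging argument bounding the total number of re-enqueueings by the number of prunings. The only variation is the re-enqueue policy — the paper stores a per-edge certificate (a $\PN$-tuple of short paths) with back-pointers and re-enqueues precisely those edges whose stored certificate used the pruned edge, whereas you conservatively re-enqueue all edges with both endpoints in the \PL-hop ball around the pruned edge; your set is a superset of the paper's but still of size $\polylog(n)$ for constant \PL, so correctness and the $\tilde{O}(n)$ bound are unaffected, and you save the bookkeeping of maintaining explicit certificates.
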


\begin{proof}
We maintain a queue of edges to be checked,
initially containing all edges of \ESW.
In each step, one edge $e=(u,v)$ is removed from the queue and checked
for pruning with respect to the current pruned graph \Ecur.
If \Ecur does not contain the requisite \PN-tuple of edge-disjoint
paths of length at most \PL, then $e$ is pruned permanently.
Otherwise, the \PN-tuple of paths provides a ``certificate'' for $e$.
Later iterations may remove edges from this certificate; therefore,
for each edge $e'$ in the certificate, the algorithm stores a pointer
that $e'$ is part of the certificate for $e$.
If $e'$ is pruned at any point, then, following the pointers, the
algorithm can determine all edges $e$ whose certificates $e'$
participates in. Upon pruning $e'$, all such edges $e$ are then
re-enqueued and will need to be checked again for alternative certificates.
Once the queue becomes empty, the algorithm terminates.

Without loss of generality, the target degree \kSW
is $O(\log^2 n)$ (otherwise, the much more efficient \SimpleTest
from Section~\ref{sec:amoeba} would be used).
By Chernoff Bounds, all node degrees are $O(\log^2 n)$ with high probability.
Finding a certificate for a given edge using brute force then takes
only $\polylog(n)$ time.
Moreover, for each edge $e$, there can be at most $\polylog(n)$ edges
whose certificates $e$ participates in.
No new edges are added to the queue if the current edge is not pruned,
and at most $\polylog(n)$ edges are added otherwise.
Therefore, the running time is $\tilde{O}(n)$.
\end{proof}

We also comment on the running time of the \TwoBallTest.
Applying this algorithm to a given node pair $(u,v)$ can be
computationally expensive when $\D(u,v)$ is large (and consequently, the
algorithm needs to consider large balls around $u$ and $v$).
Thus, the \TwoBallTest for a given node pair can be viewed as a
precise but costly distance measurement.
Instead of applying it to \emph{every} node pair, we could
instead use the beacon-based triangulation technique
from~\cite{kleinberg:slivkins:wexler:triangulation}: here, one selects
$O((\tfrac{1}{\eps})\, (\tfrac{1}{\delta})^{\DIM})$ ``beacon nodes''
uniformly at random, and measures the distance from each node only to
each beacon. This technique achieves distortion $(1+\delta)C$ for all but an
$\eps$-fraction of node pairs, where $C$ is the distortion of the
Two-Ball test.


\subsection{Adapting to the ``optimal'' richness}
\label{sec:adaptive-EDP}

Theorem~\ref{thm:constant-main} assumes that the $(\PN,\PL)$-richness of
the local edge set \Eloc is known to the algorithm.
In reality, it is desirable to adapt to the ``optimal'' richness
without knowing it in advance.
Here, the ``optimal'' richness means the $(\PN,\PL)$ pair that minimizes
\constDR[\alpha] in Equation~\eqref{eq:EDPtest}, subject to the constraint
that \Eloc is $(\PN,\PL)$-rich with small distortion.
We show that such an automatic adaptation can be achieved if
\Eloc is ``robust,'' in the sense defined below.

Our algorithm, called \emph{\AdaptiveEDP,} proceeds as follows: for a
given set \PLSET of candidate hop counts, we try all $(\PN,\PL)$ pairs,
$\PL\in \PLSET$, in order of increasing \constDR[\alpha] until the pruned
graph is connected, and focus on the last pair. Without loss of
generality, we can start with $\PN$ equal to the smallest node degree
in $\ESW$. We can use binary search over the $(\PN,\PL)$ pairs (in the
same order) to reduce the number of pairs that we need to
consider.

\asedit{While the above algorithm is very simple, the challenge is to prove
that it works, in the sense that the chosen $(\PN,\PL)$ pair is optimal.
In particular, we need to identify suitable assumptions on \Eloc and
the set $H$ that parameterizes the algorithm.
While in practice, the choice of $H$ is heuristic, we prove that the
algorithm works as long as the assumptions are satisfied.}


Let \PrunedG[\PN,\PL]{E} denote the
pruned graph if \ItAlg{\PN}{\PL} is applied to the edge set $E$. We
rely on the following crucial observation:

\begin{lemma}\label{lm:adaptive-EDP-isolatedNodes}
Consider a \singleSG with near-uniform density. Suppose
that the local structure \Eloc is a $(\cdot,\EXPAN)$-spanner, and
moreover, \PrunedG[\PN,\PL]{\Eloc} contains at least $\eps n$
isolated nodes, for some parameters $\PN,\PL,\eps,\EXPAN$ such that
\begin{align} \label{eq:adaptiveEDPtest}
(2\EXPAN \PL)^{\DIM}\; \densityK^2\, \CSW\, \kSW \leq \tfrac{1}{6}.
\end{align}
Then $\PrunedG[\PN,\PL]{\ESW}$ is disconnected with high probability.
\end{lemma}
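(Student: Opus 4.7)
The plan is to exhibit, with high probability, a node $v \in I$ that remains isolated in $\PrunedG[\PN,\PL]{\ESW}$, thereby disconnecting the pruned graph. I would carry out the argument in three steps.

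First I would establish a \emph{locality property}. For any $v \in I$, let $B_v = \Ball{v}{2\EXPAN \PL}$. Whether $v$ is isolated in $\PrunedG[\PN,\PL]{\ESW}$ is essentially determined by the edges of $\ESW$ supported inside $B_v$. Indeed, a surviving edge $(v,u)$ would be witnessed by $\PN$ edge-disjoint $v$-$u$ paths of length at most $\PL$. Local edges have length at most $1$ and $\Eloc$ is a $(\cdot,\EXPAN)$-spanner, so any $\PL$-hop path built from $\Eloc$ alone lies inside $B_v$. Paths that use random edges of $\ESW\setminus\Eloc$ can in principle leave $B_v$, but the contribution of witnesses using $k\geq 2$ random edges decays geometrically in $k$, so the dominant case is a single random edge, both of whose endpoints must then lie in $B_v$. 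Hence if no such random edge is supported in $B_v$, then the pruning of $\ESW$ around $v$ coincides with that of $\Eloc$ around $v$, and $v$ remains isolated by hypothesis.

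Second I would establish the \emph{per-node bound}. Summing the edge probabilities $\CSW\kSW\,\D^{-\DIM}$ over pairs in $B_v$ and counting pairs by distance scale using near-uniform density, the expected number of random edges with both endpoints in $B_v$ is at most $\densityK^2\,(2\EXPAN \PL)^{\DIM}\,\CSW \kSW$, up to a logarithmic factor that can be absorbed into the constant in~\eqref{eq:adaptiveEDPtest}. By that hypothesis, this quantity is at most $1/6$, so Markov's inequality gives $\Pr[v \text{ becomes non-isolated in }\PrunedG[\PN,\PL]{\ESW}] \leq 1/6$.

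Third I would \emph{boost} this per-node statement to a high-probability guarantee. Because $|I| \geq \eps n$ and each ball $B_v$ has at most $O(\densityK(\EXPAN \PL)^{\DIM})$ nodes, a greedy packing produces $I' \subseteq I$ of size $\Omega(\eps n / (\EXPAN \PL)^{\DIM})$ whose rescue balls $\{B_v : v\in I'\}$ are pairwise disjoint. For $v \in I'$, the ``$v$ is saved'' events therefore depend on disjoint, mutually independent collections of random edges; the probability that every $v \in I'$ is saved is at most $(1/6)^{|I'|} = o(1/n)$, so with high probability some $v \in I'$ stays isolated, and $\PrunedG[\PN,\PL]{\ESW}$ is disconnected.

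The main obstacle is making the locality property in the first step fully rigorous, since $\PL$-hop witness paths may in principle use long random edges to escape $B_v$. I expect this can be handled by a Lemma~\ref{lm:feasibleSets}-style enumeration of candidate witness structures rooted at $v$: each additional random edge used by a witness carries its own small probability factor, and summing the geometric series shows that witnesses with $k\geq 2$ random edges contribute a lower-order term dominated by the single-random-edge case already controlled by~\eqref{eq:adaptiveEDPtest}.
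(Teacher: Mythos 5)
Your locality claim in the first step is where the argument breaks down, and the paper's proof is structured precisely to avoid needing it. You restrict attention to random edges \emph{supported inside} $B_v$. But even if no random edge has both endpoints in $B_v$, the node $v$ (or another node of $B_v$) may be incident to a random edge $(v,w)$ with $w$ outside $B_v$. Such an edge is itself an edge of $\ESW$ incident to $v$, so $v$ is not isolated unless $(v,w)$ is pruned --- and whether it is pruned depends on the global random edge set, not on $B_v$ alone. Likewise, such ``one-foot-out'' random edges can participate in $\PL$-hop witness paths for local edges $(v,u)$; that is not covered by the ``single random edge with both endpoints in $B_v$'' case, and the geometric-decay dismissal of witnesses with $k\geq 2$ random edges is asserted but not actually carried out. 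These edges going outside $B_v$ cannot be swept under the Markov step: under condition~\eqref{eq:adaptiveEDPtest}, the expected number of random edges with both endpoints in $B_v$ is $O(1)$, but the expected number of random edges \emph{incident} on $B_v$ is of order $\kSW\,\densityK(2\EXPAN\PL)^{\DIM}$, which, since $\CSW = \Theta(1/\log n)$, can be as large as $\Theta(\log n)$. So the per-node ``saved'' probability is not bounded by $1/6$, and the $(1/6)^{|I'|}$ boost over disjoint balls does not go through.

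The paper's proof sidesteps this by demanding the \emph{stronger} event $\EE(V_u)$: no node in the $(\PL-1)$-hop $\Eloc$-neighborhood $V_u$ of $u$ is incident on \emph{any} random edge, including ones leaving the neighborhood. This gives a clean, deterministic implication (every $\PL$-hop path out of $u$ then stays in $\Eloc$, so $u$'s pruning in $\ESW$ coincides with its pruning in $\Eloc$, hence $u$ stays isolated), but the probability of $\EE(V_u)$ is only $n^{-\Theta(1)}$ --- exactly because it must forbid all the long-range edges that your argument tries to ignore. That in turn forces a more delicate boosting argument: the paper takes a net inside $I$ of size $\Omega(n/\polylog n)$, sets up a revelation process over that net, and applies Lemma~\ref{lm:probab} to show that among $\eps\sqrt{n}$ near-independent trials, each with success probability $n^{-\Theta(1)}$ (with the exponent bounded away from $1/2$ by~\eqref{eq:adaptiveEDPtest}), at least one succeeds with high probability. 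Your disjoint-ball product of constant probabilities would be cleaner if the locality claim held, but it does not, and fixing it would essentially land you back on the paper's route.
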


\begin{note}{Remark.}
Since $\CSW = \Theta(1/\log n)$ and $\densityK = \Theta(1)$,
condition~\eqref{eq:adaptiveEDPtest} holds, for large enough $n$,
whenever $\kSW$, $\EXPAN$ and \PL are constants.
\OMIT{The condition plugs into Inequality~\eqref{eq:adaptiveEDPtest-conditionUsed} in the
proof. We conjecture that it can be somewhat relaxed using a more
careful analysis.}
\end{note}

Lemma~\ref{lm:adaptive-EDP-isolatedNodes} is proved below.
It naturally motivates the following definition of ``robustness.''

\begin{definition}\label{def:EDP-robust}
A connected graph $G=(V,E)$ is called \emph{$(\eps,\PL)$-robust} with
distortion $(\CONTR,\EXPAN)$, for some $\eps\in(0,1]$, if the
following holds for every \PN:
either $G$ is $(\PN,\PL)$-rich with distortion $(\CONTR,\EXPAN)$,
or \PrunedG[\PN,\PL]{E} contains at least $\eps n$ isolated nodes.\footnote{Note that any graph $G$ in Definition~\ref{def:EDP-robust} is a $(\CONTR,\EXPAN)$-spanner. This is because for $\PN=1$ no edges are pruned, and so $G$ must be $(1,\PL)$-rich with
distortion $(\CONTR,\EXPAN)$, which in turn implies that it is a $(\CONTR,\EXPAN)$-spanner.
}
\end{definition}

In the first case of this definition, we can use the \ItAlg{\PN}{\PL}
safely, while in the second case, we will show that
\PrunedG[\PN,\PL]{\ESW} is disconnected with high probability.

Notice that the toroidal grid is $(1,\PL)$-robust for any \PL.
We give more examples of robust graphs in
Section~\ref{sec:adaptive-EDP-examples}.

\begin{theorem}\label{thm:adaptiveEDPtest}
Consider a \singleSG with near-uniform density and local structure \Eloc
\asedit{of constant maximal degree $\Kloc$}.
Suppose that for all $\PL \in \PLSET$, \Eloc is
$(\eps,\PL)$-robust with distortion $(\CONTR,\EXPAN)$ and~\eqref{eq:adaptiveEDPtest} holds.
\asedit{Assume that $\delta< D^{2/\PN}$.}
Then, when the \AdaptiveEDP is run with the
candidate set \PLSET, it will obtain the guarantees of
Theorem~\ref{thm:constant-main} for the optimum pair
$(\PN,\PL)$ among all $\PL \in \PLSET$.
\end{theorem}


\begin{proof}
The \AdaptiveEDP picks the pair $(\PN,\PL)$ with optimal
\constDR[\alpha] among all pairs $(\PN,\PL), \PL \in \PLSET$ such that the
pruned graph \PrunedG[\PN,\PL]{\ESW} is connected.
By Lemma~\ref{lm:adaptive-EDP-isolatedNodes}, with high probability,
this is the set of all pairs $(\PN,\PL), \PL\in \PLSET$ such that the
local structure
\Eloc is $(\PN,\PL)$-rich with distortion $(\CONTR,\EXPAN)$.
\end{proof}

\OMIT{ 
First, observe that whenever
$\Eloc \subseteq \PrunedG[\PN,\PL]{\ESW}$,
i.e., the local structure is preserved for some pair $(\PN,\PL)$,
the \AdaptiveEDP obtains the guarantees of
Theorem~\ref{thm:constant-main}.
The concern would thus be that for some pair $(\PN,\PL)$, the local
structure is not preserved, and the EDP-pruning algorithm arrives at
incorrect distance estimates.
The \AdaptiveEDP only uses the best pair
$(\PN,\PL), \PL \in \PLSET$ for which \PrunedG[\PN,\PL]{\ESW} is connected.
Thus, it suffices to prove that whenever
$\Eloc \not \subseteq \PrunedG[\PN,\PL]{\ESW}$,
with high probability, \PrunedG[\PN,\PL]{\ESW} is disconnected.
From now on, focus on one such pair $(\PN,\PL)$, and let
$\PRUNEDG = \PRUNEDG[\PN,\PL]$.
} 

\begin{proofof}[Proof of Lemma~\ref{lm:adaptive-EDP-isolatedNodes}]
Fix $(\PN,\PL)$ and let $\PRUNEDG = \PRUNEDG[\PN,\PL]$.
Let $I$ be the set of $\eps n$ isolated nodes in $\PrunedG{\Eloc}$.

The high-level idea of the proof is as follows.
For each node $u \in I$ and any edge $(u,v) \in \Eloc$, the local
structure \Eloc alone does not contain \PN edge-disjoint paths of
length at most \PL.
Thus, for $u$ not to be isolated in \PrunedG{\ESW},
a small neighborhood of $u$ would have to be incident on at least one
random edge.
Because there are at least $\epsilon n$ such isolated nodes $u$,
we will be able to show that with high probability, at least one of
them will end up isolated in \PrunedG{\ESW}.
This is not trivial as there is significant dependence between the
isolation events for different nodes; we solve this issue by
considering a sufficiently spread-out subset \NET of $I$ (which
limits the dependence), and then applying Lemma~\ref{lm:probab}
to a carefully designed revelation process.
We now fill in the remaining technical details.

For any set $S \subseteq V$, let $\EE(S)$ denote the event that \ESW
contains no random edges incident on $S$.
We begin by lower-bounding $\Prob{\EE(u)}$ for individual nodes $u$.
Fix $u$ and a distance scale $r$, and let $U_r$ be the set of nodes
$v$ with $\D(u,v) \in (r, 2 r]$.
There are at most $\densityK \cdot (2r)^{\DIM}$ nodes in $U_r$, and for
each node $v\in U_r$, an edge $(u,v)$ is created independently with probability at most
    $q \triangleq \CSW\, \kSW\, r^{-\DIM}$.
Thus, the probability that $u$ has no edges to any nodes in $U_r$ is
at least
\begin{align*}
(1-q)^{|U_r|}
    = \left[ (1-q)^{1/q} \right]^{2^{\DIM}\, \densityK\cdot \CSW\,\kSW}
    \geq 4^{-2^{\DIM}\, \densityK\cdot \CSW\,\kSW}.
\end{align*}
Here, we used the fact that the function $f(q)=(1-q)^{1/q}$ is
decreasing in $q$, so in particular
    $f(q) \geq f(\tfrac12) = \tfrac14$
for any $q\leq \tfrac12$.

The event that $u$ has no random edges is now the intersection of
the events that $u$ has no random edges at scale $r$, with $r$ ranging
over powers of $2$. Thus, $\EE(u)$ is the intersection of $\log(n)$
independent events, each with probability at least
    $4^{-2^{\DIM}\, \densityK\cdot \CSW\,\kSW}$.
Thus, for each node $u$,
\begin{align*}
\Prob{\EE(u)}
\; \geq
\; 4^{-2^{\DIM}\, \densityK\cdot \CSW\,\kSW\, \log n}
\; = \;  n^{-2\cdot 2^{\DIM}\, \densityK\cdot \CSW\,\kSW}.
\end{align*}

For any node $u \in I$, let $V_u$ be the $(\PL-1)$-hop
neighborhood of $u$ in \Eloc. Note that
    $V_u \subseteq \Ball{u}{\EXPAN \PL}$,
so it contains at most $\densityK\,(\EXPAN \PL)^d$ nodes.
We consider events $\EE(V_u)$ that no node in $V_u$ is incident on any
random edges.
The absence of any random edges incident on a subset of nodes
$V'$ can only increase the probability that no random edge is incident
on a given node $u$, as there are fewer remaining candidate edges.
In this sense, the events $\Set{\EE(v)}{v}$ are positively correlated,
and we can bound
\begin{align} \label{eq:adaptiveEDPtest-conditionUsed}
\Prob{\EE(V_u)}
    & = \; \textstyle \Prob{\bigcap_{v \in V_u} \EE(v)}
    \; \geq \textstyle \; \prod_{v \in V_u} \Prob{\EE(v)}  \\
    & \geq \; n^{-2\cdot (2\EXPAN \PL)^{\DIM}\, \densityK^2\cdot \CSW\,\kSW}. \nonumber
\end{align}
By the assumption~\eqref{eq:adaptiveEDPtest}, the above expression is at most
    $p \triangleq n^{-1/3}$.

We claim that whenever $\EE(V_u)$ happens, the node
$u\in I$ is isolated in \PrunedG{\ESW}.
First, note that under the event $\EE(V_u)$, $u$ itself has no
incident random edges. Let $(u,v) \in \Eloc$ be arbitrary.
We show that $(u,v)$ must be pruned.
Because no random edges are incident on $V_u$, no path in
\PrunedG{\ESW} of length at most \PL starting from $u$ can
contain any random edge.
Thus, all $u$-$v$ paths of length at most \PL in \PrunedG{\ESW}
must be entirely in \Eloc.
However, $(u,v) \notin \PrunedG{\Eloc}$, so \Eloc does not
contain \PN edge-disjoint $u$-$v$ paths of length at most \PL.
Hence, $(u,v) \notin \PrunedG{\ESW}$.

It remains to show that with high probability, at least one of
the events $\EE(V_u), u \in I$ happens.
To limit the dependence between the events under consideration, we
focus on a subset $\NET \subseteq I$.
Let $\NET \subseteq I$ be a $2C \PL$-net for $(I, \D)$.%
\footnote{Recall that an
  $r$-net for a metric space $(V,\D)$ is a set of points
  $\NET \subseteq V$ such that (i) any two points in \NET
  are at distance at least $r$ from one another, and (ii) any point in
  $V$ is within distance at most $r$ from some point in \NET. It is
  a well-known fact that such sets exist and can be constructed
  greedily by adding one point at a time.}
Because there are at most $O((C \PL)^{\DIM})$ nodes within distance
$2C \PL$ of any node $u$,
we obtain that $|\NET|\geq \eps n/O((C \PL)^{\DIM})$.
Furthermore, because \Eloc has distortion at most $C$, we get that
$V_u \subseteq \Ball{u}{C(\PL-1)}$, implying that
the neighborhoods $V_u, u \in \NET$ are pairwise disjoint.

The events $\EE(V_u), u \in \NET$ are still not independent, but their
dependence is now more limited, making them amenable to the
technique of Lemma~\ref{lm:probab}.
We define an ordering for revealing the presence (or absence) of edges
$(u,v)$, along with a revelation of the events $\EE(V_u), u \in \NET$.
Fix some ordering $\varphi$ on \NET, and start with $R = \NET$.
$R$ throughout will be a set of candidate nodes $u$ such that the
event $\EE(V_u)$ has not been ruled out.
In step $t = 1, 2, \ldots$, if $R \neq \emptyset$, let $u_t \in R$
be the first remaining node in $R$ according to $\varphi$.
Reveal the presence or absence of all random edges incident on
$V_{u_t}$ which have not been revealed yet.
Whenever a random edge $(v,v')$ is revealed to be present such that
$v \in V_{u_t}, v' \in V_{u'}$ for some $u' \in R$, remove $u'$ from $R$.
(In this case, $\EE(V_{u'})$ clearly cannot happen any more.)
Once $R$ is empty, reveal the presence or absence of all remaining
random edges.
Clearly, this is an equivalent way of revealing the random edge set
\ESW.

Consider a particular step $t$, during which a node $u_t \in R$ is
processed.
If no edges incident on $V_{u_t}$ are revealed, the event
$\EE(V_{u_t})$ has happened, and \PrunedG{\ESW} will be
disconnected.
Conditioned on processing node $u_t$, the event $\EE(V_{u_t})$
happens with probability at least $p = n^{-4/9}$,
as the absence of
some edges incident on $V_{u_t}$ may already have been revealed
earlier, whereas no edges can have been revealed as present.
(Otherwise, $u_t$ would have been removed from $R$.)

Let $N$ be the number of steps $t$ of the revelation process,
and let $X_t$ be the indicator variable of the event $\EE(V_{u_t})$.
Thus, whenever each $V_u, u \in \NET$ has an incident random edge,
we have that $\sum_{t=1}^N X_t = 0$.
It thus suffices to upper-bound the probability that
$\sum_{t=1}^N X_t = 0$, which can be accomplished using the lower
bound of Lemma~\ref{lm:probab} with $x=0$:
\begin{align*}
\textstyle  \Prob{\sum_{t=1}^N X_t \geq 1}
& \geq  (1- (1-p)^t) - \Prob{N < t},
    \quad \mbox{ for all } t \in \N,
\end{align*}
or equivalently,
\begin{align} \label{lm:adaptiveEDPtest-event}
\textstyle \Prob{\sum_{t=1}^N X_t = 0}
& \leq (1-p)^t + \Prob{N < t},
    \quad \mbox{ for all } t \in \N.
\end{align}

We choose $t = \eps \sqrt{n}$. Then,
\begin{align*}
(1-p)^t
\; \leq \; (1-n^{-4/9})^{\eps \sqrt{n}}
\; \leq \; e^{-\eps n^{1/18}},
\end{align*}
so $(1-p)^t$ is exponentially small.
Finally, we bound the probability that $N < \eps \sqrt{n}$.
Consider a step $t$ of the revelation process.
With high probability, each node in $V_{u_t}$ has
at most $O(\log n)$ incident random edges, so that the total number of
random edges incident on $V_u$ is at most $O(\Kloc^{\PL} \log n)$,
\asedit{where $\Kloc$ is the maximal degree in $\Eloc$}.
Thus, with high probability, at most $O(\Kloc^{\PL} \log n)$ other
nodes $u$ can be removed from $R$ in any one step, implying that
the process will take at least
\begin{align*}
\frac{|\NET|}{O(\Kloc^{\PL} \log n)}
\; \geq \; \Omega\left( \frac{\eps n}{(C \PL)^{\DIM} \Kloc^{\PL} \log n} \right)
\; \geq \; \eps \sqrt{n}
\end{align*}
steps, for sufficiently large $n$.
In particular, $N \geq t$ with high probability, completing the proof.
\end{proofof}

\subsubsection{Examples of robust graphs}
\label{sec:adaptive-EDP-examples}

Recall that the toroidal grid is $(1,\PL)$-robust for any \PL.
The grid example extends to graphs that are edge-transitive on a small
scale, in the following sense.

\begin{definition}
Fix a graph $G$ and a path length \PL.
For any edge $e$, let $H_e$ be the induced subgraph of the
\PL-hop neighborhood of $e$ in $G$.
Two edges $e,e'$ are \emph{locally \PL-equivalent}
if there exists an isomorphism $\phi_{e,e'}: H_e \to H_{e'}$ with
$\phi(e) = e'$. $G$ is called \emph{edge-transitive} at scale \PL if
any two edges are locally \PL-equivalent.
\end{definition}

Notice that the traditional definition of edge-transitive graphs is
obtained when \PL equals the graph's diameter.

\OMIT{ 
\begin{enumerate}
\item Two edges $e,e'$ are \emph{locally equivalent}
if there exists an isomorphism $\phi_{e,e'}: H_e \to H_{e'}$ with
$\phi(e) = e'$. A graph $G$ is \emph{\PL-transitive} if any two edges are locally equivalent.

\item An edge $e'$ is \emph{at least as dense as} $e$ if there exists a
subgraph $\hat{H} \subseteq H_{e'}$ and
an isomorphism $\phi_{e,e'}: H_e \to \hat{H}$ with $\phi(e) = e'$.
A graph $G$ is \emph{$(\eps,\PL)$-transitive} if there exists a
set $S$ of at least $\eps n$ nodes with the following property:
for each pair of edges $e,e'$ such that $e$ has at least one endpoint
in $S$, $e'$ is at least as dense as $e$.~\footnote{\ASedit{It follows that any two edges with at least one endpoint in $S$ are locally equivalent.}}
\end{enumerate}
}

Let $G$ be an edge-transitive graph at scale \PL that is a
$(\CONTR,\EXPAN)$-spanner for \D.
It is easy to see that $G$ is $(1,\PL)$-robust with distortion
$(\CONTR,\EXPAN)$.
Indeed, the \PL-hop neighborhood of a given edge $(u,v)$ determines
whether this edge is $(\PN,\PL)$-connected (i.e., whether there exist \PN
edge-disjoint $u$-$v$ paths of at most \PL hops each).
So for every given \PN, either every edge in $G$ is
$(\PN,\PL)$-connected, or every edge in $G$ is not
$(\PN,\PL)$-connected and therefore pruned by the \ItAlg{\PN}{\PL}.

We further generalize this example to graphs $G$ with some short edges
added. Specifically, pick an arbitrary node set $S\subseteq V$ such that
its $(\PL+1)$-neighborhood in $G$ contains at most $1-\eps n$ nodes,
for some $\eps\in (0,1)$. Add arbitrary edges $(u,v)$ such that
$u,v \in S$ and $\D(u,v)\leq \EXPAN$. Note that the resulting graph
$G'$ is also a $(\CONTR,\EXPAN)$-spanner for \D.

We claim that $G'$ is $(\eps,\PL)$-robust with distortion
$(\CONTR,\EXPAN)$. Indeed, if $G$ is $(\PN,\PL)$-connected for some
\PN then $G'$ is $(\PN,\PL)$-rich with distortion $(\CONTR,\EXPAN)$
and connectivity witness $G$.
Otherwise, no edge in $G$ is $(\PN,\PL)$-connected in $G$ alone.
Consider the complement $S'$ of the $(\PL+1)$-neighborhood of $S$.
Any edge $e$ in $G'$ with at least one endpoint in $S'$ is also
present in $G$, and moreover has the same \PL-neighborhood in both
graphs.
It follows that $e$ is not $(\PN,\PL)$-connected in $G'$;
consequently, it is pruned by the \ItAlg{\PN}{\PL}.
Therefore every node in $S'$ is isolated in \PrunedG[\PN,\PL]{G'}.

\section{Conclusions}
\label{sec:conclusions}

We have shown that, under standard assumptions about generative models
for social networks, it is possible to reconstruct social
spaces with small distortion from a multiplex social network;
indeed, it is possible to do so in near-linear time.
The edges do not need to be labeled with their ``origin,''
so long as the categories are ``locally sufficiently uncorrelated.''
Under increasingly stronger assumptions, the distortion can be
improved from constant, to $1+o(1)$, to poly-logarithmic additive error.
While these results rely on having poly-logarithmic node degree, we
also show that small polynomial distortion can be obtained in the
constant-degree regime, so long as the social network contains a
sufficiently rich local structure.
This is possible even if the algorithm only possesses very rudimentary
knowledge about the local structure.

While our results can be interpreted as a proof of concept --- it is
possible in principle to efficiently separate the different
dimensions of social interactions and identify similarities between
individuals --- they set the stage for a number of possible extensions.
\begin{enumerate}
\item There are several specific technical open questions within our
model, the most immediate of which is extending the multi-category
results to the constant-degree regime.
\item We assumed that the algorithm had knowledge of various
input parameters (the number of categories, the number of dimensions,
etc.), whereas ideally, the algorithm should be able to learn these
parameters from input data as well.

\OMIT{
\item We would like to be able to
infer metric spaces beyond near-uniform density point sets.
Dealing with highly non-uniform metrics may be more challenging.}

\item For our multi-category algorithms to work, we required a
``category disjointness''
condition, essentially stating that locally,
metrics look uncorrelated with respect to each other. It seems
unlikely that one could reconstruct metrics if categories were
extremely similar, but it is an interesting open question how much our
current condition could be weakened while still allowing for provable
reconstruction.
In particular, we conjecture that future work will be able to deal
with a few localized violations of the category disjointness
condition, so that they lead to incorrect distance
estimates only for the affected nodes, without propagating to other
parts of the metric space.
\item Our model so far also assumes that the node degrees are
essentially uniform across nodes, which will usually not hold in
practice. A corresponding extension for the single-category case might
not be too difficult, but inferring the individual node degrees for
multiple categories appears more difficult.

\item Finally, and perhaps most importantly, one may want to consider
  ``host spaces'' other than Euclidean space with near uniform density,
  such as ultrametrics, more general ``group structures'' (e.g.,
  \cite{small-world-nips}), or point sets with significantly non-uniform density.
  It would be particularly interesting if an
  algorithm did not need to know the structure of the host space in
  advance, and instead could infer it from the data.
\end{enumerate}

In practice, there will usually be additional information available
beyond the edges. This may include information about nodes' locations,
interests, or demographics (as collected by social networking sites);
partial interaction statistics along the edges;
or perhaps a social network that has been previously embedded
in a social distance space, but is now being extended by a few new nodes.
In either case, it is an interesting question how to formalize the
benefits that can be obtained with such side information.
In particular, time stamps on edges introduce a temporal dimension into
the problem: now, instead of fixed node locations in the social space,
one could ask about nodes' trajectories over time.



\subsubsection*{Acknowledgments}
We would like to thank Christian Borgs, Jennifer Chayes, Moises
Goldszmidt, Bobby Kleinberg, Jon Kleinberg, Peter Monge, Satish Rao
and Ken Wilbur for useful discussions and pointers, and an
  anonymous referee for detailed and helpful feedback.

\bibliographystyle{siam}

\bibliography{bibliography/names,bibliography/conferences,bibliography/publications,bibliography/bibliography,bibliography/bib-nodeLabeling}




\newpage
\appendix
\asedit{\section{Important notation}}
\label{app:notation}

\asedit{[Note to review team: This appendix is new.]}

\begin{tabular}[h]{l|l}
\multicolumn{2}{l}{{\bf Single category}} \\
$V$, $n$& $V$ is the ground set of $n$ nodes \\
$\ESW$  & the realized social graph \\
$\D(u,v)$    & social distance between nodes $u$ and $v$\\
$\Ball{u}{r}$
    &   $\Set{v}{\D(u,v) \leq r}$ closed ball w.r.t.~$\D$, with center $u$ and radius $r$ \\
$\DIM$     & Euclidean dimension of the metric space\\
$\densityK$ & the constant in the definition of \emph{near-uniform density} \\
$\kSW$ & target degree of nodes\\
$\CSW$ & normalization constant for the edge distribution\\
$f(\cdot)$ & $f(r) = \min(1, \CSW \kSW\, r^{-\DIM})$
              probability of edge of social distance $r$; \\

\multicolumn{2}{l}{{\bf Multiple categories}} \\
$\NUMCAT$     & number of categories \\
$\D[i](u,v)$ & social distance in category $i$ \\
$\Ball[i]{u}{r}$
    &   $\Set{v}{\D(u,v) \leq r}$ closed ball w.r.t.~$\D[i]$, with center $u$ and radius $r$ \\
$\ESW$, $\ESW[i]$
    & $\ESW[i]$ is the realized social graph for category $i$;
     $\ESW = \bigcup_{i=1}^\NUMCAT \ESW[i]$ \\
$\kSW$, $\kSW[i]$
    & $\kSW[i]$ is the target degree for category $i$;
    $\kSW = \frac{1}{\NUMCAT} \cdot \sum_i \kSW[i]$ \vspace{2mm} \\

\multicolumn{2}{l}{{\bf The Amoeba Algorithm}} \\
$\localR$ & $\localR = \Theta((\CSW \kSW)^{1/\DIM})$ the \emph{local radius} \\
$\prunedR$ & $\prunedR = \Theta(\localR \NUMCAT^{2/\DIM})$ the \emph{pruning radius} \\
$\prunedE$ & the \emph{pruned set} (of edges) \\
$\amoebaE[i]$ & the edge set constructed by the \Amoeba algorithm for category $i$ \\
$\Mtwo(u,v)$  & the number of two-hop $u$-$v$ paths in \ESW \vspace{2mm}  \\

\multicolumn{2}{l}{{\bf (Recursive/Extended) Two-Ball Algorithm}} \\
$\Dnorm(u,v)$ & normalized social distance
    $\Dnorm(u,v) = \D(u,v)/ (\CSW\,\kSW)^{1/\DIM}$ \\
$\Dnorm[i](u,v)$ & normalized social distance in category $i$\\
$\NRBall[\Dnorm]{u}{\kappa}$
    & the set of the $\kappa$ closest nodes to $u$ according to metric $\Dnorm$ \\
$\CPD$      & the constant in the definition of \emph{perfectly uniform density} \\
$\DIMCONST$ & the constant in expected \#edges between any two balls
    (see the remark after Theorem~\ref{thm:rec2ball-singleCat}) \\
\NEdges{s}{t} & the number of edges in $\ESW$ between the two balls constructed by the algorithm  \vspace{2mm}  \\

\multicolumn{2}{l}{{\bf Constant target degree}} \\
$\Eloc$ & the local structure: edges deterministically present in $\ESW$ \\
$\Kloc$ & the maximal degree of $\Eloc$ \\
$\constDR[\cdot]$ & the threshold radius in the \ItAlg{\PN}{\PL} (see Equation~\eqref{eq:EDPtest})
\end{tabular}

\end{document}